\newtheorem{Theorem}{Theorem}
\newtheorem{Lemma}{Lemma}
\newtheorem{Proposition}{Proposition}
\newtheorem{Corollary}{Corollary}
\theoremstyle{definition}
\definecolor{fuchsia}{rgb}{1.0, 0.0, 1.0}
\begin{document}

\preprint{APS/123-QED}

\title{Optimally generating $\mathfrak{su}(2^N)$ using Pauli strings}

\author{Isaac D. Smith}
\email{isaac.smith@uibk.ac.at}
\affiliation{University of Innsbruck, Institute for Theoretical Physics, Technikerstr. 21A, Innsbruck A-6020, Austria}
\author{Maxime Cautrès}
\affiliation{University of Innsbruck, Institute for Theoretical Physics, Technikerstr. 21A, Innsbruck A-6020, Austria}
\affiliation{École Normale Supérieure de Lyon, F-69007 Lyon, France} 
\author{David T. Stephen}
\affiliation{Department of Physics and Center for Theory of Quantum Matter, University of Colorado Boulder, Boulder, Colorado 80309 USA}
\affiliation{Department of Physics and Institute for Quantum Information and Matter, California Institute of Technology, Pasadena, California 91125, USA}
\author{Hendrik Poulsen Nautrup}
\affiliation{University of Innsbruck, Institute for Theoretical Physics, Technikerstr. 21A, Innsbruck A-6020, Austria}

\date{\today}

\begin{abstract} Any quantum computation consists of a sequence of unitary evolutions described by a finite set of Hamiltonians. When this set is taken to consist of only products of Pauli operators, we show that the minimal such set generating $\mathfrak{su}(2^{N})$ contains $2N+1$ elements. We provide a number of examples of such generating sets and furthermore provide an algorithm for producing a sequence of rotations corresponding to any given Pauli rotation, which is shown to have optimal complexity. We also observe that certain sets generate $\mathfrak{su}(2^{N})$ at a faster rate than others, and we show how this rate can be optimized by tuning the fraction of anticommuting pairs of generators. Finally, we briefly comment on implications for measurement-based and trapped ion quantum computation as well as the construction of fault-tolerant gate sets.
\end{abstract}

\maketitle

Performing a quantum computation requires the control of a quantum system. The study of the controllability of quantum systems has a substantial history with developments across both mathematics and physics \cite{huang1983controllability,jurdjevic1972control,Kuranishi1951On,Ramakrishna,albertini2001notions,albertini2002lie,Schirmer_01,d2007introduction,Zeier2011}. The techniques typically applied in control theory are often drawn from algebra \cite{sussmann1972controllability}, with much focus on Lie algebra in the quantum context \cite{huang1983controllability,d2007introduction}. 

In quantum computing, control refers to the intentional evolution of a closed quantum system from a known initial state to a state representing the logical output of the computation \cite{ramakrishna1996relation,Lloyd_95}. As the evolution of a closed quantum system is described by an element of the (special) unitary group, the corresponding Lie algebra (and subalgebras thereof), whose elements generate unitary evolution, have received much interest. As they describe system dynamics, these Lie algebras are often called \textit{dynamical Lie algebras} (DLAs) \cite{d2007introduction,d2021dynamical,wiersema2023classification}.

DLAs, like any Lie algebra, are vector spaces equipped with an associated Lie bracket (e.g., matrix commutator). Typically, the DLA of a quantum system is \textit{generated} from a smaller set $\mathcal{A}$ of ``elementary operations'' using nested commutators and linear combinations. If this DLA describes all elements of the special unitary group, then $\mathcal{A}$ is a universal generating set. In quantum computing, $\mathcal{A}$ corresponds to the elementary gates, while the corresponding DLA describes the set of unitaries that can be implemented by gate composition. Significant past work has aimed to characterize such universal sets for various quantum systems \cite{Lloyd_95,Schirmer_01,schirmer2002identification,d2007introduction,Ramakrishna,Zeier2011}.

Here, we aim to understand how small a generating set $\mathcal{A}$ can be, while still maintaining universality. If no further restriction is made on $\mathcal{A}$, then the answer is known: the minimal generating set for $\mathfrak{su}(2^N)$ has only $2$ elements (see Theorem 6 in Ref.~\cite{Kuranishi1951On}). However, the elements of this set consist of large sums of Pauli operators and are not particularly practical from a quantum information theory perspective. A more natural generating set would consist of only tensor products of Pauli operators (Pauli strings). Generating sets such as these occur in computational schemes related to measurement-based quantum computation (MBQC) \cite{Raussendorf2003,Raussendorf2019,Devakul2018,Stephen2019subsystem,Daniel2020computational,Parity_MBQC,Stephen2024} and quantum cellular automata \cite{Raussendorf2005,nautrup2023}. Moreover, rotations of Pauli strings arise in simulation of fermionic systems on qubit devices~\cite{steudtner2018fermion,romero2018strategies,cowtan2020generic,van2020circuit} and are implemented in the native gate sets of e.g., trapped ion systems~\cite{debnath2016demonstration,nam2020ground}. The generating sets constructed from our results may also facilitate fault-tolerant quantum computing without the need for swap gates \cite{klaver2024swapless,bombin2016dimensional}. 

We demonstrate that for generating sets consisting solely of Pauli strings, the minimal number of operators required to generate $\mathfrak{su}(2^N)$ is $2N+1$. The proof has two components. The first is a general method for constructing universal generating sets on $N$ qubits from sets of operators on a smaller number of qubits, which may be of independent interest. In particular, this method can produce generating sets with $2N+1$ elements. The second establishes that \textit{no} set of $2N$ Pauli strings can generate all of $\mathfrak{su}(2^{N})$. Additionally, we provide an algorithm that, given a generating set as input, outputs a sequence of elements that produce a target Pauli string via nested commutation. This algorithm is demonstrably optimal as it produces sequences of length $\mathcal{O}(N)$, runs in polynomial time and may be of interest for circuit compilation in the contexts mentioned above (i.e. in MBQC and trapped ion quantum computing).
Finally, we analyze the rates at which different sets generate $\mathfrak{su}(2^{N})$, finding that the rate is closely related to the fraction of anticommuting pairs of generators out of the total number of pairs. We give an explicit value for the fraction corresponding to the optimal compilation rate.

In the remainder of the manuscript we provide a brief discourse on relevant aspects of quantum control theory, before presenting our results which include theorems, algorithms, and examples. To conclude, we elaborate on the implications of these results. Further details are given in the Supplemental Material, \cite{supp} (provided here as an appendix). 

\textit{Background.}---
Let $\mathcal{H}$ denote a finite-dimensional Hilbert space and let $\ket{\psi} \in \mathcal{H}$ be a state. The evolution of $\ket{\psi}$ is governed by the Schrödinger equation 
\begin{align} 
i\hbar \frac{d}{dt} \ket{\psi} = H\ket{\psi}, 
\end{align}
where $H$ is a Hermitian operator. Now suppose we have a set of such operators $\{H_{i}\}$, corresponding to the different ways we can control the quantum system at hand. A natural question to ask is then: Is there a sequence of elements of $\{H_{i}\}$, denoted $H_{l_{1}},...,H_{l_{m}}$, and a sequence of times $t_{l_{1}},...,t_{l_{m}}$ such that evolving $\ket{\psi_{0}}$, first by $H_{l_{1}}$ for time $t_{l_{1}}$ and each subsequent $H_{l_{j}}$ for time $t_{l_{j}}$, ultimately produces $\ket{\psi_{1}}$? Taking each $H_{i}$ to be time-independent, this question is the same as asking whether
\begin{align}
\ket{\psi_{1}} = e^{-iH_{l_{m}}t_{l_{m}}}\dots e^{-iH_{l_{1}}t_{l_{1}}}\ket{\psi_{0}} \label{eq:multiple_controls}
\end{align}
holds for some $H_{l_{1}},...,H_{l_{m}} \in \{H_{i}\}$ and $t_{l_{1}},...,t_{l_{m}} \in \mathbb{R}$ (note we are taking $\hbar = 1$).

A useful tool for answering such a question is the Baker-Campbell-Hausdorff (BCH) formula from Lie algebra theory (see e.g., Ref.~\cite{hall2013lie} for further details on Lie algebras). For a Lie algebra $\mathfrak{g}$ and associated adjoint maps $ad_{A} : \mathfrak{g} \rightarrow \mathfrak{g}$ defined for each $A \in \mathfrak{g}$ via $B \mapsto [A,B]$, the BCH formula can be written as: for $A,B \in \mathfrak{g}$,
\begin{align}
e^{A}e^{B} = e^{A + B + \sum_{s = 1}^{\infty} \frac{1}{s!}ad_{A}^{(s)}(B)}, 
\end{align}
with $ad_{A}^{(s)}(B)$ denoting $ad_{A}$ composed with itself $s$ times and applied to $B$. Taking $A = -iH_{l_{2}}t_{l_{2}}$ and $B = -iH_{l_{1}}t_{l_{1}}$ we see that we can rewrite the right-most terms of the product in (\ref{eq:multiple_controls}) as a real linear combination of $-iH_{l_{1}}$, $-iH_{l_{2}}$, their commutator, the commutator of $-iH_{l_{2}}$ with their commutator, and so on. By repeated use of the BCH formula, the entire product of (\ref{eq:multiple_controls}) can be reduced to a single term whose exponent is a linear combination of the $H_{l_{j}}$ and nested commutators thereof.

So, if our aim is to determine whether there exists a sequence such that (\ref{eq:multiple_controls}) holds for \textit{any} pair of $N$-qubit states $\ket{\psi_{0}}, \ket{\psi_{1}}$ then it suffices to show that the $iH_{l_{j}}$ and their nested commutators generate $\mathfrak{su}(2^{N})$~\cite{ramakrishna1996relation}. Taking $\mathcal{A} := \{i H_{j}\}_{j}$ to be a set of generators, the \textit{dynamical Lie algebra} generated by $\mathcal{A}$ is 
\begin{align}
\langle \mathcal{A} \rangle_{\mathrm{Lie}}:=\mathrm{span}_{\mathbb{R}} \left\{\mathcal{A} \bigcup_{r=1}^{\infty} \mathcal{A}_{ad^{(r)}} \right\}, 
\end{align}
where 
\begin{align}
\mathcal{A}_{ad^{(r)}} := \{ ad_{A_{i_{1}}}\cdots ad_{A_{i_{r}}}(A) | (A_{i_{1}},...,A_{i_{r}},A) \in \mathcal{A}^{r+1}\}. 
\end{align}
The question of controllability between any two states for a given set $\mathcal{A}$ is thus a question of whether $\langle \mathcal{A} \rangle_{\mathrm{Lie}} = \mathfrak{su}(2^{N})$.


Like any vector space, $\mathfrak{su}(2^{N})$ has a basis. Of particular relevance for this work is the basis derived from the set $\mathcal{P}_{N}$ of all Pauli strings on $N$ qubits. 
This set forms a basis (over $\mathbb{R}$) of the vector space of $2^N\times 2^N$ Hermitian matrices. Defining $\mathcal{P}_N^*:=\mathcal{P}_N-\{I^{\otimes N}\}$, we have that $i\mathcal{P}_N^*$ forms a basis of the $2^{N} \times 2^{N}$ traceless, skew-Hermitian matrices and hence $\mathfrak{su}(2^{N}) = \textrm{span}_{\mathbb{R}}(i\mathcal{P}_N^*)$. Defining 
\begin{align}
\langle \mathcal{A} \rangle_{[\cdot,\cdot]} := \mathcal{A} \bigcup_{r=1}^{\infty} \mathcal{A}_{ad^{(r)}}, 
\end{align}
it follows that a sufficient condition for $\langle \mathcal{A} \rangle_{\mathrm{Lie}} = \mathfrak{su}(2^{N})$ to hold for a given set $\mathcal{A}$, is that for each $A \in i\mathcal{P}_{N}^{*}$ there is some $\alpha \in \mathbb{R}^{*}$ such that $\alpha A \in \langle \mathcal{A} \rangle_{[\cdot,\cdot]}$. In such a case, we say that $\mathcal{A}$ is \textit{adjoint universal} for $i\mathcal{P}_{N}^{*}$.

Here, we consider the question: how small can $\mathcal{A}\subset i\mathcal{P}_N^*$ be such that
$\langle \mathcal{A} \rangle_{\mathrm{Lie}}=\mathfrak{su}(2^N)$? As noted in the introduction, if we do not enforce that $\mathcal{A}\subset i\mathcal{P}_N^*$ but allow $\mathcal{A}$ to be any set of traceless, skew-Hermitian matrices, then the answer to the analogous question is known to be $2$ \cite{Kuranishi1951On}. 

The restriction to considering subsets of $i\mathcal{P}_N^*$ comes with certain convenient features worth highlighting. First, due to the commutation relations of Pauli strings, the matrix commutator of elements of $i\mathcal{P}_{N}^{*}$ is again an element of $i\mathcal{P}_{N}^{*}$ up to some real scalar. Pursuant to the discussion above, when working with generating sets $\mathcal{A} \subset i\mathcal{P}_{N}^{*}$, $\langle \mathcal{A} \rangle_{[\cdot, \cdot]} = i\mathcal{P}_{N}^{*}$ is in fact also a necessary condition for $\langle \mathcal{A} \rangle_{\textrm{Lie}} = \mathfrak{su}(2^{N})$. This allows us to work at the level of $\langle \mathcal{A} \rangle_{[\cdot, \cdot]}$ rather than the full DLA $\langle \mathcal{A} \rangle_{\textrm{Lie}}$.

Second, as $i\mathcal{P}_{N}^{*}$ is related to the Pauli group on $N$ qubits, there is machinery developed for the latter that can be leveraged to make statements about the former. Specifically, there is a canonical mapping between the $N$-qubit Pauli group and the additive group $\mathbb{F}_{2}^{2N}$, which we make use of to prove certain results below. This mapping is commonly used in stabiliser quantum theory (see e.g., \cite{Nest_04,Aaronson_04}); the details are provided in \cite{supp}.

\textit{Minimal generating sets of $\mathfrak{su}(2^N)$.}---
We now demonstrate that for all $\mathcal{A}\subset i\mathcal{P}_N^*$ such that $\langle \mathcal{A} \rangle_{\mathrm{Lie}}=\mathfrak{su}(2^N)$, $|\mathcal{A}| \geq 2N+1$, which is tight. Moreover, we provide a construction for minimal such sets and an algorithm to optimally generate any other Pauli string from them. To this end, we require a notion similar to adjoint universality discussed earlier: a set $\mathcal{B} \subset i\mathcal{P}_{N}^{*}$ is \textit{product universal} if the set 
\begin{align}
\langle \mathcal{B} \rangle_{\times} := \bigcup_{r = 1}^{\infty} \{B_{i_{1}}\cdot B_{i_{2}}\cdot ... \cdot B_{i_{r}} | (B_{i_{1}},...,B_{i_{r}}) \in \mathcal{B}^{r}\}
\end{align}
is such that for each $A \in i\mathcal{P}_{N}^{*}$ there is an $\alpha \in \{\pm 1, \pm i\}$ such that $\alpha A \in \langle \mathcal{B} \rangle_{\times}$ \footnote{Note the difference in the allowed set of scalars compared to that for adjoint universality}. Our first result is:
\begin{Theorem}\label{thm:construction}
    Fix $k \geq 2$ and consider $\mathcal{A} \subset i\mathcal{P}_k^*$  and $\mathcal{B}\subset i\mathcal{P}^*_{N-k}$ such that $\langle\mathcal{A}\rangle_{\mathrm{Lie}}=\mathfrak{su}(2^k)$ and $\mathcal{B}$ is product universal for $i\mathcal{P}^*_{N-k}$. Let $\mathbb{B} := \{(U_{B}, B)\}_{B \in \mathcal{B}}$ be a set of pairs  where each $U_{B} \in \mathcal{P}_{k}^{*}$. Defining $\mathcal{A}'=\{A\otimes I^{\otimes N-k}|A\in\mathcal{A}\}$ and $\mathcal{B}'=\{U_{B} \otimes B | (U_{B}, B) \in \mathbb{B}\}$, we have that $\langle \mathcal{A}'\cup\mathcal{B}'\rangle_{\mathrm{Lie}}=\mathfrak{su}(2^N)$.
\end{Theorem}
The proof of this theorem is given in \cite{supp} and relies on the correctness of the algorithm \textsc{PauliCompiler} (\Cref{alg:paulicompiler}). Simply put, the above theorem allows universal generating sets on $N$ qubits to be constructed from a universal generating set on a constant-size subset of qubits and a product universal set on the remaining qubits.
To arrive at our claim that there exist universal generating subsets of $i\mathcal{P}_{N}^{*}$ consisting of $2N+1$ elements, we consider the following:
\begin{Proposition} \label{prop:su_4} For $\mathcal{A} = \{iX_{1}, iZ_{1}, iX_{2}, iZ_{2}, iZ_{1} \otimes Z_{2}\}$, we have that $\langle \mathcal{A} \rangle_{\textrm{Lie}} = \mathfrak{su}(4)$.
\end{Proposition}
The proof can be established by direct calculation that $\langle \mathcal{A} \rangle_{[\cdot, \cdot]} = i\mathcal{P}_{2}^{*}$ (c.f. \cite{supp}). Noting that there exist product universal sets for $i\mathcal{P}_{m}^{*}$ consisting of $2m$ elements, we thus have the following:
\begin{Corollary} \label{cor:2N+1} For $\mathcal{A}$ as in \Cref{prop:su_4}, and $\mathcal{A}'$ and $\mathcal{B}'$ as in \Cref{thm:construction} with $|\mathcal{B}'|=(2N-2)$, $|\mathcal{A}' \cup \mathcal{B}'| = 2N+1$.
\end{Corollary}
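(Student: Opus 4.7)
The plan is to instantiate \Cref{thm:construction} with $k=2$, using the five-element set $\mathcal{A}$ from \Cref{prop:su_4} as the generating piece on the first two qubits, and supplying a suitable product universal set $\mathcal{B}$ of size $2(N-2)$ on the remaining $N-2$ qubits; the corollary then reduces to a counting argument.

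First, since $\mathcal{A} = \{iX_1, iZ_1, iX_2, iZ_2, iZ_1 \otimes Z_2\}$ has $5$ distinct elements and the map $A \mapsto A \otimes I^{\otimes N-2}$ is injective on $i\mathcal{P}_2^*$, we immediately get $|\mathcal{A}'| = 5$.

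The only substantive step is exhibiting a product universal set $\mathcal{B} \subset i\mathcal{P}_{N-2}^*$ with $|\mathcal{B}| = 2(N-2)$. I take
\begin{equation*}
\mathcal{B} = \{iX_j,\, iZ_j \,:\, 1 \le j \le N-2\},
\end{equation*}
where $X_j$ and $Z_j$ denote the single-qubit Paulis on qubit $j$ of the last $N-2$ qubits. To verify product universality, any non-identity Pauli string $P = P_1 \otimes \cdots \otimes P_{N-2}$ is built qubit by qubit via the substitutions $X_j = X_j$, $Z_j = Z_j$, $Y_j = -iX_jZ_j$ (and no factor when $P_j = I$). The accumulated prefactor lies in $\{\pm 1, \pm i\}$, coming from the $i$ attached to each generator of $\mathcal{B}$ together with a $-i$ for each $Y$ factor. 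Hence $\alpha P \in \langle \mathcal{B} \rangle_{\times}$ for some such $\alpha$, as required.

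Finally, applying \Cref{thm:construction} with any choice of pairs $(U_B, B)$ having $U_B \in \mathcal{P}_2^*$ (for concreteness, $U_B = X \otimes I$ for every $B \in \mathcal{B}$) yields $\mathcal{B}' = \{U_B \otimes B\}$ with $|\mathcal{B}'| = |\mathcal{B}| = 2(N-2)$ and $\langle \mathcal{A}' \cup \mathcal{B}' \rangle_{\mathrm{Lie}} = \mathfrak{su}(2^N)$. Because the elements of $\mathcal{A}'$ act as the identity on the last $N-2$ qubits while those of $\mathcal{B}'$ do not, the two sets are disjoint, and
\begin{equation*}
|\mathcal{A}' \cup \mathcal{B}'| = |\mathcal{A}'| + |\mathcal{B}'| = 5 + 2(N-2) = 2N+1.
\end{equation*}
The only piece of real content is the product-universality verification, which is the short decomposition above; everything else is immediate from \Cref{prop:su_4}, \Cref{thm:construction}, and the fact that the two embeddings $\mathcal{A}' , \mathcal{B}'$ are supported on complementary sides of the bipartition.
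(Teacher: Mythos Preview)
Your proof is correct and follows exactly the approach the paper intends: the corollary is stated without proof, the paper merely remarking beforehand that product universal sets for $i\mathcal{P}_m^*$ of size $2m$ exist, and you have made this explicit via the single-qubit $\{iX_j,iZ_j\}$ set together with the disjointness check. Note that the stated hypothesis $|\mathcal{B}'|=2N-2$ is evidently a typo for $2(N-2)=2N-4$, which is what your construction (and the paper's own Example~1) actually yields and what makes the count $5+(2N-4)=2N+1$ work.
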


Having shown that we can generate $\mathfrak{su}(2^{N})$ from a set $\mathcal{A} \subset i\mathcal{P}_{N}^{*}$ such that $|\mathcal{A}| = 2N+1$, we now show that no strictly smaller set suffices. In fact, we need only check that no generating set $\mathcal{A} \subset i \mathcal{P}_{N}^{*}$ of $2N$ elements is able to produce the entire Lie algebra, since if $\mathcal{A}$ is insufficient, any subset of $\mathcal{A}$ is as well.
The following theorem shows that it is indeed impossible to generate $\mathfrak{su}(2^{N})$ using $2N$ elements of $i\mathcal{P}_{N}^{*}$: 
\begin{Theorem}\label{thm:2n} Let $\mathcal{A}\subset i\mathcal{P}_{N}^{*}$ consist of $2N$ elements. Then, $\langle \mathcal{A}\rangle_{\mathrm{Lie}}\neq \mathfrak{su}(2^N)$.
\end{Theorem}
The proof is given in \cite{supp} and consists in showing that there exists an element $A$ of $i\mathcal{P}_{N}^{*}$ such that $\alpha A \notin \langle \mathcal{A}\rangle_{[\cdot, \cdot]}$ for any non-zero $\alpha \in \mathbb{R}$. The proof proceeds by considering two distinct cases for $\mathcal{A}$, namely where all elements mutually anticommute and where there exists at least one pair that commute. In both cases, an element of $i\mathcal{P}_{*}^{N}$ not contained in $\langle \mathcal{A} \rangle_{[\cdot, \cdot]}$ is shown to exist. This proof makes use of the aforementioned connection between the Pauli group on $N$ qubits and $\mathbb{F}_{2}^{2N}$ (cf. \cite{supp}).

We now turn to several examples related to the results presented so far. 

\textbf{Example 1:} Consider the following sets: 
\begin{gather*}
    \mathcal{A}=\{iX_1,iZ_1,iX_2,iZ_2, iZ_1\otimes Z_2\},\nonumber\\
    \mathcal{B}'_1=\{iX_2\otimes\prod_{2<j<i}Y_j\otimes Z_i, iZ_2\otimes\prod_{2<j<i}Y_j\otimes X_i\}_{i=3,...,N}.
\end{gather*} 
These sets fulfill the conditions of Theorem~\ref{thm:construction}, and since $|\mathcal{B}'_{1}| =  2(N-2)$, their union gives an adjoint universal set of size $2N+1$. This set plays an interesting role in the next example.  

\textbf{Example 2:} Take $\mathcal{A}$ to be as in the previous example, but instead of $\mathcal{B}'_{1}$, consider
\begin{align}
\mathcal{B}'_2=\{iX_i\otimes Z_{i+1}, iZ_i\otimes X_{i+1}\}_{i=2,...,N-1}. 
\end{align}
The union $\mathcal{A} \cup \mathcal{B}'_{2}$ is particularly interesting since it contains only nearest neighbor interactions and has $2N+1$ elements. However, the conditions of Theorem~\ref{thm:construction} are not fulfilled in this case as not all elements of $\mathcal{B}'_2$ have support on qubits 1 and 2. Fortunately, we can show $\mathcal{B}'_1\subset \langle \mathcal{B}_2'\rangle_{[\cdot,\cdot]}$ (up to some factors), as follows:
\begin{align}\label{eq:example2_mapping}
    iX_2\otimes\prod_{2<j<i}Y_j\otimes Z_i&\propto ad_{X_2Z_3}\cdots ad_{X_{i-2}Z_{i-1}}(X_{i-1}Z_i) \\
    iZ_2\otimes\prod_{2<j<i}Y_j\otimes X_i&\propto ad_{Z_2X_3}\cdots ad_{Z_{i-2}X_{i-1}}(Z_{i-1}X_i)
\end{align}
Since  $\mathcal{A}'\cup \mathcal{B}_1'$ is adjoint universal, so is $\mathcal{A}'\cup \mathcal{B}_2'$. This demonstrates that,  Theorem~\ref{thm:construction} not only provides us with a method for constructing universal generating set, it also allows us to prove universality for other \textit{a priori} distinct generating sets. 

\textbf{Example 3:} Our third example of a universal set of minimal size $2N+1$ has previously emerged as the native gateset of a particular scheme of MBQC \cite{Stephen2024}. 
Here we summarize the distinct properties of this set, with a full description given in \cite{supp}. First, the elements of the set essentially all have the form $T^{\ell\dagger} Z_1 T^{\ell}$ where $\ell\in \mathbb{N}$ and $T$ is a finite-depth circuit consisting of nearest-neighbour Clifford gates. Therefore, rather than requiring native implementation of the rotations generated by many different Pauli strings, this gateset only requires implementation of the rotation $e^{i\theta Z_1}$ and the Clifford circuit $T$. 
Interestingly, we will see that this set generates all Pauli strings at faster rate than the other examples.

\begin{figure}[t!]
\begin{algorithm}[H]
\caption{\textsc{PauliCompiler}$(P)$}
\begin{algorithmic}
\Require Target: $P=iV\otimes W\in i\mathcal{P}^*_{N}$ for $V\in\mathcal{P}_k,W\in\mathcal{P}_{N-k}$.\\ \hspace{1.2cm} $\mathcal{A},\mathcal{A}',\mathcal{B},\mathcal{B}'$ as in Theorem~\ref{thm:construction}. \vspace{0.2cm}
\If{$W=I$}
\State \begin{varwidth}[t]{\linewidth}
Choose $A_1,...,A_s\in \mathcal{A}$ s.t. $iV\propto ad_{A_1}\cdots ad_{A_{s-1}}(A_s)$\par \hspace{1.8cm}$\wedge\:s=\mathcal{O}(1)$
\end{varwidth}\\
\Return $(A_1\otimes I,...,A_s\otimes I)$
\ElsIf{$V=I$}
\State Choose $W_1, W_2\in i\mathcal{P}_{N-k}$ s.t. $iW\propto [W_1,W_2]$
\State \begin{varwidth}[t]{\linewidth}
$\mathcal{G}'=(G_1',...,G_{\mathcal{G}'}')\gets\textsc{SubsystemCompiler}(W_1)$\par \hspace{0.7cm} s.t. $V_1'\otimes W_1\propto ad_{G'_{|\mathcal{G}'|}}\cdots ad_{G_2 '}(G_1')$
\end{varwidth}
\State \begin{varwidth}[t]{\linewidth}
$\mathcal{G}''=(G_1'',...,G_{\mathcal{G}''}'')\gets\textsc{SubsystemCompiler}(W_2)$\par \hspace{0.7cm} s.t. $V_2'\otimes W_2\propto ad_{G''_{|\mathcal{G}''|}}\cdots ad_{G_2''}(G_1'')$
\end{varwidth}
\State \begin{varwidth}[t]{\linewidth}
Choose $A_1,...,A_s\in \mathcal{A}$ s.t. $iV_1'\propto ad_{A_1}\cdots ad_{A_{s}}(iV_2')$\par \hspace{2.05cm}$\wedge\:s=\mathcal{O}(1)$
\end{varwidth}
\State $\mathcal{G}=(G_1,...,G_{|\mathcal{G}|})\equiv\mathcal{G}'\oplus \mathcal{G}''\oplus (A_1\otimes I,...,A_s\otimes I)$
\State Choose $\sigma\in S_{|\mathcal{G}|}$ s.t. $ad_{G_{\sigma(1)}}\cdots ad_{G_{\sigma(|\mathcal{G}|-1)}}(G_{\sigma(|\mathcal{G}|)})\neq 0$\\
\Return $(G_{\sigma(1)},...,G_{\sigma(|\mathcal{G}|)})$
\Else
\State \begin{varwidth}[t]{\linewidth}
$\mathcal{G}'=(G_1',...,G_{\mathcal{G}'}')\gets\textsc{SubsystemCompiler}(iW)$\par \hspace{0.7cm} s.t. $iV'\otimes W\propto ad_{G'_{|\mathcal{G}'|}}\cdots ad_{G_2 '}(G_1')$
\end{varwidth}
\State \begin{varwidth}[t]{\linewidth}
Choose $A_1,...,A_s\in \mathcal{A}$ s.t. $iV\propto ad_{A_1}\cdots ad_{A_{s}}(iV')$\par \hspace{2.15cm}$\wedge\:s=\mathcal{O}(1)$
\end{varwidth}\\
\Return $({A_1\otimes I,...,A_s\otimes I,G'_{|\mathcal{G}'|},...,G'_1})$
\EndIf
\end{algorithmic}
\label{alg:paulicompiler}
\end{algorithm}
\end{figure}

\textit{Optimal compiler for \Cref{thm:construction}.}---Given an adjoint universal set for $i\mathcal{P}_{N}^{*}$, one can ask \textit{how} a specific Pauli string can be obtained via a sequence of adjoint maps from elements of the set. Here, we turn to the question of compilation of Pauli strings for generating sets of the form considered in \Cref{thm:construction}. First, we need the following result, proven in \cite{supp}:
\begin{Proposition}\label{prop:Vmap}
    Let $V,V'\in i\mathcal{P}^*_k$ and $\mathcal{A}$ be an adjoint universal set for $i\mathcal{P}^*_k$ for any $k\geq 1$. Then, there exist $A_1,...,A_r\in\mathcal{A}$ and $\alpha\in\mathbb{R}^*$ such that $V=\alpha\: ad_{A_1}\cdots ad_{A_{r}}(V')$.
\end{Proposition}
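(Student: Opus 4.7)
The strategy reduces the mapping $V' \to V$ to two instances of adjoint universality bridged by a single-orbit claim on $i\mathcal{P}_k^*$. Two Pauli-string-specific facts underlie the reduction: for $P, Q \in i\mathcal{P}_k^*$ with $[P, Q] \neq 0$, (i) $ad_P(Q)$ is a nonzero real scalar multiple of the Pauli product $PQ \in i\mathcal{P}_k^*$, and (ii) consequently $ad_P^2(Q) \in \mathbb{R}^* \cdot Q$.

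By adjoint universality of $\mathcal{A}$, write $V' = \beta \cdot ad_{B_1} ad_{B_2} \cdots ad_{B_{t-1}}(B_t)$ for some $B_i \in \mathcal{A}$ and $\beta \in \mathbb{R}^*$. Property (ii) then permits an ``unwinding'': $ad_{B_1}(V') \propto ad_{B_2} \cdots ad_{B_{t-1}}(B_t)$, and iterating peels off one layer at a time to produce $ad_{B_{t-1}} \circ \cdots \circ ad_{B_1}(V') \propto B_t$; nested ads thus map $V'$ to the single generator $B_t$. Symmetrically, adjoint universality gives $V \propto ad_{C_1} \cdots ad_{C_{s-1}}(C_s)$ for some $C_i \in \mathcal{A}$, a sequence that maps $C_s$ to $V$. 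When $C_s = B_t$, concatenating the two sequences immediately yields the desired expression for $V$ in terms of $V'$.

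When $C_s \neq B_t$, a bridging sequence of ads from $\mathcal{A}$ mapping $B_t$ to a scalar multiple of $C_s$ is required. This reduces the proposition to showing that the nested-ad action of $\mathcal{A}$ on $i\mathcal{P}_k^*$ is transitive (a single orbit). Anticommuting generators $B, C \in \mathcal{A}$ lie in a common orbit via the identity $ad_B(C) = -ad_C(B)$, so it suffices that the ``anticommutation graph'' on $\mathcal{A}$ (with edges between anticommuting pairs) is connected. Adjoint universality combined with (i) implies product universality, so $\mathcal{A}$ spans $\mathbb{F}_2^{2k}$ under the symplectic identification of \Cref{app:symplectic_explanation} and has trivial centralizer. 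A disconnection $\mathcal{A} = \mathcal{A}_1 \sqcup \cdots \sqcup \mathcal{A}_m$ into pairwise-commuting components would force $\mathrm{span}(\mathcal{A}_i) \subseteq \mathrm{span}(\mathcal{A}_j)^\perp$ for $i \neq j$; combined with spanning, this yields trivial pairwise intersections of spans, and for $m = 2$ the cardinality inequality $2^{d_1} + 2^{d_2} - 1 < 2^{2k}$ (with $d_i = \dim \mathrm{span}(\mathcal{A}_i) \geq 1$ and $d_1 + d_2 = 2k$) is a contradiction; analogous symplectic-counting arguments preclude $m \geq 3$. The principal obstacle is this single-orbit claim, and in particular ruling out disconnections of the anticommutation graph under adjoint universality; the remainder of the proof is a mechanical concatenation of the sequences produced by the unwinding and adjoint-universality steps.
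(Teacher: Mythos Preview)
Your argument is correct and takes a genuinely different route from the paper's. The paper first proves (Lemma~\ref{lemma:map}) by induction on $k$ that any two elements of $i\mathcal{P}^*_k$ are connected by a sequence of $ad_{P_i}$ with $P_i$ ranging over \emph{all} of $i\mathcal{P}^*_k$; it then replaces each $P_i$ by its adjoint-universal expression in $\mathcal{A}$ and invokes a separate reordering lemma (Lemma~\ref{lemma:reordering}) to flatten the resulting commutators-of-commutators into a single nested sequence. Your approach instead exploits the identity $ad_P^2(Q)\propto Q$ to ``unwind'' the adjoint-universal expression for $V'$ down to a single generator $B_t\in\mathcal{A}$, does the same for $V$ to reach $C_s\in\mathcal{A}$, and then bridges $B_t$ to $C_s$ entirely within $\mathcal{A}$ via connectivity of the anticommutation graph. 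The connectivity argument you sketch is sound: if the graph split as $\mathcal{A}_1\sqcup\mathcal{A}_2\sqcup\cdots$, grouping all but $\mathcal{A}_1$ into a single block reduces to the $m=2$ case, where symplectic orthogonality plus spanning forces $\dim V_1+\dim V_2=2k$ with both proper, and the inequality $2^{d_1}+2^{d_2}-1<2^{2k}$ contradicts adjoint universality. Your route is more self-contained and avoids the reordering machinery entirely; the paper's route is more modular, since Lemma~\ref{lemma:reordering} is reused independently in the \textsc{PauliCompiler} algorithm and so is not extra work in context. The unwinding observation is a clean trick worth keeping.
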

This proposition is a statement about the existence of a sequence to produce an element $V$ from a fixed $V'$. In what follows, we will be interested in the length of such sequences when only the target element is specified, i.e. when $V'$ is allowed to vary. In particular, we consider sequences from generating sets of the form $\mathcal{A}' \cup \mathcal{B}'$ with $\mathcal{A}'$ and $\mathcal{B}'$ as in \Cref{thm:construction}. 

Naively, a breadth-first search will give sequences of optimal length for any target $V$. However this process will take exponentially long in the number of qubits in the worst case. In contrast, \textsc{PauliCompiler} (\Cref{alg:paulicompiler}) takes quadratic time in the number of qubits for any desired Pauli string (as shown in \cite{supp}):
\begin{Theorem}\label{thm:optimal}
     Consider a set of operators $\mathcal{A}'\cup \mathcal{B}'$ as in Theorem~\ref{thm:construction} such that $|\mathcal{A}'\cup \mathcal{B}'|=2N+1$. Let $P\in i\mathcal{P}^*_N$. Then, \textsc{PauliCompiler}($P$) returns a sequence of operators $G_1,...,G_L\in \mathcal{A}'\cup \mathcal{B}'$ such that $P=\alpha\: ad_{G_1}\cdots ad_{G_{L-1}}(G_L)$ for some $\alpha\in\mathbb{R}^*$ and $L=\mathcal{O}(N)$.
\end{Theorem}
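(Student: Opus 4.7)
The plan is to establish correctness and the length bound $L = \mathcal{O}(N)$ by case analysis on the three branches of \textsc{PauliCompiler}, distinguished by whether $V$ or $W$ equals the identity. Throughout, I would rely on two structural facts. First, the commutator of two Pauli strings is nonzero exactly when they anticommute, in which case it equals twice their product, so adjoint actions send $i\mathcal{P}_N^*$ to $i\mathcal{P}_N^*$ up to a real scalar and the algorithm's output is well-defined. Second, for any $A \in i\mathcal{P}_k^*$ and $V \otimes W \in i\mathcal{P}_N^*$, one has $[A \otimes I, V \otimes W] = [A, V] \otimes W$, so adjoints by elements of $\mathcal{A}'$ modify only the first $k$ tensor factors of the current Pauli string, leaving the remaining $N-k$ factors untouched. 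This separation of the two registers is what makes the branch-by-branch analysis possible.

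For correctness I would treat each branch separately. In the $W = I$ branch the target $iV \otimes I$ lies entirely in $i\mathcal{P}_k^* \otimes \{I\}$, so \Cref{prop:Vmap} applied to $\mathcal{A}$ yields $A_1,\dots,A_s \in \mathcal{A}$ producing $iV$, and tensoring with $I$ gives $iV \otimes I$. In the $V, W \neq I$ branch, the call to \textsc{SubsystemCompiler}$(iW)$ returns a sequence producing some $iV' \otimes W$ with $V' \in \mathcal{P}_k$; a constant-length sequence of $\mathcal{A}'$-adjoints then converts $iV'$ to $iV$ on the first $k$ tensor factors while leaving $W$ intact. In the $V = I$ branch one must first choose $W_1, W_2$ with $iW \propto [W_1, W_2]$ (existing because $i\mathcal{P}_{N-k}^*$ is itself adjoint universal), then produce $iV_1' \otimes W_1$ and $iV_2' \otimes W_2$ via the subsystem compiler, and finally choose a constant-length $\mathcal{A}$-sequence relating $iV_1'$ and $iV_2'$ so that when the three blocks are composed under the permutation $\sigma$, the commutator on the first $k$ qubits collapses to a scalar while the remaining $N-k$ factors combine to $[W_1, W_2] \propto iW$.

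For the length bound, since $k$ is a fixed constant (e.g.\ $k = 2$ in the minimal setting of \Cref{cor:2N+1}), every invocation of \Cref{prop:Vmap} inside $\mathfrak{su}(2^k)$ contributes only $\mathcal{O}(1)$ terms. The dominant cost comes from \textsc{SubsystemCompiler}, for which I would argue that the output has length $\mathcal{O}(N-k)$, essentially because the product universality of $\mathcal{B}$ allows any element of $i\mathcal{P}_{N-k}^*$ to be assembled qubit-by-qubit from a sequence whose length is linear in the number of qubits, as illustrated by the explicit chains in Example~1. Summing over the branches then gives $L = \mathcal{O}(N-k) + \mathcal{O}(1) = \mathcal{O}(N)$. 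The main obstacle is the $V = I$ branch: one must carefully verify the existence of the permutation $\sigma$ such that the composite adjoint action is nonzero and yields exactly $iI \otimes W$ up to a real scalar, and track the cancellation of the $V_1', V_2'$ parts on the first $k$ factors through nested commutators and the accompanying factors of $i$. A secondary, largely bookkeeping task is substantiating the $\mathcal{O}(N-k)$ length bound for \textsc{SubsystemCompiler}, which I would expect to follow inductively from the structure of $\mathcal{B}$, since each additional qubit of support in the target $W$ can be produced by at most a constant number of $\mathcal{B}$-adjoints.
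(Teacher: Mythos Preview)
Your approach is essentially the same as the paper's: correctness by case analysis over the three branches, with the length bound coming from the fact that $k$ is constant (so the $\mathcal{A}$-side contributes $\mathcal{O}(1)$) and \textsc{SubsystemCompiler} contributes $\mathcal{O}(N-k)$. The two items you flag as remaining obstacles are precisely what the paper isolates as separate lemmas: the existence of the permutation $\sigma$ is handled by a short ``reordering'' lemma showing that any nonzero commutator of commutators of Pauli strings can be rewritten as a nested commutator of the same elements, and the $\mathcal{O}(N-k)$ bound for \textsc{SubsystemCompiler} comes not from a qubit-by-qubit induction but from the observation that, since $|\mathcal{B}|=2(N-k)$ and $\mathcal{B}$ corresponds to a basis of $\mathbb{F}_2^{2(N-k)}$, any $W$ is a product of at most $N-k$ elements of $\mathcal{B}$, and the algorithm inserts at most a bounded number of $\mathcal{A}'$-``helpers'' per factor to keep the nested commutator nonzero (yielding a worst case of $4(N-k)$, hence $L\le 8(N-k)+s$ in the $V=I$ branch).
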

The proof is given in \cite{supp}. That $\mathcal{O}(N)$ is indeed optimal for producing an arbitrary Pauli string via nested commutators follows from the fact that there exist Pauli strings that can only be produced by no less than a product of $N$ elements from a product universal set.

A central component of \textsc{PauliCompiler} is the algorithm \textsc{SubsystemCompiler}. This algorithm produces a sequence of elements that generates the desired operator $W\in\mathcal{P}_{N-k}$ on one subsystem, but with a potentially undesired operator $V'\in\mathcal{P}^*_k$ on the other subsystem (cf. \cite{supp} for details). In effect, this does not pose a problem due to the assumptions on $\mathcal{A}$ being adjoint universal and \Cref{prop:Vmap}.

\begin{figure}
    \centering
    \includegraphics[width=\linewidth]{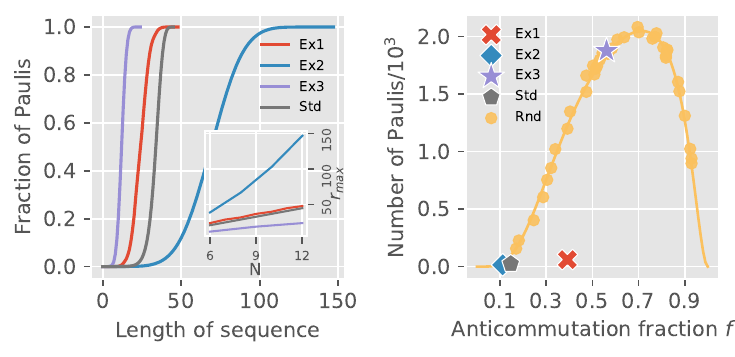}
    \caption{Left: Comparison of the rates at which different sets of Pauli strings generate $\mathfrak{su}(2^N)$ for $N=12$. Ex 1, 2, 3 refer to the three examples of the main text, while Std refers to the standard gate set. Inset: The maximum length $r_{\text{max}}$ of commutator sequences needed to generate all Pauli strings as a function of number of qubits. Right: Number of Paulis generated after five rounds of nested commutators on $N=9$ qubits versus the fraction of pairs of generators that anticommute. Each round point describes a random adjoint universal set of Paulis with minimal length $2N+1$. The solid line is a theoretical prediction described in \cite{supp}.}
    \label{fig:rates}
\end{figure}

\textit{Compilation efficiency of generating sets.}---
While \textsc{PauliCompiler} demonstrates the optimal $\mathcal{O}(N)$ scaling of the maximum value of $L$ in Theorem \ref{thm:optimal}, it says nothing regarding the possible pre-factors, i.e. compilation efficiency, for different generating sets.
Figure ~\ref{fig:rates} depicts the Pauli string generation rates for the three examples given above, alongside the ``standard gate set'' consisting of single-qubit gates and nearest-neighbor two-qubit gates on a one-dimensional array which corresponds to taking 
\begin{align}
\mathcal{A} = \{ X_i,Z_i\}_{i=1,\dots, N}\cup \{Z_i\otimes Z_{i+1}\}_{i=1,\dots,N-1}. 
\end{align}
This set is adjoint universal, but consists of $3N-1$ generators and therefore is not minimal. 

We make three observations regarding the results of Fig.~\ref{fig:rates}. First, each case (except Example 2) generates $\mathfrak{su}(2^N)$ after $\mathcal{O}(N)$ steps as expected, but the pre-factors do differ. In particular, Example 3 requires $\sim 2N$ commutators to generate all Paulis while the standard gate set requires $\sim 4N$, despite the fact that the former set has fewer generators to choose from. Second, the maximum length $L$ in Example 2 scales as $\mathcal{O}(N^2)$. This is consistent with the observation that Example 2 must first be mapped onto Example 1 using $\mathcal{O}(N)$ generators in Eq.~\ref{eq:example2_mapping} to use \textsc{PauliCompiler}, suggesting that this is an optimal recipe for compilation.

Finally, we find that the Pauli string generation rate is closely linked to the fraction $f$ of anticommuting pairs of generators. In \cite{supp}, we calculate the generation rate as a function of this fraction and find that the optimal fraction $f\approx 0.706$ can be determined analytically as the value that maximizes a particular $q$-Pochhammer symbol. This calculation is in strong agreement with the generation rate of random adjoint universal sets of Pauli strings (see Fig.~\ref{fig:rates}) and it explains why Example 3, which has $f\approx 0.5$, generates Pauli strings faster than Example 1 and the standard gate set, which each have $f\rightarrow 0$ for large $N$~\footnote{Example 1 does not follow the curve since it has a non-uniform distribution of anticommutation; some generators anticommute with half the others while some only anticommute with a constant number of others.}.
It is an interesting open question to explicitly construct sets of Pauli strings that generate all of $\mathfrak{su}(2^N)$ at the optimal rate.

\textit{Discussion.}---
To conclude, let us discuss several implications of the above results. First note that, due to the commutation properties of Pauli strings, the BCH formula takes on a neat form: for $A,B \in i\mathcal{P}_{N}^{*}$ such that $[A,B] \neq 0$, we have (see \cite{supp})
\begin{align}
e^{\frac{\pi}{4}A}e^{t B}e^{-\frac{\pi}{4}A} = e^{\frac{t}{2} [A,B] }. \label{eq:Pauli_BCH_trick}
\end{align}
By iteratively applying this formula, we observe that, for a universal generating set $\mathcal{A} \subset i\mathcal{P}_{N}^{*}$ and any $P \in \mathcal{P}_{N}^{*}$, there exists a sequence $G_{1}, ..., G_{L} \in \mathcal{A}$ such that
\begin{align}
e^{i\alpha P} = e^{\frac{\pi}{4}G_1}\cdots e^{\frac{\pi}{4}G_{L-1}} e^{t G_L}e^{-\frac{\pi}{4}G_{L-1}}\cdots e^{-\frac{\pi}{4}G_{1}},\label{eq:ad_gate}
\end{align}
where $t = 2^{L-1}\alpha$. From the results above, we know how to produce the sequence $G_{1}, ..., G_{L}$ and furthermore that $L = \mathcal{O}(N)$. 

This observation is likely of interest in areas related to MBQC and fault-tolerant quantum computing for the following reasons. For the former, it is known that any Clifford operation, such as the $e^{\pm i \frac{\pi}{4}G_{j}}$, can be performed in a single round of measurements. Accordingly, implementations of Pauli strings via Eq.~(\ref{eq:ad_gate}) for the optimal sequences produced by the \textsc{PauliCompiler} may lead to optimal resource states in MBQC (where long Pauli strings are common - see e.g., \cite{Raussendorf2019,nautrup2023,Stephen2024,baeumer2024measurement}).

For the latter, our results may lead to the design of new (subsystem) quantum error correcting codes~\cite{poulin2005stabilizer}. For example, by taking $e^{tG_{L}}$ to be a fixed non-Clifford gate that can be implemented transversally~\cite{paetznick2013universal}, Eq.~(\ref{eq:ad_gate}) and \Cref{prop:Vmap} allow us to conclude that {\em any} non-Clifford Pauli rotation can be implemented using only subsequent Clifford operations, which can themselves be implemented transversally in e.g., the color code \cite{bombin2006topological,beverland2021cost}. Then, our observations on generation rate may aid the design of an optimal generating set of Cliffords while \textsc{PauliCompiler} allows us to optimize the number of required Clifford gates per non-Clifford gate.
From this perspective, the set from Example 3 is particularly interesting due to it's structure (requiring a single non-Clifford gate only on the first qubit, $G_L\equiv Z_1$) and it's efficiency (recall Fig.~\ref{fig:rates}). Notably, this method foregoes the explicit use of SWAP-gates, which can improve overall efficiency~\cite{klaver2024swapless}, and may be directly applied to existing architectures such as stacks of color codes \cite{bombin2016dimensional}

Rotations of Pauli strings also arise in a number of other areas of quantum computation, to which our results may also be applicable. For example, various methods for simulating fermionic systems on a quantum computer \cite{steudtner2018fermion,romero2018strategies,cowtan2020generic,van2020circuit} or more generally performing the time evolution of a given Hamiltonian, often result in circuits consisting of many Pauli strings after application of the Lie-Trotter formula. Similarly, in the context of quantum optimisation, problem constraints are usually mapped to Pauli strings of weight two or higher \cite{lucas2014ising,farhi2014quantumapproximateoptimizationalgorithm}, again leading to circuits containing Pauli string rotations. Finally, in trapped ions systems it is common to have a gate set comprising single-qubit gates, two-qubit $X\otimes X$-rotations \cite{debnath2016demonstration}, and possibly $N$-body Pauli rotations \cite{Katz2022}, which have featured in e.g., variational approaches to finding ground states of molecules \cite{nam2020ground}. A better understanding of the universality requirements and compilation rates of gate sets comprised of Pauli strings will likely lead to a better utilization of resources in each of these cases.

\begin{acknowledgments}
The authors thank Vedran Dunjko, Paul Barth and Arunava Majumder for stimulating discussions. This project was funded in whole or in part by the Austrian Science Fund (FWF) [DK-ALM W$1259$-N$27$ (Grant DOI: 10.55776/W1259), SFB BeyondC F7102 (Grant DOI: 10.55776/F71), WIT9503323 (Grant DOI: 10.55776/WIT9503323)]. For open access purposes, the authors have applied a CC BY public copyright license to any author-accepted manuscript version arising from this submission. This work was also co-funded by the European Union (ERC, QuantAI, Project No. $101055129$, Grant DOI: 10.3030/101055129). Views and opinions expressed are however those of the author(s) only and do not necessarily reflect those of the European Union or the European Research Council. Neither the European Union nor the granting authority can be held responsible for them. DTS is supported by the Simons Collaboration on Ultra-Quantum Matter, which is a grant from the Simons Foundation (651440).

During the preparation of this manuscript, we became aware of a related work, Ref.~\cite{aguilar2024classificationpauliliealgebras}, which contains some similar results to our own. However, both works also feature substantially different results and methods. It would be an interesting avenue for future work to determine whether the different approaches could be combined to make further progress in this area.
\end{acknowledgments}

\bibliography{bib}

\begin{thebibliography}{49}%
\makeatletter
\providecommand \@ifxundefined [1]{%
 \@ifx{#1\undefined}
}%
\providecommand \@ifnum [1]{%
 \ifnum #1\expandafter \@firstoftwo
 \else \expandafter \@secondoftwo
 \fi
}%
\providecommand \@ifx [1]{%
 \ifx #1\expandafter \@firstoftwo
 \else \expandafter \@secondoftwo
 \fi
}%
\providecommand \natexlab [1]{#1}%
\providecommand \enquote  [1]{``#1''}%
\providecommand \bibnamefont  [1]{#1}%
\providecommand \bibfnamefont [1]{#1}%
\providecommand \citenamefont [1]{#1}%
\providecommand \href@noop [0]{\@secondoftwo}%
\providecommand \href [0]{\begingroup \@sanitize@url \@href}%
\providecommand \@href[1]{\@@startlink{#1}\@@href}%
\providecommand \@@href[1]{\endgroup#1\@@endlink}%
\providecommand \@sanitize@url [0]{\catcode `\\12\catcode `\$12\catcode `\&12\catcode `\#12\catcode `\^12\catcode `\_12\catcode `\%12\relax}%
\providecommand \@@startlink[1]{}%
\providecommand \@@endlink[0]{}%
\providecommand \url  [0]{\begingroup\@sanitize@url \@url }%
\providecommand \@url [1]{\endgroup\@href {#1}{\urlprefix }}%
\providecommand \urlprefix  [0]{URL }%
\providecommand \Eprint [0]{\href }%
\providecommand \doibase [0]{https://doi.org/}%
\providecommand \selectlanguage [0]{\@gobble}%
\providecommand \bibinfo  [0]{\@secondoftwo}%
\providecommand \bibfield  [0]{\@secondoftwo}%
\providecommand \translation [1]{[#1]}%
\providecommand \BibitemOpen [0]{}%
\providecommand \bibitemStop [0]{}%
\providecommand \bibitemNoStop [0]{.\EOS\space}%
\providecommand \EOS [0]{\spacefactor3000\relax}%
\providecommand \BibitemShut  [1]{\csname bibitem#1\endcsname}%
\let\auto@bib@innerbib\@empty
\bibitem [{\citenamefont {Huang}\ \emph {et~al.}(1983)\citenamefont {Huang}, \citenamefont {Tarn},\ and\ \citenamefont {Clark}}]{huang1983controllability}%
  \BibitemOpen
  \bibfield  {author} {\bibinfo {author} {\bibfnamefont {G.~M.}\ \bibnamefont {Huang}}, \bibinfo {author} {\bibfnamefont {T.~J.}\ \bibnamefont {Tarn}},\ and\ \bibinfo {author} {\bibfnamefont {J.~W.}\ \bibnamefont {Clark}},\ }\bibfield  {title} {\bibinfo {title} {{On the controllability of quantum-mechanical systems}},\ }\href@noop {} {\bibfield  {journal} {\bibinfo  {journal} {Journal of Mathematical Physics}\ }\textbf {\bibinfo {volume} {24}},\ \bibinfo {pages} {2608} (\bibinfo {year} {1983})}\BibitemShut {NoStop}%
\bibitem [{\citenamefont {Jurdjevic}\ and\ \citenamefont {Sussmann}(1972)}]{jurdjevic1972control}%
  \BibitemOpen
  \bibfield  {author} {\bibinfo {author} {\bibfnamefont {V.}~\bibnamefont {Jurdjevic}}\ and\ \bibinfo {author} {\bibfnamefont {H.~J.}\ \bibnamefont {Sussmann}},\ }\bibfield  {title} {\bibinfo {title} {{Control systems on Lie groups}},\ }\href@noop {} {\bibfield  {journal} {\bibinfo  {journal} {Journal of Differential equations}\ }\textbf {\bibinfo {volume} {12}},\ \bibinfo {pages} {313} (\bibinfo {year} {1972})}\BibitemShut {NoStop}%
\bibitem [{\citenamefont {Kuranishi}(1951)}]{Kuranishi1951On}%
  \BibitemOpen
  \bibfield  {author} {\bibinfo {author} {\bibfnamefont {M.}~\bibnamefont {Kuranishi}},\ }\bibfield  {title} {\bibinfo {title} {{On Everywhere Dense Imbedding of Free Groups in Lie Groups}},\ }\href {https://api.semanticscholar.org/CorpusID:117066619} {\bibfield  {journal} {\bibinfo  {journal} {Nagoya Mathematical Journal}\ }\textbf {\bibinfo {volume} {2}},\ \bibinfo {pages} {63} (\bibinfo {year} {1951})}\BibitemShut {NoStop}%
\bibitem [{\citenamefont {Ramakrishna}\ \emph {et~al.}(1995)\citenamefont {Ramakrishna}, \citenamefont {Salapaka}, \citenamefont {Dahleh}, \citenamefont {Rabitz},\ and\ \citenamefont {Peirce}}]{Ramakrishna}%
  \BibitemOpen
  \bibfield  {author} {\bibinfo {author} {\bibfnamefont {V.}~\bibnamefont {Ramakrishna}}, \bibinfo {author} {\bibfnamefont {M.~V.}\ \bibnamefont {Salapaka}}, \bibinfo {author} {\bibfnamefont {M.}~\bibnamefont {Dahleh}}, \bibinfo {author} {\bibfnamefont {H.}~\bibnamefont {Rabitz}},\ and\ \bibinfo {author} {\bibfnamefont {A.}~\bibnamefont {Peirce}},\ }\bibfield  {title} {\bibinfo {title} {{Controllability of molecular systems}},\ }\href {https://doi.org/10.1103/PhysRevA.51.960} {\bibfield  {journal} {\bibinfo  {journal} {Phys. Rev. A}\ }\textbf {\bibinfo {volume} {51}},\ \bibinfo {pages} {960} (\bibinfo {year} {1995})}\BibitemShut {NoStop}%
\bibitem [{\citenamefont {Albertini}\ and\ \citenamefont {D'Alessandro}(2001)}]{albertini2001notions}%
  \BibitemOpen
  \bibfield  {author} {\bibinfo {author} {\bibfnamefont {F.}~\bibnamefont {Albertini}}\ and\ \bibinfo {author} {\bibfnamefont {D.}~\bibnamefont {D'Alessandro}},\ }\bibfield  {title} {\bibinfo {title} {{Notions of controllability for quantum mechanical systems}},\ }in\ \href@noop {} {\emph {\bibinfo {booktitle} {Proceedings of the 40th IEEE Conference on Decision and Control (Cat. No. 01CH37228)}}},\ Vol.~\bibinfo {volume} {2}\ (\bibinfo {organization} {IEEE},\ \bibinfo {year} {2001})\ pp.\ \bibinfo {pages} {1589--1594}\BibitemShut {NoStop}%
\bibitem [{\citenamefont {Albertini}\ and\ \citenamefont {D'Alessandro}(2002)}]{albertini2002lie}%
  \BibitemOpen
  \bibfield  {author} {\bibinfo {author} {\bibfnamefont {F.}~\bibnamefont {Albertini}}\ and\ \bibinfo {author} {\bibfnamefont {D.}~\bibnamefont {D'Alessandro}},\ }\bibfield  {title} {\bibinfo {title} {{The Lie algebra structure and controllability of spin systems}},\ }\href@noop {} {\bibfield  {journal} {\bibinfo  {journal} {Linear algebra and its applications}\ }\textbf {\bibinfo {volume} {350}},\ \bibinfo {pages} {213} (\bibinfo {year} {2002})}\BibitemShut {NoStop}%
\bibitem [{\citenamefont {Schirmer}\ \emph {et~al.}(2001)\citenamefont {Schirmer}, \citenamefont {Fu},\ and\ \citenamefont {Solomon}}]{Schirmer_01}%
  \BibitemOpen
  \bibfield  {author} {\bibinfo {author} {\bibfnamefont {S.~G.}\ \bibnamefont {Schirmer}}, \bibinfo {author} {\bibfnamefont {H.}~\bibnamefont {Fu}},\ and\ \bibinfo {author} {\bibfnamefont {A.~I.}\ \bibnamefont {Solomon}},\ }\bibfield  {title} {\bibinfo {title} {{Complete controllability of quantum systems}},\ }\href {https://doi.org/10.1103/PhysRevA.63.063410} {\bibfield  {journal} {\bibinfo  {journal} {Phys. Rev. A}\ }\textbf {\bibinfo {volume} {63}},\ \bibinfo {pages} {063410} (\bibinfo {year} {2001})}\BibitemShut {NoStop}%
\bibitem [{\citenamefont {D'Alessandro}(2007)}]{d2007introduction}%
  \BibitemOpen
  \bibfield  {author} {\bibinfo {author} {\bibfnamefont {D.}~\bibnamefont {D'Alessandro}},\ }\href {https://books.google.sm/books?id=HbMYmAEACAAJ} {\emph {\bibinfo {title} {{Introduction to Quantum Control and Dynamics}}}},\ Chapman \& Hall/CRC Applied Mathematics \& Nonlinear Science\ (\bibinfo  {publisher} {Taylor \& Francis},\ \bibinfo {year} {2007})\BibitemShut {NoStop}%
\bibitem [{\citenamefont {Zeier}\ and\ \citenamefont {Schulte-Herbrüggen}(2011)}]{Zeier2011}%
  \BibitemOpen
  \bibfield  {author} {\bibinfo {author} {\bibfnamefont {R.}~\bibnamefont {Zeier}}\ and\ \bibinfo {author} {\bibfnamefont {T.}~\bibnamefont {Schulte-Herbrüggen}},\ }\bibfield  {title} {\bibinfo {title} {{Symmetry principles in quantum systems theory}},\ }\href {https://doi.org/10.1063/1.3657939} {\bibfield  {journal} {\bibinfo  {journal} {Journal of Mathematical Physics}\ }\textbf {\bibinfo {volume} {52}},\ \bibinfo {pages} {113510} (\bibinfo {year} {2011})},\ \Eprint {https://arxiv.org/abs/https://pubs.aip.org/aip/jmp/article-pdf/doi/10.1063/1.3657939/16123044/113510\_1\_online.pdf} {https://pubs.aip.org/aip/jmp/article-pdf/doi/10.1063/1.3657939/16123044/113510\_1\_online.pdf} \BibitemShut {NoStop}%
\bibitem [{\citenamefont {Sussmann}\ and\ \citenamefont {Jurdjevic}(1972)}]{sussmann1972controllability}%
  \BibitemOpen
  \bibfield  {author} {\bibinfo {author} {\bibfnamefont {H.~J.}\ \bibnamefont {Sussmann}}\ and\ \bibinfo {author} {\bibfnamefont {V.}~\bibnamefont {Jurdjevic}},\ }\bibfield  {title} {\bibinfo {title} {{Controllability of nonlinear systems.}},\ }\href@noop {} {\bibfield  {journal} {\bibinfo  {journal} {Journal of Differential Equations}\ }\textbf {\bibinfo {volume} {12}} (\bibinfo {year} {1972})}\BibitemShut {NoStop}%
\bibitem [{\citenamefont {Ramakrishna}\ and\ \citenamefont {Rabitz}(1996)}]{ramakrishna1996relation}%
  \BibitemOpen
  \bibfield  {author} {\bibinfo {author} {\bibfnamefont {V.}~\bibnamefont {Ramakrishna}}\ and\ \bibinfo {author} {\bibfnamefont {H.}~\bibnamefont {Rabitz}},\ }\bibfield  {title} {\bibinfo {title} {{Relation between quantum computing and quantum controllability}},\ }\href {https://doi.org/10.1103/PhysRevA.54.1715} {\bibfield  {journal} {\bibinfo  {journal} {Phys. Rev. A}\ }\textbf {\bibinfo {volume} {54}},\ \bibinfo {pages} {1715} (\bibinfo {year} {1996})}\BibitemShut {NoStop}%
\bibitem [{\citenamefont {Lloyd}(1995)}]{Lloyd_95}%
  \BibitemOpen
  \bibfield  {author} {\bibinfo {author} {\bibfnamefont {S.}~\bibnamefont {Lloyd}},\ }\bibfield  {title} {\bibinfo {title} {{Almost Any Quantum Logic Gate is Universal}},\ }\href {https://doi.org/10.1103/PhysRevLett.75.346} {\bibfield  {journal} {\bibinfo  {journal} {Phys. Rev. Lett.}\ }\textbf {\bibinfo {volume} {75}},\ \bibinfo {pages} {346} (\bibinfo {year} {1995})}\BibitemShut {NoStop}%
\bibitem [{\citenamefont {D’Alessandro}\ and\ \citenamefont {Hartwig}(2021)}]{d2021dynamical}%
  \BibitemOpen
  \bibfield  {author} {\bibinfo {author} {\bibfnamefont {D.}~\bibnamefont {D’Alessandro}}\ and\ \bibinfo {author} {\bibfnamefont {J.~T.}\ \bibnamefont {Hartwig}},\ }\bibfield  {title} {\bibinfo {title} {{Dynamical decomposition of bilinear control systems subject to symmetries}},\ }\href@noop {} {\bibfield  {journal} {\bibinfo  {journal} {Journal of Dynamical and Control Systems}\ }\textbf {\bibinfo {volume} {27}},\ \bibinfo {pages} {1} (\bibinfo {year} {2021})}\BibitemShut {NoStop}%
\bibitem [{\citenamefont {Wiersema}\ \emph {et~al.}(2023)\citenamefont {Wiersema}, \citenamefont {Kökcü}, \citenamefont {Kemper},\ and\ \citenamefont {Bakalov}}]{wiersema2023classification}%
  \BibitemOpen
  \bibfield  {author} {\bibinfo {author} {\bibfnamefont {R.}~\bibnamefont {Wiersema}}, \bibinfo {author} {\bibfnamefont {E.}~\bibnamefont {Kökcü}}, \bibinfo {author} {\bibfnamefont {A.~F.}\ \bibnamefont {Kemper}},\ and\ \bibinfo {author} {\bibfnamefont {B.~N.}\ \bibnamefont {Bakalov}},\ }\bibfield  {title} {\bibinfo {title} {{Classification of dynamical Lie algebras for translation-invariant 2-local spin systems in one dimension}},\ }\href {https://arxiv.org/abs/2309.05690} {\bibfield  {journal} {\bibinfo  {journal} {arXiv preprint arXiv:2309.05690}\ } (\bibinfo {year} {2023})}\BibitemShut {NoStop}%
\bibitem [{\citenamefont {Schirmer}\ \emph {et~al.}(2002)\citenamefont {Schirmer}, \citenamefont {Pullen},\ and\ \citenamefont {Solomon}}]{schirmer2002identification}%
  \BibitemOpen
  \bibfield  {author} {\bibinfo {author} {\bibfnamefont {S.}~\bibnamefont {Schirmer}}, \bibinfo {author} {\bibfnamefont {I.}~\bibnamefont {Pullen}},\ and\ \bibinfo {author} {\bibfnamefont {A.}~\bibnamefont {Solomon}},\ }\bibfield  {title} {\bibinfo {title} {{Identification of dynamical Lie algebras for finite-level quantum control systems}},\ }\href@noop {} {\bibfield  {journal} {\bibinfo  {journal} {Journal of Physics A: Mathematical and General}\ }\textbf {\bibinfo {volume} {35}},\ \bibinfo {pages} {2327} (\bibinfo {year} {2002})}\BibitemShut {NoStop}%
\bibitem [{\citenamefont {Raussendorf}\ \emph {et~al.}(2003)\citenamefont {Raussendorf}, \citenamefont {Browne},\ and\ \citenamefont {Briegel}}]{Raussendorf2003}%
  \BibitemOpen
  \bibfield  {author} {\bibinfo {author} {\bibfnamefont {R.}~\bibnamefont {Raussendorf}}, \bibinfo {author} {\bibfnamefont {D.~E.}\ \bibnamefont {Browne}},\ and\ \bibinfo {author} {\bibfnamefont {H.~J.}\ \bibnamefont {Briegel}},\ }\bibfield  {title} {\bibinfo {title} {{Measurement-based quantum computation on cluster states}},\ }\href {https://doi.org/10.1103/PhysRevA.68.022312} {\bibfield  {journal} {\bibinfo  {journal} {Phys. Rev. A}\ }\textbf {\bibinfo {volume} {68}},\ \bibinfo {pages} {022312} (\bibinfo {year} {2003})}\BibitemShut {NoStop}%
\bibitem [{\citenamefont {Raussendorf}\ \emph {et~al.}(2019)\citenamefont {Raussendorf}, \citenamefont {Okay}, \citenamefont {Wang}, \citenamefont {Stephen},\ and\ \citenamefont {Nautrup}}]{Raussendorf2019}%
  \BibitemOpen
  \bibfield  {author} {\bibinfo {author} {\bibfnamefont {R.}~\bibnamefont {Raussendorf}}, \bibinfo {author} {\bibfnamefont {C.}~\bibnamefont {Okay}}, \bibinfo {author} {\bibfnamefont {D.-S.}\ \bibnamefont {Wang}}, \bibinfo {author} {\bibfnamefont {D.~T.}\ \bibnamefont {Stephen}},\ and\ \bibinfo {author} {\bibfnamefont {H.~P.}\ \bibnamefont {Nautrup}},\ }\bibfield  {title} {\bibinfo {title} {{Computationally Universal Phase of Quantum Matter}},\ }\href {https://doi.org/10.1103/PhysRevLett.122.090501} {\bibfield  {journal} {\bibinfo  {journal} {Phys. Rev. Lett.}\ }\textbf {\bibinfo {volume} {122}},\ \bibinfo {pages} {090501} (\bibinfo {year} {2019})}\BibitemShut {NoStop}%
\bibitem [{\citenamefont {Devakul}\ and\ \citenamefont {Williamson}(2018)}]{Devakul2018}%
  \BibitemOpen
  \bibfield  {author} {\bibinfo {author} {\bibfnamefont {T.}~\bibnamefont {Devakul}}\ and\ \bibinfo {author} {\bibfnamefont {D.~J.}\ \bibnamefont {Williamson}},\ }\bibfield  {title} {\bibinfo {title} {{Universal quantum computation using fractal symmetry-protected cluster phases}},\ }\href {https://doi.org/10.1103/PhysRevA.98.022332} {\bibfield  {journal} {\bibinfo  {journal} {Phys. Rev. A}\ }\textbf {\bibinfo {volume} {98}},\ \bibinfo {pages} {022332} (\bibinfo {year} {2018})}\BibitemShut {NoStop}%
\bibitem [{\citenamefont {Stephen}\ \emph {et~al.}(2019)\citenamefont {Stephen}, \citenamefont {{Poulsen Nautrup}}, \citenamefont {Bermejo-Vega}, \citenamefont {Eisert},\ and\ \citenamefont {Raussendorf}}]{Stephen2019subsystem}%
  \BibitemOpen
  \bibfield  {author} {\bibinfo {author} {\bibfnamefont {D.~T.}\ \bibnamefont {Stephen}}, \bibinfo {author} {\bibfnamefont {H.}~\bibnamefont {{Poulsen Nautrup}}}, \bibinfo {author} {\bibfnamefont {J.}~\bibnamefont {Bermejo-Vega}}, \bibinfo {author} {\bibfnamefont {J.}~\bibnamefont {Eisert}},\ and\ \bibinfo {author} {\bibfnamefont {R.}~\bibnamefont {Raussendorf}},\ }\bibfield  {title} {\bibinfo {title} {{Subsystem symmetries, quantum cellular automata, and computational phases of quantum matter}},\ }\href {https://doi.org/10.22331/q-2019-05-20-142} {\bibfield  {journal} {\bibinfo  {journal} {{Quantum}}\ }\textbf {\bibinfo {volume} {3}},\ \bibinfo {pages} {142} (\bibinfo {year} {2019})}\BibitemShut {NoStop}%
\bibitem [{\citenamefont {Daniel}\ \emph {et~al.}(2020)\citenamefont {Daniel}, \citenamefont {Alexander},\ and\ \citenamefont {Miyake}}]{Daniel2020computational}%
  \BibitemOpen
  \bibfield  {author} {\bibinfo {author} {\bibfnamefont {A.~K.}\ \bibnamefont {Daniel}}, \bibinfo {author} {\bibfnamefont {R.~N.}\ \bibnamefont {Alexander}},\ and\ \bibinfo {author} {\bibfnamefont {A.}~\bibnamefont {Miyake}},\ }\bibfield  {title} {\bibinfo {title} {{Computational universality of symmetry-protected topologically ordered cluster phases on 2{D} {A}rchimedean lattices}},\ }\href {https://doi.org/10.22331/q-2020-02-10-228} {\bibfield  {journal} {\bibinfo  {journal} {{Quantum}}\ }\textbf {\bibinfo {volume} {4}},\ \bibinfo {pages} {228} (\bibinfo {year} {2020})}\BibitemShut {NoStop}%
\bibitem [{\citenamefont {Smith}\ \emph {et~al.}(2024)\citenamefont {Smith}, \citenamefont {{Poulsen Nautrup}},\ and\ \citenamefont {Briegel}}]{Parity_MBQC}%
  \BibitemOpen
  \bibfield  {author} {\bibinfo {author} {\bibfnamefont {I.~D.}\ \bibnamefont {Smith}}, \bibinfo {author} {\bibfnamefont {H.}~\bibnamefont {{Poulsen Nautrup}}},\ and\ \bibinfo {author} {\bibfnamefont {H.~J.}\ \bibnamefont {Briegel}},\ }\bibfield  {title} {\bibinfo {title} {{Parity Quantum Computing as $yz$-Plane Measurement-Based Quantum Computing}},\ }\href {https://doi.org/10.1103/PhysRevLett.132.220602} {\bibfield  {journal} {\bibinfo  {journal} {Phys. Rev. Lett.}\ }\textbf {\bibinfo {volume} {132}},\ \bibinfo {pages} {220602} (\bibinfo {year} {2024})}\BibitemShut {NoStop}%
\bibitem [{\citenamefont {Stephen}\ \emph {et~al.}(2024)\citenamefont {Stephen}, \citenamefont {Ho}, \citenamefont {Wei}, \citenamefont {Raussendorf},\ and\ \citenamefont {Verresen}}]{Stephen2024}%
  \BibitemOpen
  \bibfield  {author} {\bibinfo {author} {\bibfnamefont {D.~T.}\ \bibnamefont {Stephen}}, \bibinfo {author} {\bibfnamefont {W.~W.}\ \bibnamefont {Ho}}, \bibinfo {author} {\bibfnamefont {T.-C.}\ \bibnamefont {Wei}}, \bibinfo {author} {\bibfnamefont {R.}~\bibnamefont {Raussendorf}},\ and\ \bibinfo {author} {\bibfnamefont {R.}~\bibnamefont {Verresen}},\ }\bibfield  {title} {\bibinfo {title} {{Universal Measurement-Based Quantum Computation in a One-Dimensional Architecture Enabled by Dual-Unitary Circuits}},\ }\href {https://doi.org/10.1103/PhysRevLett.132.250601} {\bibfield  {journal} {\bibinfo  {journal} {Phys. Rev. Lett.}\ }\textbf {\bibinfo {volume} {132}},\ \bibinfo {pages} {250601} (\bibinfo {year} {2024})}\BibitemShut {NoStop}%
\bibitem [{\citenamefont {Raussendorf}(2005)}]{Raussendorf2005}%
  \BibitemOpen
  \bibfield  {author} {\bibinfo {author} {\bibfnamefont {R.}~\bibnamefont {Raussendorf}},\ }\bibfield  {title} {\bibinfo {title} {{Quantum computation via translation-invariant operations on a chain of qubits}},\ }\href {https://doi.org/10.1103/PhysRevA.72.052301} {\bibfield  {journal} {\bibinfo  {journal} {Phys. Rev. A}\ }\textbf {\bibinfo {volume} {72}},\ \bibinfo {pages} {052301} (\bibinfo {year} {2005})}\BibitemShut {NoStop}%
\bibitem [{\citenamefont {Poulsen~Nautrup}\ and\ \citenamefont {Briegel}(2024)}]{nautrup2023}%
  \BibitemOpen
  \bibfield  {author} {\bibinfo {author} {\bibfnamefont {H.}~\bibnamefont {Poulsen~Nautrup}}\ and\ \bibinfo {author} {\bibfnamefont {H.~J.}\ \bibnamefont {Briegel}},\ }\bibfield  {title} {\bibinfo {title} {Measurement-based quantum computation from clifford quantum cellular automata},\ }\href {https://doi.org/10.1103/PhysRevA.110.062617} {\bibfield  {journal} {\bibinfo  {journal} {Phys. Rev. A}\ }\textbf {\bibinfo {volume} {110}},\ \bibinfo {pages} {062617} (\bibinfo {year} {2024})}\BibitemShut {NoStop}%
\bibitem [{\citenamefont {Steudtner}\ and\ \citenamefont {Wehner}(2018)}]{steudtner2018fermion}%
  \BibitemOpen
  \bibfield  {author} {\bibinfo {author} {\bibfnamefont {M.}~\bibnamefont {Steudtner}}\ and\ \bibinfo {author} {\bibfnamefont {S.}~\bibnamefont {Wehner}},\ }\bibfield  {title} {\bibinfo {title} {{Fermion-to-qubit mappings with varying resource requirements for quantum simulation}},\ }\href@noop {} {\bibfield  {journal} {\bibinfo  {journal} {New Journal of Physics}\ }\textbf {\bibinfo {volume} {20}},\ \bibinfo {pages} {063010} (\bibinfo {year} {2018})}\BibitemShut {NoStop}%
\bibitem [{\citenamefont {Romero}\ \emph {et~al.}(2018)\citenamefont {Romero}, \citenamefont {Babbush}, \citenamefont {McClean}, \citenamefont {Hempel}, \citenamefont {Love},\ and\ \citenamefont {Aspuru-Guzik}}]{romero2018strategies}%
  \BibitemOpen
  \bibfield  {author} {\bibinfo {author} {\bibfnamefont {J.}~\bibnamefont {Romero}}, \bibinfo {author} {\bibfnamefont {R.}~\bibnamefont {Babbush}}, \bibinfo {author} {\bibfnamefont {J.~R.}\ \bibnamefont {McClean}}, \bibinfo {author} {\bibfnamefont {C.}~\bibnamefont {Hempel}}, \bibinfo {author} {\bibfnamefont {P.~J.}\ \bibnamefont {Love}},\ and\ \bibinfo {author} {\bibfnamefont {A.}~\bibnamefont {Aspuru-Guzik}},\ }\bibfield  {title} {\bibinfo {title} {{Strategies for quantum computing molecular energies using the unitary coupled cluster ansatz}},\ }\href@noop {} {\bibfield  {journal} {\bibinfo  {journal} {Quantum Science and Technology}\ }\textbf {\bibinfo {volume} {4}},\ \bibinfo {pages} {014008} (\bibinfo {year} {2018})}\BibitemShut {NoStop}%
\bibitem [{\citenamefont {Cowtan}\ \emph {et~al.}(2020)\citenamefont {Cowtan}, \citenamefont {Simmons},\ and\ \citenamefont {Duncan}}]{cowtan2020generic}%
  \BibitemOpen
  \bibfield  {author} {\bibinfo {author} {\bibfnamefont {A.}~\bibnamefont {Cowtan}}, \bibinfo {author} {\bibfnamefont {W.}~\bibnamefont {Simmons}},\ and\ \bibinfo {author} {\bibfnamefont {R.}~\bibnamefont {Duncan}},\ }\bibfield  {title} {\bibinfo {title} {{A generic compilation strategy for the unitary coupled cluster ansatz}},\ }\href@noop {} {\bibfield  {journal} {\bibinfo  {journal} {arXiv preprint arXiv:2007.10515}\ } (\bibinfo {year} {2020})}\BibitemShut {NoStop}%
\bibitem [{\citenamefont {Van Den~Berg}\ and\ \citenamefont {Temme}(2020)}]{van2020circuit}%
  \BibitemOpen
  \bibfield  {author} {\bibinfo {author} {\bibfnamefont {E.}~\bibnamefont {Van Den~Berg}}\ and\ \bibinfo {author} {\bibfnamefont {K.}~\bibnamefont {Temme}},\ }\bibfield  {title} {\bibinfo {title} {{Circuit optimization of Hamiltonian simulation by simultaneous diagonalization of Pauli clusters}},\ }\href@noop {} {\bibfield  {journal} {\bibinfo  {journal} {Quantum}\ }\textbf {\bibinfo {volume} {4}},\ \bibinfo {pages} {322} (\bibinfo {year} {2020})}\BibitemShut {NoStop}%
\bibitem [{\citenamefont {Debnath}\ \emph {et~al.}(2016)\citenamefont {Debnath}, \citenamefont {Linke}, \citenamefont {Figgatt}, \citenamefont {Landsman}, \citenamefont {Wright},\ and\ \citenamefont {Monroe}}]{debnath2016demonstration}%
  \BibitemOpen
  \bibfield  {author} {\bibinfo {author} {\bibfnamefont {S.}~\bibnamefont {Debnath}}, \bibinfo {author} {\bibfnamefont {N.~M.}\ \bibnamefont {Linke}}, \bibinfo {author} {\bibfnamefont {C.}~\bibnamefont {Figgatt}}, \bibinfo {author} {\bibfnamefont {K.~A.}\ \bibnamefont {Landsman}}, \bibinfo {author} {\bibfnamefont {K.}~\bibnamefont {Wright}},\ and\ \bibinfo {author} {\bibfnamefont {C.}~\bibnamefont {Monroe}},\ }\bibfield  {title} {\bibinfo {title} {{Demonstration of a small programmable quantum computer with atomic qubits}},\ }\href@noop {} {\bibfield  {journal} {\bibinfo  {journal} {Nature}\ }\textbf {\bibinfo {volume} {536}},\ \bibinfo {pages} {63} (\bibinfo {year} {2016})}\BibitemShut {NoStop}%
\bibitem [{\citenamefont {Nam}\ \emph {et~al.}(2020)\citenamefont {Nam}, \citenamefont {Chen}, \citenamefont {Pisenti}, \citenamefont {Wright}, \citenamefont {Delaney}, \citenamefont {Maslov}, \citenamefont {Brown}, \citenamefont {Allen}, \citenamefont {Amini}, \citenamefont {Apisdorf} \emph {et~al.}}]{nam2020ground}%
  \BibitemOpen
  \bibfield  {author} {\bibinfo {author} {\bibfnamefont {Y.}~\bibnamefont {Nam}}, \bibinfo {author} {\bibfnamefont {J.-S.}\ \bibnamefont {Chen}}, \bibinfo {author} {\bibfnamefont {N.~C.}\ \bibnamefont {Pisenti}}, \bibinfo {author} {\bibfnamefont {K.}~\bibnamefont {Wright}}, \bibinfo {author} {\bibfnamefont {C.}~\bibnamefont {Delaney}}, \bibinfo {author} {\bibfnamefont {D.}~\bibnamefont {Maslov}}, \bibinfo {author} {\bibfnamefont {K.~R.}\ \bibnamefont {Brown}}, \bibinfo {author} {\bibfnamefont {S.}~\bibnamefont {Allen}}, \bibinfo {author} {\bibfnamefont {J.~M.}\ \bibnamefont {Amini}}, \bibinfo {author} {\bibfnamefont {J.}~\bibnamefont {Apisdorf}}, \emph {et~al.},\ }\bibfield  {title} {\bibinfo {title} {{Ground-state energy estimation of the water molecule on a trapped-ion quantum computer}},\ }\href@noop {} {\bibfield  {journal} {\bibinfo  {journal} {npj Quantum Information}\ }\textbf {\bibinfo {volume} {6}},\ \bibinfo {pages} {33} (\bibinfo {year} {2020})}\BibitemShut {NoStop}%
\bibitem [{\citenamefont {Klaver}\ \emph {et~al.}(2024)\citenamefont {Klaver}, \citenamefont {Rombouts}, \citenamefont {Fellner}, \citenamefont {Messinger}, \citenamefont {Ender}, \citenamefont {Ludwig},\ and\ \citenamefont {Lechner}}]{klaver2024swapless}%
  \BibitemOpen
  \bibfield  {author} {\bibinfo {author} {\bibfnamefont {B.}~\bibnamefont {Klaver}}, \bibinfo {author} {\bibfnamefont {S.}~\bibnamefont {Rombouts}}, \bibinfo {author} {\bibfnamefont {M.}~\bibnamefont {Fellner}}, \bibinfo {author} {\bibfnamefont {A.}~\bibnamefont {Messinger}}, \bibinfo {author} {\bibfnamefont {K.}~\bibnamefont {Ender}}, \bibinfo {author} {\bibfnamefont {K.}~\bibnamefont {Ludwig}},\ and\ \bibinfo {author} {\bibfnamefont {W.}~\bibnamefont {Lechner}},\ }\href {https://arxiv.org/abs/2408.10907} {\bibinfo {title} {{SWAP-less Implementation of Quantum Algorithms}}} (\bibinfo {year} {2024})\BibitemShut {NoStop}%
\bibitem [{\citenamefont {Bombín}(2016)}]{bombin2016dimensional}%
  \BibitemOpen
  \bibfield  {author} {\bibinfo {author} {\bibfnamefont {H.}~\bibnamefont {Bombín}},\ }\bibfield  {title} {\bibinfo {title} {{Dimensional jump in quantum error correction}},\ }\href {https://doi.org/10.1088/1367-2630/18/4/043038} {\bibfield  {journal} {\bibinfo  {journal} {New Journal of Physics}\ }\textbf {\bibinfo {volume} {18}},\ \bibinfo {pages} {043038} (\bibinfo {year} {2016})}\BibitemShut {NoStop}%
\bibitem [{sup()}]{supp}%
  \BibitemOpen
  \href@noop {} {}\bibinfo {note} {The Supplemental Material at URL-will-be-inserted-by-publisher contains several proofs of results presented in the main text and includes references \cite{nielsen2010quantum,oeis}.}\BibitemShut {Stop}%
\bibitem [{\citenamefont {Hall}\ and\ \citenamefont {Hall}(2013)}]{hall2013lie}%
  \BibitemOpen
  \bibfield  {author} {\bibinfo {author} {\bibfnamefont {B.~C.}\ \bibnamefont {Hall}}\ and\ \bibinfo {author} {\bibfnamefont {B.~C.}\ \bibnamefont {Hall}},\ }\href@noop {} {\emph {\bibinfo {title} {{Lie groups, Lie algebras, and representations}}}}\ (\bibinfo  {publisher} {Springer},\ \bibinfo {year} {2013})\BibitemShut {NoStop}%
\bibitem [{\citenamefont {Van~den Nest}\ \emph {et~al.}(2004)\citenamefont {Van~den Nest}, \citenamefont {Dehaene},\ and\ \citenamefont {De~Moor}}]{Nest_04}%
  \BibitemOpen
  \bibfield  {author} {\bibinfo {author} {\bibfnamefont {M.}~\bibnamefont {Van~den Nest}}, \bibinfo {author} {\bibfnamefont {J.}~\bibnamefont {Dehaene}},\ and\ \bibinfo {author} {\bibfnamefont {B.}~\bibnamefont {De~Moor}},\ }\bibfield  {title} {\bibinfo {title} {{Graphical description of the action of local Clifford transformations on graph states}},\ }\href {https://doi.org/10.1103/PhysRevA.69.022316} {\bibfield  {journal} {\bibinfo  {journal} {Phys. Rev. A}\ }\textbf {\bibinfo {volume} {69}},\ \bibinfo {pages} {022316} (\bibinfo {year} {2004})}\BibitemShut {NoStop}%
\bibitem [{\citenamefont {Aaronson}\ and\ \citenamefont {Gottesman}(2004)}]{Aaronson_04}%
  \BibitemOpen
  \bibfield  {author} {\bibinfo {author} {\bibfnamefont {S.}~\bibnamefont {Aaronson}}\ and\ \bibinfo {author} {\bibfnamefont {D.}~\bibnamefont {Gottesman}},\ }\bibfield  {title} {\bibinfo {title} {{Improved simulation of stabilizer circuits}},\ }\href {https://doi.org/10.1103/PhysRevA.70.052328} {\bibfield  {journal} {\bibinfo  {journal} {Phys. Rev. A}\ }\textbf {\bibinfo {volume} {70}},\ \bibinfo {pages} {052328} (\bibinfo {year} {2004})}\BibitemShut {NoStop}%
\bibitem [{Note1()}]{Note1}%
  \BibitemOpen
  \bibinfo {note} {Note the difference in the allowed set of scalars compared to that for adjoint universality}\BibitemShut {NoStop}%
\bibitem [{Note2()}]{Note2}%
  \BibitemOpen
  \bibinfo {note} {Example 1 does not follow the curve since it has a non-uniform distribution of anticommutation; some generators anticommute with half the others while some only anticommute with a constant number of others.}\BibitemShut {Stop}%
\bibitem [{\citenamefont {Bäumer}\ and\ \citenamefont {Woerner}(2024)}]{baeumer2024measurement}%
  \BibitemOpen
  \bibfield  {author} {\bibinfo {author} {\bibfnamefont {E.}~\bibnamefont {Bäumer}}\ and\ \bibinfo {author} {\bibfnamefont {S.}~\bibnamefont {Woerner}},\ }\bibfield  {title} {\bibinfo {title} {{Measurement-Based Long-Range Entangling Gates in Constant Depth}},\ }\href {https://arxiv.org/abs/2408.03064} {\bibfield  {journal} {\bibinfo  {journal} {arXiv preprint arXiv:2408.03064}\ } (\bibinfo {year} {2024})}\BibitemShut {NoStop}%
\bibitem [{\citenamefont {Poulin}(2005)}]{poulin2005stabilizer}%
  \BibitemOpen
  \bibfield  {author} {\bibinfo {author} {\bibfnamefont {D.}~\bibnamefont {Poulin}},\ }\bibfield  {title} {\bibinfo {title} {{Stabilizer Formalism for Operator Quantum Error Correction}},\ }\href {https://doi.org/10.1103/PhysRevLett.95.230504} {\bibfield  {journal} {\bibinfo  {journal} {Phys. Rev. Lett.}\ }\textbf {\bibinfo {volume} {95}},\ \bibinfo {pages} {230504} (\bibinfo {year} {2005})}\BibitemShut {NoStop}%
\bibitem [{\citenamefont {Paetznick}\ and\ \citenamefont {Reichardt}(2013)}]{paetznick2013universal}%
  \BibitemOpen
  \bibfield  {author} {\bibinfo {author} {\bibfnamefont {A.}~\bibnamefont {Paetznick}}\ and\ \bibinfo {author} {\bibfnamefont {B.~W.}\ \bibnamefont {Reichardt}},\ }\bibfield  {title} {\bibinfo {title} {{Universal Fault-Tolerant Quantum Computation with Only Transversal Gates and Error Correction}},\ }\href {https://doi.org/10.1103/PhysRevLett.111.090505} {\bibfield  {journal} {\bibinfo  {journal} {Phys. Rev. Lett.}\ }\textbf {\bibinfo {volume} {111}},\ \bibinfo {pages} {090505} (\bibinfo {year} {2013})}\BibitemShut {NoStop}%
\bibitem [{\citenamefont {Bombin}\ and\ \citenamefont {Martin-Delgado}(2006)}]{bombin2006topological}%
  \BibitemOpen
  \bibfield  {author} {\bibinfo {author} {\bibfnamefont {H.}~\bibnamefont {Bombin}}\ and\ \bibinfo {author} {\bibfnamefont {M.~A.}\ \bibnamefont {Martin-Delgado}},\ }\bibfield  {title} {\bibinfo {title} {{Topological Quantum Distillation}},\ }\href {https://doi.org/10.1103/PhysRevLett.97.180501} {\bibfield  {journal} {\bibinfo  {journal} {Phys. Rev. Lett.}\ }\textbf {\bibinfo {volume} {97}},\ \bibinfo {pages} {180501} (\bibinfo {year} {2006})}\BibitemShut {NoStop}%
\bibitem [{\citenamefont {Beverland}\ \emph {et~al.}(2021)\citenamefont {Beverland}, \citenamefont {Kubica},\ and\ \citenamefont {Svore}}]{beverland2021cost}%
  \BibitemOpen
  \bibfield  {author} {\bibinfo {author} {\bibfnamefont {M.~E.}\ \bibnamefont {Beverland}}, \bibinfo {author} {\bibfnamefont {A.}~\bibnamefont {Kubica}},\ and\ \bibinfo {author} {\bibfnamefont {K.~M.}\ \bibnamefont {Svore}},\ }\bibfield  {title} {\bibinfo {title} {{Cost of Universality: A Comparative Study of the Overhead of State Distillation and Code Switching with Color Codes}},\ }\href {https://doi.org/10.1103/PRXQuantum.2.020341} {\bibfield  {journal} {\bibinfo  {journal} {PRX Quantum}\ }\textbf {\bibinfo {volume} {2}},\ \bibinfo {pages} {020341} (\bibinfo {year} {2021})}\BibitemShut {NoStop}%
\bibitem [{\citenamefont {Lucas}(2014)}]{lucas2014ising}%
  \BibitemOpen
  \bibfield  {author} {\bibinfo {author} {\bibfnamefont {A.}~\bibnamefont {Lucas}},\ }\bibfield  {title} {\bibinfo {title} {{Ising formulations of many NP problems}},\ }\href@noop {} {\bibfield  {journal} {\bibinfo  {journal} {Frontiers in physics}\ }\textbf {\bibinfo {volume} {2}},\ \bibinfo {pages} {5} (\bibinfo {year} {2014})}\BibitemShut {NoStop}%
\bibitem [{\citenamefont {Farhi}\ \emph {et~al.}(2014)\citenamefont {Farhi}, \citenamefont {Goldstone},\ and\ \citenamefont {Gutmann}}]{farhi2014quantumapproximateoptimizationalgorithm}%
  \BibitemOpen
  \bibfield  {author} {\bibinfo {author} {\bibfnamefont {E.}~\bibnamefont {Farhi}}, \bibinfo {author} {\bibfnamefont {J.}~\bibnamefont {Goldstone}},\ and\ \bibinfo {author} {\bibfnamefont {S.}~\bibnamefont {Gutmann}},\ }\href {https://arxiv.org/abs/1411.4028} {\bibinfo {title} {{A Quantum Approximate Optimization Algorithm}}} (\bibinfo {year} {2014}),\ \Eprint {https://arxiv.org/abs/1411.4028} {arXiv:1411.4028 [quant-ph]} \BibitemShut {NoStop}%
\bibitem [{\citenamefont {Katz}\ \emph {et~al.}(2022)\citenamefont {Katz}, \citenamefont {Cetina},\ and\ \citenamefont {Monroe}}]{Katz2022}%
  \BibitemOpen
  \bibfield  {author} {\bibinfo {author} {\bibfnamefont {O.}~\bibnamefont {Katz}}, \bibinfo {author} {\bibfnamefont {M.}~\bibnamefont {Cetina}},\ and\ \bibinfo {author} {\bibfnamefont {C.}~\bibnamefont {Monroe}},\ }\bibfield  {title} {\bibinfo {title} {{$N$-Body Interactions between Trapped Ion Qubits via Spin-Dependent Squeezing}},\ }\href {https://doi.org/10.1103/PhysRevLett.129.063603} {\bibfield  {journal} {\bibinfo  {journal} {Phys. Rev. Lett.}\ }\textbf {\bibinfo {volume} {129}},\ \bibinfo {pages} {063603} (\bibinfo {year} {2022})}\BibitemShut {NoStop}%
\bibitem [{\citenamefont {Aguilar}\ \emph {et~al.}(2024)\citenamefont {Aguilar}, \citenamefont {Cichy}, \citenamefont {Eisert},\ and\ \citenamefont {Bittel}}]{aguilar2024classificationpauliliealgebras}%
  \BibitemOpen
  \bibfield  {author} {\bibinfo {author} {\bibfnamefont {G.}~\bibnamefont {Aguilar}}, \bibinfo {author} {\bibfnamefont {S.}~\bibnamefont {Cichy}}, \bibinfo {author} {\bibfnamefont {J.}~\bibnamefont {Eisert}},\ and\ \bibinfo {author} {\bibfnamefont {L.}~\bibnamefont {Bittel}},\ }\href {https://arxiv.org/abs/2408.00081} {\bibinfo {title} {{Full classification of Pauli Lie algebras}}} (\bibinfo {year} {2024}),\ \Eprint {https://arxiv.org/abs/2408.00081} {arXiv:2408.00081 [quant-ph]} \BibitemShut {NoStop}%
\bibitem [{\citenamefont {Nielsen}\ and\ \citenamefont {Chuang}(2010)}]{nielsen2010quantum}%
  \BibitemOpen
  \bibfield  {author} {\bibinfo {author} {\bibfnamefont {M.~A.}\ \bibnamefont {Nielsen}}\ and\ \bibinfo {author} {\bibfnamefont {I.~L.}\ \bibnamefont {Chuang}},\ }\href@noop {} {\emph {\bibinfo {title} {{Quantum computation and quantum information}}}}\ (\bibinfo  {publisher} {Cambridge university press},\ \bibinfo {year} {2010})\BibitemShut {NoStop}%
\bibitem [{\citenamefont {{OEIS Foundation Inc.}}(2024)}]{oeis}%
  \BibitemOpen
  \bibfield  {author} {\bibinfo {author} {\bibnamefont {{OEIS Foundation Inc.}}},\ }\href@noop {} {\bibinfo {title} {{The {O}n-{L}ine {E}ncyclopedia of {I}nteger {S}equences}}} (\bibinfo {year} {2024}),\ \bibinfo {note} {published electronically at \url{http://oeis.org/A143441}}\BibitemShut {NoStop}%
\end{thebibliography}%

\newpage

\appendix
\onecolumngrid

\section{BCH Formula for Pauli Strings} \label{app:BCH_paulis}

Here, we spell out the details regarding the reduction of the BCH formula for elements of $i\mathcal{P}_{N}^{*}$ expressed in Eq.~(4)  in the main text. 

Let $A, B \in i\mathcal{P}_{N}^{*}$ be such that $[A,B] \neq 0$, i.e. that $[A,B] = 2AB$ or equivalently $AB = -BA$. It follows that 
\begin{align}
[A,[A,B]] = 2[A,AB] = 2(A^{2}B - ABA) = 4A^{2}B = -4B 
\end{align} 
where the last equality uses the fact that elements of $i\mathcal{P}_{N}^{*}$ square to minus the identity. Using the BCH formula, we have
\begin{align}
e^{\frac{\pi}{4}A}e^{tB}e^{-\frac{\pi}{4}A} = e^{tB + \sum_{k = 1}^{\infty}\frac{1}{k!} ad_{\frac{\pi}{4}A}^{(k)}(tB)}
\end{align}
where $ad_{\frac{\pi}{4}A}^{(k)}(tB)$ denotes the composition of $ad_{\frac{\pi}{4}A}$ $k$-times and applied to $tB$, as in the main text. Using the above commutation relations, we see that 
\begin{align}
tB + \sum_{k = 1}^{\infty}\frac{1}{k!} ad_{\frac{\pi}{4}A}^{(k)}(tB) &= t \left\{B + \sum_{k = 1}^{\infty}\frac{1}{k!}\left(\frac{\pi}{4}\right)^{k} ad_{A}^{(k)}(B) \right\} \\
&= t \left\{B\left(\sum_{k=0}^{\infty} \frac{(-1)^{k}}{(2k)!}\left(\frac{\pi}{4}\right)^{2k}2^{2k} \right) + [A,B]\left(\sum_{k=0}^{\infty}\frac{(-1)^{k}}{(2k+1)!}\left(\frac{\pi}{4}\right)^{2k}2^{2k} \right)\right\} \\
&= t\left(\cos\left(\frac{\pi}{2}\right)B + \frac{1}{2}\sin\left(\frac{\pi}{2}\right)[A,B] \right) \\
&= \frac{t}{2}[A,B].
\end{align}

\section{The \textsc{PauliCompiler} and the proof of Theorem~1}\label{appendix:thm:construction}

\begin{figure}[ht!]
\begin{minipage}{\linewidth} 
\begin{algorithm}[H]
\caption{\textsc{PauliCompiler}$(P)$}
\begin{algorithmic}[1]
\Require Target: $P=iV\otimes W\in i\mathcal{P}^*_{N}$ for $V\in\mathcal{P}_k,W\in\mathcal{P}_{N-k}$.\\ \hspace{1.2cm} $\mathcal{A},\mathcal{A}',\mathcal{B},\mathcal{B}'$ as in Theorem~\ref{thm:app:construction}. \vspace{0.2cm}
\If{$W=I$}
\State Choose $A_1,...,A_s\in \mathcal{A}$ s.t. $iV\propto ad_{A_1}\cdots ad_{A_{s-1}}(A_s)\wedge s=\mathcal{O}(1)$\Comment{Use adjoint universality of $\mathcal{A}$.}\\
\Return $(A_1\otimes I,...,A_s\otimes I)$ 
\ElsIf{$V=I$}
\State Choose $W_1, W_2\in i\mathcal{P}_{N-k}$ s.t. $iW\propto [W_1,W_2]$
\State \begin{varwidth}[t]{\linewidth}
$\mathcal{G}'=(G_1',...,G_{\mathcal{G}'}')\gets\textsc{SubsystemCompiler}(W_1)$\par \hspace{0.7cm} s.t. $V_1'\otimes W_1\propto ad_{G'_{|\mathcal{G}'|}}\cdots ad_{G_2 '}(G_1')$
\end{varwidth}\Comment{Construct $V_1'\otimes W_1$ for a given $W_1$ and some $V_1'\in \mathcal{P}^*_k$}
\State \begin{varwidth}[t]{\linewidth}
$\mathcal{G}''=(G_1'',...,G_{\mathcal{G}''}'')\gets\textsc{SubsystemCompiler}(W_2)$\par \hspace{0.7cm} s.t. $V_2'\otimes W_2\propto ad_{G''_{|\mathcal{G}''|}}\cdots ad_{G_2''}(G_1'')$
\end{varwidth}\Comment{Construct $V_2'\otimes W_2$ for a given $W_2$ and some $V_2'\in \mathcal{P}^*_k$}
\State Choose $A_1,...,A_s\in\mathcal{A}$ s.t. $iV_1'\propto ad_{A_1}\cdots ad_{A_s}(iV_2')\wedge s=\mathcal{O}(1)$ \Comment{Use Proposition~\ref{prop:app:Vmap}.}
\State $\mathcal{G}=(G_1,...,G_{|\mathcal{G}|})\equiv\mathcal{G}'\oplus \mathcal{G}''\oplus (A_1\otimes I,...,A_s\otimes I)$\Comment{Construct $V_1'\otimes W_1,V_1'\otimes W_2$ s.t. $[V_1'\otimes W_1,V_1'\otimes W_2]\propto I\otimes W$.}
\State Choose $\sigma\in S_{|\mathcal{G}|}$ s.t. $ad_{G_{\sigma(1)}}\cdots ad_{G_{\sigma(|\mathcal{G}|-1)}}(G_{\sigma(|\mathcal{G}|)})\neq 0$\Comment{Use Lemma~\ref{lemma:reordering}}\\
\Return $(G_{\sigma(1)},...,G_{\sigma(|\mathcal{G}|)})$
\Else
\State \begin{varwidth}[t]{\linewidth}
$\mathcal{G}'=(G_1',...,G_{\mathcal{G}'}')\gets\textsc{SubsystemCompiler}(iW)$\par \hspace{0.7cm} s.t. $iV'\otimes W\propto ad_{G'_{|\mathcal{G}'|}}\cdots ad_{G_2 '}(G_1')$
\end{varwidth}\Comment{Construct $V'\otimes W$ for a given $W$ and some $V'\in \mathcal{P}^*_k$}
\State Choose $A_1,...,A_s\in\mathcal{A}$ s.t. $iV\propto ad_{A_1}\cdots ad_{A_s}(iV')\wedge s=\mathcal{O}(1)$\Comment{Use Proposition~\ref{prop:app:Vmap}.}\\
\Return $({A_1\otimes I,...,A_s\otimes I,G'_{|\mathcal{G}'|},...,G'_1})$
\EndIf
\end{algorithmic}
\end{algorithm}
\end{minipage}
\caption{Algorithm~1 from the main text with additional line numbering and comments. Here, we write $\propto$ to indicate proportionality but nonzero and $A \oplus B$ is the concatenation of two sequences $A,B$.}
\label{fig:app-paulicompiler}
\end{figure}

For convenience, we restate Theorem~1 here:
\begin{Theorem}\label{thm:app:construction}
    Fix $k \geq 2$ and consider $\mathcal{A} \subset i\mathcal{P}_k^*$  and $\mathcal{B}\subset i\mathcal{P}^*_{N-k}$ such that $\langle\mathcal{A}\rangle_{\mathrm{Lie}}=\mathfrak{su}(2^k)$ and $\mathcal{B}$ is product universal for $i\mathcal{P}^*_{N-k}$. Let $\mathbb{B}$ be a set of pairs $\{(U_{B}, B)\}_{B \in \mathcal{B}}$ where each $U_{B} \in \mathcal{P}_{k}^{*}$. Defining $\mathcal{A}'=\{A\otimes I^{\otimes N-k}|A\in\mathcal{A}\}$ and $\mathcal{B}'=\{U_{B} \otimes B | (U_{B}, B) \in \mathbb{B}\}$, we have that
    \begin{align*}
           \langle \mathcal{A}'\cup\mathcal{B}'\rangle_{\mathrm{Lie}}=\mathfrak{su}(2^N).
    \end{align*}
\end{Theorem}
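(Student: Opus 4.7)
The plan is to establish adjoint universality of $\mathcal{A}' \cup \mathcal{B}'$ for $i\mathcal{P}_N^*$, which, by the observation preceding the theorem (that Pauli commutators map $i\mathcal{P}_N^*$ into itself up to real scalars), is equivalent to $\langle \mathcal{A}' \cup \mathcal{B}'\rangle_{\mathrm{Lie}} = \mathfrak{su}(2^N)$. Concretely, I would verify correctness of \textsc{PauliCompiler} (\Cref{alg:paulicompiler}): on every target $iV \otimes W \in i\mathcal{P}_N^*$ it returns a sequence $G_1,\ldots,G_L \in \mathcal{A}' \cup \mathcal{B}'$ with $iV \otimes W \propto ad_{G_1}\cdots ad_{G_{L-1}}(G_L)$. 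The argument proceeds by case analysis on whether $V$ and $W$ are trivial.

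The easy case is $W = I$, which forces $V \neq I$. The hypothesis $\langle \mathcal{A}\rangle_{\mathrm{Lie}} = \mathfrak{su}(2^k)$, together with closure of Pauli commutators, implies that $\mathcal{A}$ is adjoint universal for $i\mathcal{P}_k^*$, so there exist $A_1,\ldots,A_s \in \mathcal{A}$ with $iV \propto ad_{A_1}\cdots ad_{A_{s-1}}(A_s)$, and the same identity lifts verbatim after tensoring each $A_j$ with $I^{\otimes(N-k)}$. The case where both $V$ and $W$ are nontrivial would rely on a subroutine \textsc{SubsystemCompiler}, whose correctness I would prove as a separate lemma: combining the product-universality decomposition $W \propto B_{i_1}\cdots B_{i_r}$ with the identity $[A,B]\propto AB$ for anticommuting Paulis, and threading the attached $U_{B_{i_j}}$'s on the $k$-qubit factor, it produces some $iV' \otimes W$ with the correct $W$ but an a priori uncontrolled $V' \in \mathcal{P}_k^*$. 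Prepending $ad$-maps of $A_j \otimes I$ acts trivially on the $(N-k)$-qubit factor, so \Cref{prop:Vmap} applied inside $i\mathcal{P}_k^*$ converts $iV'$ into $iV$, delivering $iV \otimes W$ up to a nonzero scalar.

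The delicate case is $V = I$ with $W \neq I$, since \textsc{SubsystemCompiler} intrinsically outputs $iV' \otimes W$ with $V'\neq I$ and no single application can cancel this factor. The fix is the commutator identity: write $iW \propto [W_1, W_2]$ for two anticommuting Paulis $W_1, W_2 \in i\mathcal{P}_{N-k}^*$ with $W_1 W_2 \propto W$ (always possible by picking two nonzero anticommuting factors of $W$ on distinct qubits), produce $iV_1'\otimes W_1$ and $iV_2'\otimes W_2$ via the previous case, use $\mathcal{A}'$ and \Cref{prop:Vmap} to align $V_1' = V_2' =: V'$, and then
\begin{equation*}
[V'\otimes W_1,\; V'\otimes W_2] = (V')^2 \otimes [W_1, W_2] = \pm I \otimes [W_1, W_2] \propto I \otimes W.
\end{equation*}
The main obstacle, and what I expect to demand the most care, is that this combined commutator must be expressed as a \emph{single} nested commutator $ad_{G_{\sigma(1)}}\cdots ad_{G_{\sigma(L-1)}}(G_{\sigma(L)})$, and arbitrary orderings can collapse to zero whenever a partial result commutes with the next generator applied to it. One therefore needs a reordering lemma (invoked explicitly by the algorithm) exhibiting a permutation $\sigma$ along which no intermediate cancellation occurs; I would prove it by induction on the sequence length, greedily selecting at each step a not-yet-used generator whose support anticommutes with the current accumulated Pauli string, using that every $U_B \in \mathcal{P}_k^*$ is nontrivial and that adjoint universality of $\mathcal{A}$ provides additional flexibility on the $k$-qubit factor. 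Once the three cases are assembled, every element of $i\mathcal{P}_N^*$ lies (up to a nonzero real scalar) in $\langle \mathcal{A}' \cup \mathcal{B}'\rangle_{[\cdot,\cdot]}$, and taking real linear combinations yields the full $\mathfrak{su}(2^N)$.
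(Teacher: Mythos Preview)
Your outline coincides with the paper's proof: it too establishes adjoint universality by verifying \textsc{PauliCompiler} through the same three-case split ($W=I$ via adjoint universality of $\mathcal{A}$; $V,W$ both nontrivial via \textsc{SubsystemCompiler} followed by \Cref{prop:Vmap}; $V=I$ via writing $iW\propto[W_1,W_2]$, building $V'\otimes W_1$ and $V'\otimes W_2$, and taking their commutator). The only substantive divergence is your plan for the reordering step. The paper does not attempt a general greedy-selection argument but instead proves a sharp four-term identity (\Cref{lemma:reordering}): if $[[A,B],[C,D]]\neq 0$ then it equals either $ad_D ad_C ad_A(B)$ or $ad_C ad_D ad_A(B)$, depending on whether $[AB,D]$ or $[AB,C]$ vanishes, and this is applied iteratively to flatten the commutator-of-commutators into a single nested chain. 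Your proposed greedy heuristic (``pick an unused generator anticommuting with the running product'') is not obviously sound as stated, since a poor early choice can leave only generators commuting with the current partial product; the paper's explicit two-branch rewrite sidesteps this and is what actually makes the permutation in line~11 of the algorithm go through.
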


The proof of this theorem is established by the correctness of the algorithm \textsc{PauliCompiler} presented in the main text and depicted again here in \Cref{fig:app-paulicompiler}. We state and prove the correctness of the algorithm as a lemma below, but before doing so, let us analyze the algorithm a bit more closely to develop an intuition for what is doing. To aid the following discussion, the algorithm as presented in \Cref{fig:app-paulicompiler} contains additional line numbering to the algorithm presented in the main text.

The algorithm relies on sets of operators $\mathcal{A}',\mathcal{B}'\subset i\mathcal{P}_N$ constructed from $\mathcal{A}\subset i\mathcal{P}^*_k$ and $\mathcal{B}\subset i\mathcal{P}_{N-k}$ as defined in Theorem~\ref{thm:app:construction}. Their particular structure is crucial for the workings of the algorithm. In particular, while $\mathcal{B}$ is product universal for $i\mathcal{P}_{N-k}$, each element of $\mathcal{B}'$ must also act non-trivially on the first $k$-qubit subspace. Only because of that can we make use of the adjoint universality of $\mathcal{A}$ to promote $\mathcal{A}'\cup \mathcal{B}'$ to adjoint universality on the full space using \textsc{SubsystemCompiler} of \Cref{appendix:lemma:alg} (cf.Fig.~\ref{fig:app-alg}). 

Given the input $P = iV\otimes W\in i\mathcal{P}^*_N$ where $V \in \mathcal{P}_{k}$ and $W \in \mathcal{P}_{N-k}$, \textsc{PauliCompiler} distinguishes three cases:
\begin{itemize}
    \item \textbf{Case 1:} $W=I$ (lines 2-4). Here, we make use of the fact that $\mathcal{A}$ is adjoint universal to produce any operator $iV\otimes I$ for $V\in \mathcal{P}_k$.
    \item \textbf{Case 2:} $W\neq I$ and $V\neq I$ (lines 13-16). In line 14, we rely on \textsc{SubsystemCompiler}$(iW)$ to compile a sequence $iV'\otimes W$ for some $V'\in \mathcal{P}^*_{k}$. Given Proposition~\ref{prop:app:Vmap}, we can then map this to an arbitrary $iV$ (up to a factor).
    \item \textbf{Case 3:} $V=I$ (lines 5-12). This is the most complicated case. Instead of generating $I\otimes W$ directly, we compile two $V_1'\otimes W_1$ and $V_1'\otimes W_2$ (using the same procedure as in case 2) such that $[V_1'\otimes W_1, V_1'\otimes W_2]\propto I\otimes W$. However, it is not obvious that for two operators $B=ad_{B_1}\cdots ad_{B_{r-1}}(B_r)$ and $C=ad_{C_1}\cdots ad_{C_{l-1}}(C_l)$, we can write $[B,C]=ad_{D_1}\cdots ad_{D_{m-1}}(D_m)$. As we prove in Lemma~\ref{lemma:reordering}, we can reorder the elements (as in line 11) such that the commutator of commutators is a nested commutator. 
\end{itemize}
To implement the algorithm, we write a table for $\mathcal{A}$ that lists for each $iV\in i\mathcal{P}_k$ a sequence $A_1,...,A_s$ as in line 3. Then, one writes another table that lists for each pair $V,V'\in\mathcal{P}_k$ a sequence $A_1,...,A_s$ as in lines 9 and 15. To generate the nested commutator from the commutators of commutators in line 11 (and implicitly in lines 9 and 15 due to Proposition~\ref{prop:app:Vmap}), one iteratively applies Lemma~\ref{lemma:reordering}. \textsc{SubsystemCompiler} can be implemented directly as described in the pseudocode. We will see in Proposition~\ref{prop:time} in Appendix~\ref{app:time} that the algorithm has $\mathcal{O}(N^2)$ time complexity.

\begin{Lemma}\label{lemma:alg:paulicompiler}
    For any input  $P\in i\mathcal{P}^*_{N}$ with $P=iV\otimes W$ for $V\in\mathcal{P}_k,W\in\mathcal{P}_{N-k}$ and sets $\mathcal{A}',\mathcal{B}'$ as in Theorem~\ref{thm:app:construction}, Algorithm~\ref{fig:app-paulicompiler} finishes and returns a finite, ordered set of operators $(G_1,...,G_{L})$ with $G_i\in \mathcal{A}'\cup \mathcal{B}'$ for all $i=1,...,L$ such that $\exists\alpha\in\mathbb{R}^*$ with $ad_{G_{1}}\cdots ad_{G_{L-1}}(G_L)= \alpha\: P$.
\end{Lemma}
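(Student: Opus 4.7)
The plan is to proceed by case analysis on the form of $P = iV \otimes W$, mirroring the three branches of \Cref{alg:paulicompiler}. For each branch I must verify that every returned operator lies in $\mathcal{A}' \cup \mathcal{B}'$ and that the nested commutator $ad_{G_1} \cdots ad_{G_{L-1}}(G_L)$ equals $\alpha P$ for some $\alpha \in \mathbb{R}^*$. Termination follows automatically because each subroutine called terminates: \textsc{SubsystemCompiler} does so by \Cref{appendix:lemma:alg}, \Cref{prop:Vmap} returns a finite sequence, and the outer searches over $\mathcal{A}$ and $S_{|\mathcal{G}|}$ range over finite sets.

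Branches 1 and 2 are straightforward. When $W = I$, adjoint universality of $\mathcal{A}$ (implied by $\langle\mathcal{A}\rangle_{\mathrm{Lie}} = \mathfrak{su}(2^k)$) directly yields a sequence inside $\mathcal{A}$ whose nested commutator is proportional to $iV$, and the identity $ad_{A \otimes I}(X \otimes I) = ad_A(X) \otimes I$ lifts this into $\mathcal{A}'$. When $V, W \neq I$, \textsc{SubsystemCompiler} returns a sequence $\mathcal{G}'$ whose nested commutator is $\beta \, iV' \otimes W$ for some $V' \in \mathcal{P}_k^*$, and \Cref{prop:Vmap} supplies $A_1, \ldots, A_s \in \mathcal{A}$ with $iV = \gamma \, ad_{A_1} \cdots ad_{A_s}(iV')$. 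Because each $A_j \otimes I \in \mathcal{A}'$ acts trivially on the right factor --- explicitly, $ad_{A \otimes I}(X \otimes W) = ad_A(X) \otimes W$ --- prepending $A_1 \otimes I, \ldots, A_s \otimes I$ to the \textsc{SubsystemCompiler} output realises $\beta\gamma \, iV \otimes W$, as required.

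Branch 3, where $V = I$, is the main obstacle. Elements of $\mathcal{A}'$ only modify the left factor, so $I \otimes W$ cannot be produced by left-side adjoints acting on a single operator of the form $V' \otimes W$. The algorithm sidesteps this by factoring $iW \propto [W_1, W_2]$, invoking \textsc{SubsystemCompiler} twice to produce nested-commutator expressions for $V_1' \otimes W_1$ and $V_2' \otimes W_2$, and using \Cref{prop:Vmap} to rotate $V_2'$ into $V_1'$. The algebraic identity $[A \otimes B, A \otimes C] = A^2 \otimes [B, C] = -I \otimes [B, C]$, valid for any $A \in i\mathcal{P}_k^*$, then ensures that the commutator of the two aligned outputs is proportional to $I \otimes W$. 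The delicate point is that the algorithm concatenates these subsequences into a single ordered list rather than forming an explicit commutator-of-nested-commutators; converting between the two is the content of \Cref{lemma:reordering}, which guarantees the existence of a permutation $\sigma \in S_{|\mathcal{G}|}$ whose nested commutator is nonzero. Because any nonzero nested commutator of Pauli strings is, up to a real scalar, the product of its generators (by iterated application of the Pauli commutation relations), and because the total product of the multiset $\mathcal{G}$ is readily computed from the construction above to be $\pm P$, such a $\sigma$ necessarily yields $\alpha P$ for some $\alpha \in \mathbb{R}^*$, completing the verification.
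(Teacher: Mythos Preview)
Your proposal is correct and follows essentially the same approach as the paper: a case split on the three branches of the algorithm, with the same supporting lemmas (\textsc{SubsystemCompiler} correctness, \Cref{prop:Vmap}, and \Cref{lemma:reordering}) invoked at the same points. The paper merely organises the argument by property (validity of each choice, membership in $\mathcal{A}'\cup\mathcal{B}'$, nonvanishing, correct target, termination) rather than by branch, but the content is the same.
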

\begin{proof}
    The algorithm is depicted in Fig.~\ref{fig:app-paulicompiler} with additional line numbering which will be referred to in this proof. This algorithm relies on the algorithm \textsc{SubsystemCompiler}, the correctness of which is established below in \Cref{appendix:lemma:alg}. Here, we rely on it to return a sequence of operators whose nested commutator is proportional to $iV'\otimes W$ given $W\in \mathcal{P}^*_{N-k}$ for some $V'\in i\mathcal{P}^*_k$.\\

    \textbf{Statement 1:} \emph{The algorithm is correct.} We show that the algorithm is correct for any input through the four following statements.\\

    \textbf{Statement 1.1:} \emph{All choices in the algorithm are valid.} There are five choices in the algorithms, in lines 3, 6, 9, 11 and 15.

    Line 3: Since by construction $\mathcal{A}\subset i\mathcal{P}_k$ is adjoint universal, for each $V\in \mathcal{P}_k$ there exist a finite sequence $A_1,...,A_s\in\mathcal{A}$ such that $iV\propto ad_{A_1}\cdots ad_{A_{s-1}}(A_s)$ is nonzero, with $s=\mathcal{O}(f(k))$ for some function $f$ of $k$. Since $k$ is constant, so is $s$.

    Line 6. For any Pauli operator $W\in\mathcal{P}^*_{N-k}$ there exist two other Pauli operators $W_1,W_2\in\mathcal{P}_{N-k}$ that anticommute and $W_1W_2\propto W$. This is trivially true for $N-k=1$ and easily extends to all other $N-k$ by induction. The fact that $W\neq I$ is ensured by the \textsc{if} statement in line 2.

    Line 9 and 15. Proven by Propostion~\ref{prop:app:Vmap} in Appendix~\ref{app:prop:Vmap}. From the proof of Proposition~\ref{prop:app:Vmap}, we can immediately infer that the chosen sequences may be finite and have length $s=\mathcal{O}(f(k))$ for some function $f$ of $k$. Since $k$ is constant, so is $s$.

    Line 11. By construction, we have 
    \begin{align}
        I\otimes iW&\propto  [V_1'\otimes W_1,V_1'\otimes W_2]\nonumber\\
        &\propto[V_1'\otimes W_1, ad_{A_1}\cdots ad_{A_s}(V_2')\otimes W_2]\nonumber\\
        &\propto [ad_{G'_{|\mathcal{G}'|}}\cdots ad_{G_2 '}(G_1') ,ad_{A_1\otimes I}\cdots ad_{A_s\otimes I}ad_{G''_{|\mathcal{G}''|}}\cdots ad_{G_2''}(G_1'')]\\
        &= ad_{ad_{G'_{|\mathcal{G}'|}}\cdots ad_{G_2 '}(G_1')}\left(ad_{A_1\otimes I}\cdots ad_{A_s\otimes I}ad_{G''_{|\mathcal{G}''|}}\cdots ad_{G_2''}(G_1'')\right)\neq 0\nonumber
    \end{align}
    This is a commutator of commutators instead of the nested commutator.
    As we prove in Lemma~\ref{lemma:reordering} in Appendix~\ref{app:prop:Vmap}, any nonzero commutator of commutators can be written as a nonzero nested commutator with reordered elements.\\

    \textbf{Statement 1.2:} \emph{The returned sequence $G_1,...,G_L$ is such that $G_1,...,G_L\in\mathcal{A}'\cup\mathcal{B}'$.} This is true by construction of the algorithm as the elements returned in lines 4, 12 and 16 have all been selected from $\mathcal{A}'$ or $\mathcal{B}'$ in lines 3, 9 and 15 while we have proven in Lemma~\ref{appendix:lemma:alg} that the sequence returned by \textsc{SubsystemCompiler} only contains elements from the same sets.\\
    
    \textbf{Statement 1.3:} \emph{The returned sequence $G_1,...,G_L$ is such that $ad_{G_1}\cdots ad_{G_{L-1}}(G_L)\neq 0$.} This is true by construction of the algorithm, specifically, by the choices made in lines 3, 6, 9 10, 15 and the fact that we have proven this for the sequence returned by \textsc{SubsystemCompiler} in Lemma~\ref{appendix:lemma:alg}.\\

    \textbf{Statement 1.4:} \emph{The returned sequence $G_1,...,G_L$ is such that $ad_{G_1}\cdots ad_{G_{L-1}}(G_L)\propto iV\otimes W$.} This is true by construction of the algorithm, specifically, it is ensured by the sequences collected in lines 3, 7, 8, 9, 14 and 15.\\

    \textbf{Statement 2:} \emph{The algorithm finishes.} Lemma~\ref{lemma:alg} shows that \textsc{SubsystemCompiler} finishes and returns a finite sequence. Since all choices are valid and return finite sequences, the algorithm must finish.
\end{proof}
The proof of the theorem follows by noting that, since the algorithm finishes for any input  $P=iV\otimes W\in i\mathcal{P}^*_{N}$, it must be that $\mathcal{A}'\cup\mathcal{B}'$ is adjoint universal and therefore, $\textrm{span}_{\mathbb{R}}\langle  \mathcal{A}'\cup\mathcal{B}'\rangle_{[\cdot,\cdot]}=\mathfrak{su}(2^N)$.

\section{The SubsystemCompiler algorithm}\label{appendix:lemma:alg}

\begin{figure}[ht!]
\begin{minipage}{\linewidth} 
\begin{algorithm}[H]
\caption{\textsc{SubsystemCompiler}($W$)}
\begin{algorithmic}[1]
\Require Target: $W\in i\mathcal{P}^*_{N-k}$. $\mathcal{A},\mathcal{A}',\mathcal{B},\mathcal{B}'$ as in Theorem~\ref{thm:app:construction}. \vspace{0.2cm}
\State $\textrm{Choose } U_1\otimes B_1,...,U_r\otimes B_r\in\mathcal{B}' \textrm{ s.t. } \prod_{i=1,...,r}B_i\propto W \wedge r=\mathcal{O}(N)$\Comment{Use product universality of $\mathcal{B}$.}
\State $i \gets r-1$
\State $\mathcal{G} = (U_r\otimes B_r,\: )$
\State $\mathcal{H} = \emptyset$
\While{$i\geq 1$}
\If{$\prod_{j\geq i}U_j\prod_{A\in\mathcal{H}}A\propto I$} \Comment{Use ``helper'' Paulis to ensure that no identities appear in the product.}
\State \begin{varwidth}[t]{\linewidth}
$\textrm{Choose $A_1\not\propto A_2\in\mathcal{A}$} \textrm{ s.t. } [A_1,U_i]\neq 0$ \par 
\hspace{3.31cm} $\wedge\: [A_2,U_i]\neq 0$ \par
\hspace{3.48cm} $\wedge\: [A_1,A_2]=0.$
\end{varwidth} \Comment{Define helper Paulis.}
\State $\mathcal{H}\gets (A_1,A_2)$
\State $\mathcal{G}\textrm{.append}(A_1\otimes I,A_2\otimes I)$ 
    \Comment{Add helper Paulis to sequence.}
\ElsIf{$\left[ U_{i}\otimes B_{i}, \prod_{j>i} U_j\otimes B_j \prod_{A\in\mathcal{H}} A\otimes I \right]=0$}
    \Comment{Use helper Pauli to make commuting elements anticommute.}
\State \begin{varwidth}[t]{\linewidth}
$\textrm{Choose $A'\in\mathcal{A}$} \textrm{ s.t. } [A',\prod_{j>i}U_j\prod_{A\in\mathcal{H}}A]\neq 0$ \par
\hspace{0.5cm}$\wedge\: [A',U_{i}]\neq 0.$ \par
\hspace{2.13cm}$\wedge\: A'\not\propto \prod_{j\geq i}U_j\prod_{A\in\mathcal{H}}A.$
\end{varwidth} 
    \Comment{Define helper Pauli.}
\State $\mathcal{H}\gets A'$
\State $\mathcal{G}.\textrm{append}(A'\otimes I)$ 
    \Comment{Add helper Pauli to sequence.}
\Else
\State $\mathcal{G}.\textrm{append}(U_i\otimes B_i)$ 
    \Comment{Add next element of the product to the sequence.}
\State $i\gets i-1$
\EndIf
\EndWhile\\
\Return $\mathcal{G}=(G_1,...,G_{|\mathcal{G}|})$ 
    \Comment{$[G_{|\mathcal{G}|},[ \cdots,[G_2,G_1]\cdots]]=
\alpha V'\otimes W$ for $V'\in\mathcal{P}^*_k$ and $\alpha\in\mathbb{R}^*$}
\end{algorithmic}
\end{algorithm}
\end{minipage}
\caption{\textsc{SubsystemCompiler} with line numbering and comments for reference in the proof of correctness.}
\label{fig:app-alg}
\end{figure}

The proof of \Cref{thm:app:construction} is established via the algorithm \textsc{PauliCompiler}, which in turn calls the algorithm \textsc{SubsystemCompiler} as a subroutine. Here we prove the correctness of \textsc{SubsystemCompiler} through the following Lemma:
    \begin{Lemma}\label{lemma:alg}
        For any input  $W\in i\mathcal{P}^*_{N-k}$ and sets $\mathcal{A}',\mathcal{B}'$ as in Theorem~\ref{thm:app:construction}, \textsc{SubsystemCompiler} finishes and returns a finite, ordered set of operators $\mathcal{G}=(G_1,...,G_{|\mathcal{G}|})$ with $G_i\in \mathcal{A}'\cup \mathcal{B}'$ for all $i=1,...,|\mathcal{G}|$ such that $\exists\alpha\in\mathbb{R}^*$ with $ad_{G_{|\mathcal{G}|}}\cdots ad_{G_2}(G_1)= \alpha V'\otimes W$ for some $V'\in \mathcal{P}^*_k$.
    \end{Lemma}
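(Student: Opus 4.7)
The plan is to prove correctness in three logical steps: (i) every choice the algorithm makes can actually be realised, (ii) the \textbf{while} loop terminates, and (iii) the returned sequence has the claimed nested-commutator structure and correct membership in $\mathcal{A}'\cup\mathcal{B}'$. The algebraic engine is the following: if Pauli strings $G_1,\dots,G_m$ satisfy the anticommutation condition that each $G_{j+1}$ anticommutes with the partial product $G_j G_{j-1}\cdots G_1$, then iterating $[P,Q]=2PQ$ for anticommuting Paulis gives $ad_{G_m}\cdots ad_{G_2}(G_1)=2^{m-1}\,G_m G_{m-1}\cdots G_1$, which is nonzero. My main task is therefore to show that the algorithm maintains this invariant while ensuring that the final product equals, up to a nonzero real scalar, an element of the form $V'\otimes W$ with $V'\in\mathcal{P}_k^*$.

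\textbf{Validity of the choices.} Line 1 uses the product-universality of $\mathcal{B}$ to express $W$ as a length-$\mathcal{O}(N-k)$ product of elements of $\mathcal{B}$. For the helper Paulis in lines 7 and 11 I would use the symplectic picture of \Cref{app:symplectic_explanation}. Since $\langle\mathcal{A}\rangle_{\mathrm{Lie}}=\mathfrak{su}(2^k)$, the image of $\mathcal{A}$ in $\mathbb{F}_2^{2k}$ must span the whole space: otherwise there would be a nonzero Pauli commuting with every element of $\mathcal{A}$ and hence with every iterated commutator, contradicting adjoint universality. Each anticommutation or commutation requirement translates into a single linear constraint on $\mathbb{F}_2^{2k}$, and each ``not proportional'' requirement removes a single vector; since $\mathcal{A}$ spans and $k\geq 2$, a direct dimension count shows that the line 7 requirements (two elements anticommuting with $U_i$, commuting with each other, nonproportional) and the line 11 requirements (one element anticommuting with two specified strings and not proportional to a third) are always solvable inside $\mathcal{A}$.

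\textbf{Termination and loop invariants.} I would maintain two invariants at the top of every iteration: (I1) the partially built sequence $\mathcal{G}$ satisfies the anticommutation condition above, so the current nested commutator is nonzero and equal (up to a nonzero real scalar) to the ordered product of the appended operators; (I2) the would-be first-subsystem product $\prod_{j\geq i}U_j\prod_{A\in\mathcal{H}}A$ is not proportional to $I$. The \textbf{if} branch on line 6 fires exactly when appending $U_i\otimes B_i$ would violate (I2); choosing $A_1\not\propto A_2$ with $A_1A_2\not\propto I$ changes that product by $A_1A_2$ and restores (I2). The \textbf{elif} branch fires exactly when appending $U_i\otimes B_i$ would violate (I1); the helper $A'$ modifies the accumulated first-subsystem factor so that $U_i\otimes B_i$ now anticommutes with the current operator. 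After at most three helper additions per value of $i$ both conditions become false, the \textbf{else} branch executes, $i$ decrements, and the loop terminates after $\mathcal{O}(N)$ total iterations with $|\mathcal{G}|=\mathcal{O}(N)$.

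\textbf{Conclusion and main obstacle.} Once termination and the invariants are in hand, the conclusion is immediate: the output $\mathcal{G}=(G_1,\dots,G_{|\mathcal{G}|})$ consists entirely of elements of $\mathcal{A}'\cup\mathcal{B}'$ by construction; the nested commutator evaluates to $2^{|\mathcal{G}|-1}$ times the product of the $G_j$; the second-subsystem factor is $\propto W$ because helpers act as $I$ there and the $B_i$ were chosen in line 1 so that their ordered product is proportional to $W$; and the first-subsystem factor $V'$ is nonidentity by (I2) and thus lies in $\mathcal{P}_k^*$. I expect the most delicate step to be the careful verification in \emph{Termination and loop invariants} that each of the two conditional branches truly cannot re-trigger immediately after a helper addition, because the conditions mix commutation relations on both subsystems and one must track how the $B_i$ factor on the second subsystem interacts with the first-subsystem manipulations. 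A secondary subtlety is the symplectic bookkeeping in \emph{Validity of the choices} for line 11, where three simultaneous constraints must be satisfied inside $\mathcal{A}$ itself; this is precisely where the full strength of $\langle\mathcal{A}\rangle_{\mathrm{Lie}}=\mathfrak{su}(2^k)$, rather than a weaker spanning hypothesis, is needed.
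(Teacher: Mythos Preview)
Your overall architecture---validate the choices, prove termination via invariants (I1)--(I2), then read off membership, nonvanishing, and the tensor form from the anticommutation chain---matches the paper's proof closely; your ``at most three helper additions per value of $i$'' is exactly the content of the paper's Statements 2.1--2.3, and your conclusion paragraph corresponds to Statements 1.2--1.4.

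The genuine gap is in \emph{Validity of the choices}. You argue that because the image of $\mathcal{A}$ spans $\mathbb{F}_2^{2k}$, a dimension count guarantees that the helper Paulis required by lines 7 and 11 can be found \emph{inside $\mathcal{A}$}. But $\mathcal{A}$ is a finite set with possibly only $2k+1$ elements, not a subspace; spanning does not force it to meet a prescribed affine coset, and the line-7 condition is not even affine (it couples two unknowns through $[A_1,A_2]=0$). Concretely, take $k=2$ and $\mathcal{A}=\{iX_1,iZ_1,iX_2,iZ_2,iZ_1X_2\}$, which is Clifford-equivalent (via $H_2$) to the set of Proposition~\ref{prop:su_4} and hence adjoint universal. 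If $U_i=Z_1$, the only element of $\mathcal{A}$ anticommuting with $U_i$ is $X_1$, so line 7's demand for two distinct such elements cannot be met inside $\mathcal{A}$; your dimension count would wrongly predict success.

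The paper does not attempt a dimension count. It instead establishes the existence of suitable helpers in all of $\mathcal{P}_k^*$ by explicit construction: for line 7 it checks $k=2$ by hand and inducts on $k$; for line 11 it splits on whether the two target Paulis commute or anticommute and in each case exhibits a concrete $A'$. You correctly flag this step as the most delicate, but the spanning-plus-dimension argument you sketch does not close it; you would need either to replicate the paper's explicit constructions in $\mathcal{P}_k^*$, or to give a separate argument (genuinely using adjoint universality beyond spanning) that the helpers can always be realised as nested commutators of elements of $\mathcal{A}$ and then invoke the reordering Lemma~\ref{lemma:reordering}.
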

    \begin{proof}
        The algorithm is depicted in Fig.~\ref{fig:app-alg} with line numbering that will be referred to in this proof. \\

        \textbf{Statement 1:} \emph{The algorithm is correct.}
        Let us start by showing that the algorithm is correct, i.e., it returns a sequence of operators $G_1,...,G_{|\mathcal{G}|}$ such that $ad_{G_{|\mathcal{G}|}}\cdots ad_{G_2}(G_1)\propto V'\otimes W$ for some $V'\in \mathcal{P}^*_k$ is nonzero.
        Given a target Pauli $W\in i\mathcal{P}_{N-k}$, in line 1, we first find a sequence of operators $U_1\otimes B_1,...,U_r\otimes B_r\in\mathcal{B}'$ such that $\prod_{i=1,...,r}B_i\propto W$. This is always guaranteed to exist since $\mathcal{B}$ is product universal. As we discuss in Appendix~\ref{appendix:thm:optimal}, the optimal choice is $r=\mathcal{O}(N)$. However, it might be that $ad_{U_1\otimes B_1}\cdots ad_{U_{r-1}\otimes B_{r-1}}(U_r\otimes B_r)=0$. Therefore, the algorithm goes iteratively through this sequence, starting with $r$ and adds operators $A\otimes I\in\mathcal{A}'$ between terms of the product to ensure that the nested commutator never gives zero. Let us prove the correctness of these insertions through four further statements:\\

        \textbf{Statement 1.1:} \emph{All choices in the algorithm are valid.} There are three choices being made in the algorithm, in line 1, 7 and 11. We already proved the existence of the choice in line 1 with $r=\mathcal{O}(N)$.
        
        Line 7: For simplicity, we identify $A\equiv A_1$, $B\equiv A_2$ and $C\equiv U_i$. Let $C\in \mathcal{P}^*_k$ be arbitrary. ($U_i\neq I$ by construction of $\mathcal{B}'$.) Then, we show that there exist $A,B\in \mathcal{P}^*_k$ such that $A\neq B$ and $[A,C]\neq 0$ and $[B,C]\neq 0$ and $[A,B]=0$ iff $k>1$. This is not true for $k=1$ and  true for $k=2$. The rest follows by induction. 
        To see this, consider $C_1\otimes C_2\in\mathcal{P}^*_{k+1}$ for $C_1\in \mathcal{P}_k$ and $C_2\in\mathcal{P}_1$. If $C_1\neq I$ there exist such $A,B$ by assumption. If $C_1=I$, then choose $A=X\otimes X \otimes A_2$ and $B=Z\otimes Z \otimes A_2$ with $A_2\in\mathcal{P}_1$ such that $[A_2,C]\neq 0$ (where we assumed an identity on all but the first two and last qubits).

        Line 11: 
        For simplicity, we identify $A\equiv A'$, $B\equiv \prod_{j>i}U_j\prod_{A\in\mathcal{H}}A$ and $C\equiv U_i$. Let $B\neq C\in\mathcal{P}^*_k$ be arbitrary. (As we will see below, for any $i$, if we get to line 11, then $B\neq I$.) Then, we show that there exist $A\in\mathcal{P}^*_k$ such that $[A,B]\neq 0$ and $[A,C]\neq 0$ and $A\not\propto BC$ iff $k>1$. 
        Note that for any two Pauli operators $B\neq C\in\mathcal{P}^*_k$, we can find another Pauli $A\in\mathcal{P}_k$ that anticommutes with both. However, for $k=1$ this Pauli must be the product of the other two. Consider $k>1$ for two commuting Pauli operators $B\neq C$. Since then, $[BC,B]=0$, any such anticommuting $A$ cannot be equal to $BC$.  Now, consider two mutually anticommuting Pauli operators $B\neq C$. Choose $D=BC$ which anticommutes with both $B$ and $C$. Since $k>1$, there exist a Pauli operator $E\in\mathcal{P}_k^*$ that commutes with both $B$ and $C$, allowing us to choose $A=DE$. 
        
        Let us now see that the requirement $B\neq I$ is given, i.e., $\prod_{j> i}U_j\prod_{A\in\mathcal{H}}A\neq I$ at line 11 for any iteration $i$. Clearly, this is true for the first iteration $i=r-1$. Now note that at iteration $i$ of the algorithm, the next step can only be reached through line 16 where $i$ becomes $i-1$. This line can only be accessed once lines 6 and 10 evaluate to false. This means that at line 16, $\prod_{j\geq i}U_j\prod_{A\in\mathcal{H}}A\neq I$. Going to the next iteration $i\leftarrow i-1$, at line 6 the previous inequality reads $\prod_{j> i}U_j\prod_{A\in\mathcal{H}}A\neq I$. If line 6 evaluates to false, this is also true at line 11. If line 6 evaluates to true, then the constraints on $A_1,A_2$ are such that $A_1A_2\neq U_i$ and therefore $\prod_{j> i}U_j\prod_{A\in\mathcal{H}\cup \{A_1,A_2\}}A\neq I$. This can be seen as follows: Assume $A_1A_2=U_i$. Then, $[A_1,A_2]=0$ and therefore, $[A_1,A_1A_2]=0$. However, $[A_1,A_1A_2]=[A_1,U_1]\neq 0$, which is a contradiction. That is, we can assume $B\neq I$ above.\\

        \textbf{Statement 1.2:} \emph{$G_1,...,G_{|\mathcal{G}|}\in\mathcal{A}'\cup\mathcal{B}'$.} This is true by construction of the algorithm as the elements appended to the returned sequence $\mathcal{G}$ in lines 3, 9, 13 and 15 have been selected from $\mathcal{A}'$ or $\mathcal{B}'$ in lines 1, 7 and 11. \\
        
        \textbf{Statement 1.3:} $ad_{G_{|\mathcal{G}|}}\cdots ad_{G_2}(G_1)\neq 0$. We must show that for any $l=1,...,|\mathcal{G}|$, $[G_l, \prod_{j<l}G_j]\neq 0$. There are two cases: $G_l\in \mathcal{B}'$ and $G_l\in \mathcal{A}'$. 
        
        Let $G_l\in \mathcal{B}'$. This means that the algorithm has reached line 15 for some iteration. Therefore, we know that line 10 evaluated as false for some $G_l=U_i\otimes B_i$. At that iteration $i$ of the algorithm, $\prod_{j>i}U_j\otimes B_j\prod_{A\in\mathcal{H}}A\otimes I=\alpha\prod_{j<l}G_l$ for some $\alpha\in\mathbb{R}^*$.  Therefore, $[G_l, \prod_{j<l}G_l]\neq 0$.

        Let $G_l\in \mathcal{A}'$. This operator has been added at some iteration $i$ in line 7 or 11 of the algorithm. Assume first that $G_l$ has been added at line 7 at iteration $i$. At that iteration, we have $\prod_{j<l}G_j=\alpha \prod_{j>i}U_j\otimes B_j\prod_{A\in\mathcal{H}}A\otimes I$ for some $\alpha\in\mathbb{R}^*$. At line 7, we also have $U_i\propto \prod_{j>i}U_j\prod_{A\in\mathcal{H}}A$. $A_1,A_2$ are both chosen such that they mututally commute, but both anticommute with $U_i$ and hence, in both cases, $[G_l,\prod_{j<l}G_j]\neq 0$. Now assume $G_l$ has been added at line 11 at iteration $i$. At that iteration, we have $\prod_{j<l}G_j=\alpha \prod_{j>i}U_j\otimes B_j\prod_{A\in\mathcal{H}}A\otimes I$ for some $\alpha\in\mathbb{R}^*$. $A'$ is chosen such that it anticommutes with $\prod_{j>i}U_j\otimes B_j\prod_{A\in\mathcal{H}}A\otimes I$ and therefore, $[G_l,\prod_{j<l}G_j]\neq 0$.\\

        \textbf{Statement 1.4:}  $ad_{G_{|\mathcal{G}|}}\cdots ad_{G_2}(G_1)\propto V'\otimes W$. Since $ad_{U_1\otimes B_1}\cdots ad_{U_{r-1}\otimes B_{r-1}}(U_r\otimes B_r)\propto V''\otimes W$ and the algorithm only adds operators $A\otimes I\in\mathcal{A}'$, it follows that $ad_{G_{|\mathcal{G}|}}\cdots ad_{G_2}(G_1)\propto V'\otimes W$.\\

        \textbf{Statement 2:} \emph{The algorithm finishes.}
        Let us conclude by showing that the algorithm finishes, i.e., it returns the output in finite steps.
        The algorithm would not finish if the while loop always evaluates as true, i.e. $i$ is never reduced to 0. $i$ is reduced only in line 16, so we have to make sure that for any $i$, both line 6 and 10 will eventually evaluate as false. Let us show this by proving the following statements:\\
        
        \textbf{Statement 2.1:} \emph{If line 6 evaluates as true at some iteration, adding $A_1,A_2$ (as chosen in line 7) to $\mathcal{H}$ will make it evaluate as false in the following loop.} If at some iteration $i$, $\prod_{j\geq i}U_j\prod_{A\in\mathcal{H}}A\propto I$, then $A_1A_2\prod_{j\geq i}U_j\prod_{A\in\mathcal{H}}A\propto A_1A_2$. Since $A_1\neq A_2$ as by the choice in line 7, the statement is true.\\
        
        \textbf{Statement 2.2:} \emph{If line 10 evaluates as true at some iteration, adding $A'$ (as chosen in line 11) to $\mathcal{H}$ will make it evaluate as false in the following loop.} This is due to the first constraint in line 11. If at some iteration $i$, $\left[ U_{i}\otimes B_{i}, \prod_{j>i} U_j\otimes B_j \prod_{A\in\mathcal{H}} A\otimes I \right]=0$, then, $\left[ U_{i}\otimes B_{i}, A'\otimes I\prod_{j>i} U_j\otimes B_j \prod_{A\in\mathcal{H}} A\otimes I \right]\neq 0$ because $[U_i\otimes B_i,A_i\otimes I]\neq 0$. Hence, the statement is true.\\
        
        \textbf{Statement 2.3:} \emph{If line 10 evaluates as true at some iteration, adding $A'$ (as chosen in line 11) to $\mathcal{H}$ will make line 6 evaluate as false in the following loop.} This is due to the third constraint in line 11. Since $A'\not\propto \prod_{j\geq i}U_j\prod_{A\in\mathcal{H}}A$ as by the choice in line 11, $A'\prod_{j\geq i}U_j\prod_{A\in\mathcal{H}}A\not\propto I$ and therefore, line 6 evaluates as false.\\
        
        Since all choices are provably valid and return finite sequences, line 16 will deterministically be reached and the algorithm finishes. This also implies that the returned sequence is finite.
    \end{proof}

\section{Proof of Proposition~1} \label{app:proof_prop_su4}

Recall the statement of Proposition~1 from the main text:
\begin{Proposition}  For $\mathcal{A} = \{iX_{1}, iZ_{1}, iX_{2}, iZ_{2}, iZ_{1} \otimes Z_{2}\}$, we have that
\begin{align}
\langle \mathcal{A} \rangle_{\textrm{Lie}} = \mathfrak{su}(4).
\end{align}
\end{Proposition}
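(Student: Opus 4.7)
The plan is to verify the claim by direct computation, exploiting the observation made in \Cref{sec:results} that for $\mathcal{A}\subset i\mathcal{P}_N^*$, the condition $\langle\mathcal{A}\rangle_{\mathrm{Lie}}=\mathfrak{su}(2^N)$ is equivalent to $\langle\mathcal{A}\rangle_{[\cdot,\cdot]}$ containing a nonzero real multiple of every element of $i\mathcal{P}_N^*$ (i.e.\ adjoint universality). For $N=2$ there are only $|i\mathcal{P}_2^*|=15$ basis elements to account for, so the entire proof reduces to exhibiting, for each of them, a sequence of nested commutators of elements of $\mathcal{A}$ that produces it up to a nonzero real scalar.

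First I would produce the two missing single-qubit generators: since $[iX_j,iZ_j]=2iY_j$ for $j=1,2$, the single-qubit operators $iY_1$ and $iY_2$ are obtained in a single commutator each, giving all six one-qubit Paulis. Next, I would use $iZ_1Z_2$ as the seed for the two-qubit sector. Commuting it with the four operators $iX_1,iY_1,iX_2,iY_2$ yields (up to nonzero real scalars) the four "mixed" strings
\begin{equation*}
iY_1Z_2,\quad iX_1Z_2,\quad iZ_1Y_2,\quad iZ_1X_2,
\end{equation*}
since in each case the Pauli on the relevant qubit anticommutes with $Z$ and the other qubit is unaffected.

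The remaining four basis elements $iX_1X_2,\,iX_1Y_2,\,iY_1X_2,\,iY_1Y_2$ are then obtained one commutator deeper. Commuting $iX_1Z_2$ with $iX_2$ (resp.\ $iY_2$) replaces the $Z$ on the second qubit by a $Y$ (resp.\ $X$), giving $iX_1Y_2$ and $iX_1X_2$; similarly, commuting $iY_1Z_2$ with $iX_2$ and $iY_2$ yields $iY_1Y_2$ and $iY_1X_2$. Together with the original generator $iZ_1Z_2$ and the nine operators already constructed, this accounts for all $15$ elements of $i\mathcal{P}_2^*$, so $\mathrm{span}_{\mathbb{R}}\langle\mathcal{A}\rangle_{[\cdot,\cdot]}=\mathfrak{su}(4)$.

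There is no real obstacle here beyond bookkeeping: the sole things to be careful about are (i) tracking the $i$ factors and signs coming from $[A\otimes B,C\otimes D]$ when $A,C$ and/or $B,D$ fail to commute, and (ii) making sure each chosen commutator is nonzero so that it actually contributes a basis element rather than $0$. The tree of commutators laid out above is designed precisely so that each intermediate product is nontrivial, and its depth is at most three starting from $iZ_1Z_2$.
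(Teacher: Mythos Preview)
Your proposal is correct and follows essentially the same approach as the paper: both proofs proceed by direct computation, first obtaining $iY_1,iY_2$ from $[iX_j,iZ_j]$, then building all nine weight-two strings by successively commuting single-qubit generators against $iZ_1Z_2$ and its descendants. The only cosmetic difference is that the paper keeps every adjoint map indexed by an element of $\mathcal{A}$ itself (e.g.\ it writes $[iZ_1,[iX_1,iZ_1Z_2]]$ to reach $iX_1Z_2$), whereas you commute with the derived elements $iY_1,iY_2$ directly; since $\langle\mathcal{A}\rangle_{\mathrm{Lie}}$ is closed under the bracket this is immaterial.
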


\begin{proof} The proof proceeds by direct calculation. Note that we have
\begin{align}
i\mathcal{P}_{2}^{*} = \{iA \otimes B | A,B \in \{I,X,Y,Z \} \} \setminus \{iI \otimes I \}.
\end{align}
Since $[iX, iZ] = -2iY$, $\langle \mathcal{A} \rangle_{[\cdot, \cdot]}$ clearly contains all elements of $i\mathcal{P}_{2}^{*}$ with a single non-identity term. By noting that
\begin{gather}
[iX_{1}, iZ_{1}\otimes Z_{2}] = -2iY_{1} \otimes Z_{2}, \\
[iZ_{1},[iX_{1}, iZ_{1} \otimes Z_{2}] = 4i X_{1} \otimes Z_{2},
\end{gather}
and similarly for taking commutators with $X_{2}$ and $Z_{2}$, we see that $\langle \mathcal{A} \rangle_{[\cdot, \cdot]}$ contains all Pauli strings with two non-identity terms and hence $\langle \mathcal{A} \rangle_{[\cdot, \cdot]} = i\mathcal{P}_{2}^{*}$, proving the claim.
\end{proof}

\section{The Pauli Group and $\mathbb{F}_{2}^{2N}$} \label{app:symplectic_explanation}

In this appendix, we explain the connection between the Pauli group on $N$ qubits and the symplectic space $\mathbb{F}_{2}^{2N}$ ($\mathbb{F}_{2}$ denotes the finite field containing two elements) alluded to in the main text and used in a number of proofs below. 

Let us denote by $\mathcal{G}_{\mathcal{P}_{N}}$ the Pauli group on $N$ qubits defined via
\begin{align}
\mathcal{G}_{\mathcal{P}_{N}} := \left\{ e^{i\theta} P | \theta = 0,\frac{\pi}{2}, \pi, \frac{3\pi}{2}; P \in \mathcal{P}_{N} \right\}
\end{align}
with the group multiplication given by standard matrix multiplication. As in the main text, $\mathcal{P}_{N}$ denotes the set of Pauli strings, i.e.
\begin{align}
\mathcal{P}_{N} := \big\{P = p_{1} \otimes p_{2} \otimes ... \otimes p_{N} | p_{i} \in \{I, X, Y, Z\}, \text{ for each } i = 1, ..., N \big\}.
\end{align}
Since the single qubit Pauli operators satisfy $Y = iXZ$, it is possible to write any element $e^{i\theta} P \in \mathcal{G}_{\mathcal{P}_{N}}$ as $e^{i\theta'}P_{X}P_{Z}$ where $P_{X} \in \mathcal{P}_{N}$ is the Pauli string consisting of $X$ terms at every tensor factor where $P$ contained either an $X$ or a $Y$ and identities elsewhere, $P_{Z} \in \mathcal{P}_{N}$ is similarly defined with $Z$ terms at every factor where $P$ contained either a $Z$ or $Y$ and identities elsewhere, and $\theta' = \theta + y \frac{\pi}{2}$ where $y$ is the number of $Y$ terms contained in $P$. We say that elements of $\mathcal{G}_{\mathcal{P}_{N}}$ written in this way are in \textit{standard form}.

Let $e^{i\theta_{P}}P_{X}P_{Z}$ and $e^{i\theta_{Q}}Q_{X}Q_{Z}$ be elements of $\mathcal{G}_{\mathcal{P}_{N}}$ in standard form. We see that the standard from of their product can be written as
\begin{align}
\left(e^{i\theta_{P}}P_{X}P_{Z} \right)\left( e^{i\theta_{Q}}Q_{X}Q_{Z}\right) = e^{i(\theta_{P} + \theta_{Q} + \lambda_{PQ}\pi)}\left(P_{X}Q_{X}\right)\left(P_{Z}Q_{Z}\right)
\end{align}
where $\lambda_{PQ} \in \{0,1\}$ is the parity of the number of tensor factors where $P_{Z}$ and $Q_{X}$ both contain a non-identity term. Accordingly, the term $\lambda_{PQ}$ contains all the information about whether $e^{i\theta_{P}}P_{X}P_{Z}$ and $e^{i\theta_{Q}}Q_{X}Q_{Z}$ commute or not: $\lambda_{PQ} = 0$ if they commute and $\lambda_{PQ} = 1$ otherwise.

Clearly, $\mathcal{P}_{N}$ is a subset of $\mathcal{G}_{\mathcal{P}_{N}}$. When $\mathcal{S} \subset \mathcal{P}_{N}$ is such that (i) $\mathcal{S}$ is a subgroup of $\mathcal{G}_{\mathcal{P}_{N}}$, (ii) $-I \notin \mathcal{S}$ and (iii) all elements of $\mathcal{S}$ commute, then $\mathcal{S}$ is called a stabiliser subgroup \cite{nielsen2010quantum} and is used throughout many areas of quantum computing. Many results arising in e.g., quantum error correction make use of the number of independent generators of the group $\mathcal{S}$, that is, the smallest number of elements of $\mathcal{S}$ that generate $\mathcal{S}$ and are not products of each other. A convenient way to check independence is via the map $f:\mathcal{S} \rightarrow \mathbb{F}_{2}^{2N}$ defined via
\begin{align}
e^{i\theta}P_{X}P_{Z} \mapsto \boldsymbol{v}_{P}
\end{align}
where $e^{i\theta}P_{X}P_{Z}$ is the standard form of an element of $\mathcal{S}$ and $\boldsymbol{v}_{P}$ is the vector containing a $1$ at entry $j$ if the $j$th tensor factor of $P_{X}$ contains an $X$, a $1$ at entry $N+j$ if the $j$th tensor factor of $P_{Z}$ contains a $Z$, and zeros otherwise. It follows that the above map is a group isomorphism where $\mathbb{F}_{2}^{2N}$ is considered as an additive group, meaning that the product of elements in $\mathcal{S}$ corresponds to the sum of elements in $\mathbb{F}_{2}^{2N}$. In fact, it is convenient to take $\mathbb{F}_{2}^{2N}$ to be a vector space over the finite field $\mathbb{F}_{2}$. By doing so, the independence of elements of $\mathcal{S}$ can be rephrased as linear independence of the corresponding elements in $\mathbb{F}_{2}^{2N}$. 

Let us now consider the map $f$ extended to all of $\mathcal{G}_{\mathcal{P}_{N}}$. This map remains a a group homomorphism however it is now many-to-one since it sends every element of $\mathcal{G}_{\mathcal{P}_{N}}$ with standard forms that differ only in the angle $\theta$ to the same element of $\mathbb{F}_{2}^{2N}$. Nevertheless, when restricted to $\mathcal{P}_{N} \subset \mathcal{G}_{\mathcal{P}_{N}}$ or $i\mathcal{P}_{N}^{*}$, this map remains useful for our purposes as many of the results presented in the main text make statements regarding equivalence up to proportionality. We make this intuition more precise below.

Before doing so, let us add one last piece of structure to $\mathbb{F}_{2}^{2N}$. Define the matrix $\Lambda \in M_{2N}(\mathbb{F}_{2})$ to be 
\begin{align}
\Lambda = \begin{bmatrix} \boldsymbol{0} & I_{N} \\ I_{N} &\boldsymbol{0}
\end{bmatrix}
\end{align}
where $\boldsymbol{0} \in M_{N}(\mathbb{F}_{2})$ is the matrix of all zeros and $I_{N} \in M_{N}(\mathbb{F}_{2})$ is the identity matrix. When equipped with the symplectic bilinear form $\Lambda: \mathbb{F}_{2}^{2N} \times \mathbb{F}_{2}^{2N} \rightarrow \mathbb{F}_{2}$ defined via $(\boldsymbol{v}, \boldsymbol{w}) \mapsto \boldsymbol{v}^{\top}\Lambda \boldsymbol{w}$, the vector space $\mathbb{F}_{2}^{2N}$ becomes a symplectic vector space. This symplectic form captures the commutativity of elements of $\mathcal{G}_{\mathcal{P}_{N}}$ exactly: $P, Q \in \mathcal{G}_{\mathcal{P}_{N}}$ commute if and only if $\boldsymbol{v} := f(P)$ and $\boldsymbol{w} := f(Q)$ are such that $\boldsymbol{v}^{\top}\Lambda \boldsymbol{w} = 0$. 

The following lemma establishes the justification for being able to work with $\mathbb{F}_{2}^{2N}$ for proving certain statements made in the main text. For any subset $\mathcal{C} \subset \mathcal{G}_{\mathcal{P}_{N}}$, let $f(\mathcal{C}) \subset \mathbb{F}_{2}^{2N}$ denote the image of $\mathcal{C}$. Note that, this in particular means that $f(i\mathcal{P}_{N}^{*}) = \mathbb{F}_{2}^{2N}\setminus \{\boldsymbol{0}\}$ where $\boldsymbol{0} \in \mathbb{F}_{2}^{2N}$ is the vector containing only $0$ entries. For $\mathcal{A} \subset i \mathcal{P}_{N}^{*}$, let us define
\begin{align}
\langle \mathcal{A} \rangle_{\| \cdot \|_{2}} := \left\{ \frac{A}{\|A\|_{2}} | A \in \langle \mathcal{A} \rangle_{[\cdot, \cdot]} \right\}
\end{align}
where $\|\cdot\|_{2}$ denotes the spectral norm (in fact other choices of norm also work). Note that $\langle \mathcal{A} \rangle_{\| \cdot \|_{2}} \subset i\mathcal{P}_{N}^{*}$. We have:
\begin{Lemma} \label{lem:sympelctic_justification} Let $\mathcal{A}, \mathcal{B} \subset i \mathcal{P}_{N}^{*}$. Then
\begin{enumerate}[label=(\roman*)]
    \item $\mathcal{A}$ is adjoint universal for $i\mathcal{P}_{N}^{*}$ if and only if $f(\langle \mathcal{A} \rangle_{\| \cdot \|_{2}})$ contains $\mathbb{F}_{2}^{2N}\setminus \{\boldsymbol{0}\}$.
    \item $\mathcal{B}$ is product universal for $i\mathcal{P}_{N}^{*}$ if and only if $f(\mathcal{B})$ contains a basis of $\mathbb{F}_{2}^{2N}$.
\end{enumerate}
\end{Lemma}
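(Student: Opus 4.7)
\textbf{Proof proposal for \Cref{lem:sympelctic_justification}.} My plan is to reduce both statements to two simple structural facts about $i\mathcal{P}_{N}^{*}$ together with the surjectivity of $f$ from $i\mathcal{P}_{N}^{*}$ onto $\mathbb{F}_{2}^{2N}\setminus\{\boldsymbol{0}\}$, with the only ambiguity being a phase in $\{\pm 1,\pm i\}$ that $f$ discards. First, for any $iP,iQ \in i\mathcal{P}_{N}^{*}$ one has $[iP,iQ] = -[P,Q]$, which is $0$ if $P,Q$ commute and otherwise equals $-2PQ$; since then $PQ$ is antihermitian it is $\pm i$ times a Pauli string, so $[iP,iQ] \in \mathbb{R}^{*}\cdot i\mathcal{P}_{N}^{*}$. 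Iterating shows that every nonzero element of $\langle\mathcal{A}\rangle_{[\cdot,\cdot]}$ has the form $c \cdot iP$ with $c \in \mathbb{R}^{*}$ and $P \in \mathcal{P}_{N}^{*}$, so every element of $\langle\mathcal{A}\rangle_{\|\cdot\|_{2}}$ is of the form $\pm iP$. Second, a word $B_{i_{1}} \cdots B_{i_{r}}$ in elements of $i\mathcal{P}_{N}^{*}$ equals $\lambda P$ for some $\lambda \in \{\pm 1, \pm i\}$ and $P \in \mathcal{P}_{N}$, and the standard symplectic bookkeeping recalled in this appendix yields $f(B_{i_{1}} \cdots B_{i_{r}}) = f(B_{i_{1}}) + \cdots + f(B_{i_{r}})$ in $\mathbb{F}_{2}^{2N}$.

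For (i), the forward direction is immediate: adjoint universality supplies, for each $A = iP \in i\mathcal{P}_{N}^{*}$, a scalar $\alpha \in \mathbb{R}^{*}$ with $\alpha A \in \langle\mathcal{A}\rangle_{[\cdot,\cdot]}$; normalizing gives $\pm A \in \langle\mathcal{A}\rangle_{\|\cdot\|_{2}}$, whence $f(A) \in f(\langle\mathcal{A}\rangle_{\|\cdot\|_{2}})$, and surjectivity of $f$ on $i\mathcal{P}_{N}^{*}$ yields the inclusion of the whole nonzero part of $\mathbb{F}_{2}^{2N}$. Conversely, assume $f(\langle\mathcal{A}\rangle_{\|\cdot\|_{2}}) \supseteq \mathbb{F}_{2}^{2N}\setminus\{\boldsymbol{0}\}$ and fix $A = iP \in i\mathcal{P}_{N}^{*}$. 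Pick $A' \in \langle\mathcal{A}\rangle_{\|\cdot\|_{2}}$ with $f(A') = f(A)$; by the first structural fact $A'$ is of the form $\pm iP = \pm A$, and by definition of $\langle\cdot\rangle_{\|\cdot\|_{2}}$ it is the normalization of some nonzero element of $\langle\mathcal{A}\rangle_{[\cdot,\cdot]}$, which delivers the desired $\alpha \in \mathbb{R}^{*}$.

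For (ii), the second structural fact shows that product universality of $\mathcal{B}$---that every $A \in i\mathcal{P}_{N}^{*}$ equals $\lambda^{-1}$ times some word in $\mathcal{B}$ for some $\lambda \in \{\pm 1,\pm i\}$---is equivalent under $f$ to every nonzero $\boldsymbol{v} \in \mathbb{F}_{2}^{2N}$ being an $\mathbb{F}_{2}$-linear combination of elements of $f(\mathcal{B})$. This in turn is precisely the statement that $f(\mathcal{B})$ spans the $2N$-dimensional space $\mathbb{F}_{2}^{2N}$, which, since spanning sets in a finite-dimensional space contain bases and conversely any set containing a basis spans, is equivalent to $f(\mathcal{B})$ containing a basis. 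Both directions of (ii) follow.

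The only delicate point I anticipate is the scalar bookkeeping: since $f$ is insensitive to phases but adjoint universality and product universality distinguish $\mathbb{R}^{*}$ from $\{\pm 1,\pm i\}$, one must be careful that the two structural facts above genuinely produce the correct type of scalar on the Lie-bracket and group-product sides respectively. Once these are nailed down, the equivalences are essentially a translation through $f$, and no deeper argument is required.
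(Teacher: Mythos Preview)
Your proposal is correct and follows essentially the same approach as the paper's proof: both reduce (i) to the observation that elements of $\langle\mathcal{A}\rangle_{\|\cdot\|_2}$ lie in $\pm i\mathcal{P}_N^*$ so that $f$ determines them up to sign, and reduce (ii) to the homomorphism property $f(B_{i_1}\cdots B_{i_r}) = f(B_{i_1}) + \cdots + f(B_{i_r})$. Your treatment of (ii) is in fact slightly cleaner than the paper's, since you directly equate product universality with $f(\mathcal{B})$ spanning $\mathbb{F}_2^{2N}$ and then invoke the spanning--basis equivalence, whereas the paper extracts linearly independent subsets from each word and takes a union.
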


\begin{proof} The proof essentially follows from the observation that there is a one-to-one correspondence between $i\mathcal{P}_{N}^{*}$ and $f(i\mathcal{P}_{N}^{*}) = \mathbb{F}_{2}^{2N} \setminus \{\boldsymbol{0}\}$, and hence also between any subset $\mathcal{C} \subset i\mathcal{P}_{N}^{*}$ and $f(\mathcal{C})$. 

\textbf{Case (i):} Suppose that $\mathcal{A}$ is adjoint universal for $i\mathcal{P}_{N}^{*}$. This means that for each $A \in i\mathcal{P}_{N}^{*}$ there is some non-zero $\alpha \in \mathbb{R}$ such that $\alpha A \in \langle \mathcal{A} \rangle_{[\cdot, \cdot]}$. In particular, this means that $\frac{\alpha}{|\alpha|}A \in \langle \mathcal{A} \rangle_{\| \cdot \|_{2}}$ and hence that $f(A) \in f(\langle \mathcal{A} \rangle_{\| \cdot \|_{2}})$. It follows that $f(i \mathcal{P}_{N}^{*}) \subseteq f(\langle \mathcal{A} \rangle_{\| \cdot \|_{2}})$.

Suppose now that $f(\langle \mathcal{A} \rangle_{\| \cdot \|_{2}})$ contains $\mathbb{F}_{2}^{2N} \setminus \{\boldsymbol{0}\}$. In particular, this means that $f(\langle \mathcal{A} \rangle_{\| \cdot \|_{2}}) = \mathbb{F}_{2}^{2N} \setminus \{\boldsymbol{0}\}$. Since each $\boldsymbol{v} \in \mathbb{F}_{2}^{2N} \setminus \{\boldsymbol{0}\}$ corresponds to some $A \in i\mathcal{P}_{N}^{*}$, it follows that there is some element $A' \in \langle \mathcal{A} \rangle_{\| \cdot \|_{2}}$ such that $A = e^{i\theta}A'$ for some $\theta \in \{0, \frac{\pi}{2}, \pi, \frac{3\pi}{2}\}$. However, since both $A$ and $A'$ are in $i \mathcal{P}_{N}^{*}$, it must be the case that $A = \pm A'$. Since $A' \in \langle \mathcal{A} \rangle_{\| \cdot \|_{2}}$ means that there is some $A'' \in \langle \mathcal{A}\rangle_{[\cdot, \cdot]}$ such that $A''/\|A''\|_{2} = A'$, it follows that $\alpha A \in \langle \mathcal{A}\rangle_{[\cdot, \cdot]}$ for $\alpha := \|A''\|_{2}$.

\textbf{Case (ii):} Suppose that $\mathcal{B}$ is product universal for $i\mathcal{P}_{N}^{*}$. Recall that this means that for any $A \in i\mathcal{P}_{N}^{*}$, there exists a sequence $B_{i_{1}},B_{i_{2}},...,B_{i_{r}} \in \mathcal{B}$ such that $A = e^{i\theta}B_{i_{1}}...B_{i_{r}}$ for some $\theta \in \{0, \frac{\pi}{2}, \pi, \frac{3\pi}{2}\}$. In particular, this means that $f(A) = f(B_{i_{1}}...B_{i_{r}}) = f(B_{i_{1}}) + ... + f(B_{i_{r}})$. Noting that $f(A) \neq \boldsymbol{0}$ since $A \in i \mathcal{P}_{N}^{*}$ and hence also that $f(B_{i_{1}}) + ... + f(B_{i_{r}}) \neq \boldsymbol{0}$. The latter implies that some subset of $\{f(B_{i_{1}}), ..., f(B_{i_{r}})\}$ is linearly independent. The union of all such subsets for each $A \in i\mathcal{P}_{N}^{*}$ necessarily contains a basis of $\mathbb{F}_{2}^{2N}$ since $f(i\mathcal{P}_{N}^{*}) = \mathbb{F}_{2}^{2N} \setminus \{\boldsymbol{0}\}$.

Suppose that $f(\mathcal{B})$ contains a basis of $\mathbb{F}_{2}^{2N}$. Let this basis be denoted by $\{\boldsymbol{v}_{1}, ..., \boldsymbol{v}_{2N}\}$. Since there is a one-to-one correspondence between $\mathcal{B}$ and $f(\mathcal{B})$, we can associate each $\boldsymbol{v}_{j}$ with some $B_{j} \in \mathcal{B}$. Since any element $\boldsymbol{w} \in \mathbb{F}_{2}^{2N} \setminus \{\boldsymbol{0}\}$ an be written as $\boldsymbol{w} = \boldsymbol{v}_{i_{1}} + ... + \boldsymbol{v}_{i_{r}} = f(B_{i_{1}}) + ... + f(B_{i_{r}}) = f(B_{i_{1}}...B_{i_{r}})$ and since $\boldsymbol{w} = f(A)$ for some $A \in i\mathcal{P}_{N}^{*}$, it follows that $A = e^{i\theta}B_{i_{1}}...B_{i_{r}}$ for some $\theta \in \{0, \frac{\pi}{2},\pi, \frac{3\pi}{4} \}$.
\end{proof}

Having demonstrated that it suffices to work entirely with $\mathbb{F}_{2}^{2N}$ for the purposes of this paper, let us introduce the following notation for later use. In direct parallel to the maps $ad_{A}$ in the main text, let us define the map $ad_{\boldsymbol{v}}: \mathbb{F}_{2}^{2N} \rightarrow \mathbb{F}_{2}^{2N}$ via $ad_{\boldsymbol{v}}(\boldsymbol{w}) = \left(\boldsymbol{v}^{\top}\Lambda \boldsymbol{w} \right)(\boldsymbol{v} + \boldsymbol{w})$. We note that, for all $A, B \in i \mathcal{P}_{N}^{*}$ we have that
\begin{align}
ad_{A}(B) = \begin{cases} \pm 2AB, &\text{ if } A,B \text{ anticommute }, \\ 0 &\text{ otherwise}
\end{cases}
\end{align}
and
\begin{align}
ad_{f(A)}(f(B)) = \begin{cases} f(A) + f(B),  &\text{ if } A,B \text{ anticommute }, \\ 0 &\text{ otherwise}.
\end{cases}
\end{align}
That is, $ad_{A}(B) \neq 0$ if and only if $ad_{f(A)}(f(B)) \neq 0$, and moreover if $ad_{A}(B) \neq 0$, then 
\begin{align}
f\left( \frac{ad_{A}(B)}{\|ad_{A}(B) \|_{2}} \right) = ad_{f(A)}(f(B)).
\end{align}
Since, for any $\mathcal{A} \subset i\mathcal{P}_{N}^{*}$,
\begin{align}
\textrm{span}_{\mathbb{R}}\langle \mathcal{A} \rangle_{[\cdot, \cdot]} = \textrm{span}_{\mathbb{R}}\langle \mathcal{A} \rangle_{\|\cdot \|_{2}},
\end{align}
we will typically make no distinction between results adjoint universality proven in the context of $i\mathcal{P}_{N}^{*}$ and those proven in $\mathbb{F}_{2}^{2N}$, and drop reference to $\|\cdot\|_{2}$ from future discourse.

\section{Proof of Theorem~2} \label{app:proof_thm_2n}

Recall the statement of Theorem~2 from the main text:
\begin{Theorem} Let $\mathcal{A}\subset i\mathcal{P}_{N}^{*}$ consist of $2N$ elements. Then, 
\begin{align*}
\langle \mathcal{A}\rangle_{\mathrm{Lie}}\neq \mathfrak{su}(2^N).
\end{align*}
\end{Theorem}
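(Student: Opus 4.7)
The plan is to translate the problem to the symplectic space $\mathbb{F}_2^{2N}$ via the correspondence $f$ of \Cref{app:symplectic_explanation}. By \Cref{lem:sympelctic_justification}(i), it suffices to exhibit some $\boldsymbol{w} \in \mathbb{F}_2^{2N}\setminus\{\boldsymbol{0}\}$ that lies outside the reachable set $f(\langle \mathcal{A} \rangle_{[\cdot,\cdot]})$. Writing $\boldsymbol{v}_i := f(A_i)$ for the $2N$ elements of $\mathcal{A}$ and letting $M \in M_{2N}(\mathbb{F}_2)$ be the symmetric zero-diagonal Gram matrix $M_{ij} := \boldsymbol{v}_i^\top \Lambda \boldsymbol{v}_j$, the reachable set is generated from the $\boldsymbol{v}_k$ under the ``legal move'' $\boldsymbol{w} \mapsto \boldsymbol{w} + \boldsymbol{v}_j$, available precisely when $\boldsymbol{w}^\top \Lambda \boldsymbol{v}_j = 1$ (equivalently, when the underlying adjoint action is nonzero). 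I would then follow the case split indicated in the excerpt: (a) every pair in $\mathcal{A}$ mutually anticommutes, or (b) at least one pair commutes.

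In case (a), $M = J - I$ (all ones off diagonal, zero on diagonal), and a short calculation gives $(J-I)^2 = J^2 + I = I$ over $\mathbb{F}_2$ (since $J^2 = 2N\cdot J \equiv 0$), so $M$ is invertible and $\{\boldsymbol{v}_i\}$ is a basis of $\mathbb{F}_2^{2N}$. Every vector is then uniquely $\sum_{i \in S}\boldsymbol{v}_i$ for some $S \subseteq \{1,\ldots,2N\}$, and the legality condition $\sum_{i \in S\setminus\{j\}} M_{ij} \equiv 1 \pmod{2}$ specialises to: add a new index iff $|S|$ is odd, remove an existing index iff $|S|$ is even. Starting from any singleton, legal moves thus alternate between $|S|=1$ and $|S|=2$ and never reach weight $\geq 3$. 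Hence the reachable set equals $\{\boldsymbol{v}_i\} \cup \{\boldsymbol{v}_i + \boldsymbol{v}_j : i<j\}$, of size $N(2N+1)$, which is strictly less than $2^{2N}-1$ for $N \geq 2$.

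In case (b), fix a pair $i_0 \neq j_0$ with $M_{i_0 j_0} = 0$. If $\{\boldsymbol{v}_k\}$ is linearly dependent, the reachable set lies in the proper subspace $\mathrm{span}_{\mathbb{F}_2}\{\boldsymbol{v}_k\} \subsetneq \mathbb{F}_2^{2N}$, and any $\boldsymbol{w}$ outside this span is a witness. Otherwise $\{\boldsymbol{v}_k\}$ is a basis and each reachable vector has a unique subset $S$. I would then introduce the parity $\phi(S) := \sum_{k<l,\, k,l \in S} M_{kl} \pmod{2}$ and verify, using that $M$ has zero diagonal in characteristic two, that every legal move flips both $\phi(S)$ and $|S|$, so that $|S|+\phi(S) \pmod{2}$ is a conserved quantity. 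Every singleton has invariant value $1$, so every reachable $S$ satisfies $|S|+\phi(S) \equiv 1$; but $S = \{i_0,j_0\}$ yields $2 + 0 \equiv 0$, so $\boldsymbol{v}_{i_0} + \boldsymbol{v}_{j_0}$ is unreachable. The main obstacle is identifying this parity invariant for (b); its preservation is routine once stated, but its well-definedness forces one to peel off the linearly dependent subcase via the separate (and much simpler) dimension argument.
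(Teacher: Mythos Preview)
Your proof is correct and follows the paper's overall strategy: translate to $\mathbb{F}_2^{2N}$ via \Cref{app:symplectic_explanation}, split into the all-anticommuting case and the case with at least one commuting pair, and in each exhibit an unreachable vector. Case~(a) is essentially identical to the paper's \Cref{prop:all_anti_commuting} (your ``$|S|$ alternates between $1$ and $2$'' is equivalent to the paper's ``any length-three nested commutator vanishes''), and the same witness $\boldsymbol{v}_1+\boldsymbol{v}_2+\boldsymbol{v}_3$ emerges.

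For case~(b) you streamline the paper's argument noticeably. The paper treats odd and even sequence lengths separately (\Cref{prop:odd_sequence} and \Cref{prop:even_sequences}), with the even case proved by a rather involved pairing-and-cancellation computation over the full sequence $\boldsymbol{u}_1,\ldots,\boldsymbol{u}_r$ to show that $\sum_{i<j}\boldsymbol{u}_i^\top\Lambda\boldsymbol{u}_j=0$. You instead identify the conserved parity $|S|+\phi(S)\pmod 2$ on the coordinate set and verify its invariance in one line. This is the same obstruction repackaged: your ``each legal move flips $\phi$'' is the paper's observation that the $r-1$ legality conditions sum to $r-1$, your ``each legal move flips $|S|$'' absorbs the odd/even split of \Cref{prop:odd_sequence}, and your trivial evaluation $\phi(\{i_0,j_0\})=M_{i_0 j_0}=0$ replaces the entire pairing argument of \Cref{prop:even_sequences}. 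The invariant formulation is both shorter and more transparent. Your explicit handling of the linearly dependent subcase also matches what the paper does just before stating \Cref{thm:2n_rephrased}.
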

To prove this theorem, we make use of the relation between $i\mathcal{P}_{N}^{*}$ and $\mathbb{F}_{2}^{2N}$ as established above in \Cref{app:symplectic_explanation}. To do so, we first make some comments regarding the assumptions on $\mathcal{A}$, introduce some notation and state an equivalent formulation of the theorem in the context of $\mathbb{F}_{2}^{2N}$.

It is sufficient to take $\mathcal{A}$ to be a set consisting of $2N$ independent elements since if this were not the case then there exists an element of $i\mathcal{P}_{N}^{*}$ independent of all elements of $\mathcal{A}$ which is thus necessarily not an element of $\langle \mathcal{A} \rangle_{[\cdot, \cdot]}$. In particular, this means that $f(\mathcal{A})$ is a basis for $\mathbb{F}_{2}^{2N}$. In the following, we will denote a basis for $\mathbb{F}_{2}^{2N}$ by $\widehat{\mathcal{A}}$.

Let use define, for any $\widehat{\mathcal{A}}$, the following analogue of $\langle \mathcal{A} \rangle_{[\cdot, \cdot]}$ in the main text:
\begin{align}
\langle \widehat{\mathcal{A}} \rangle_{[\cdot, \cdot]} = \widehat{\mathcal{A}} \bigcup_{r = 1}^{\infty} \{ad_{\boldsymbol{v}_{i_{1}}} \cdots ad_{\boldsymbol{v}_{i_{r}}}(\boldsymbol{v})| (\boldsymbol{v}_{i_{1}}, ..., \boldsymbol{v}_{i_{r}}, \boldsymbol{v}) \in \widehat{\mathcal{A}}^{r+1} \}.
\end{align}
Accordingly, proving Theorem~2 amounts to proving the following equivalent theorem:
\begin{Theorem} \label{thm:2n_rephrased} Let $\widehat{\mathcal{A}}$ be a basis of $\mathbb{F}_{2}^{2N}$ where $N \geq 3$. Then
\begin{align}
\langle \widehat{\mathcal{A}}\rangle_{[\cdot, \cdot]} \neq \mathbb{F}_{2}^{2N}.
\end{align}
\end{Theorem}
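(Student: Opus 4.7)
The plan is to reformulate reachability as a constrained bit-flip process on the coordinate vectors of $\widehat{\mathcal{A}}$ and then exhibit a quadratic parity invariant that is preserved along every iterated adjoint but violated by some nonzero element. Identifying the right invariant is the main conceptual obstacle; everything else is elementary parity arithmetic.

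Expand every candidate $\boldsymbol{w} \in \langle \widehat{\mathcal{A}} \rangle_{[\cdot,\cdot]}$ in the basis $\widehat{\mathcal{A}} = \{\boldsymbol{v}_1,\ldots,\boldsymbol{v}_{2N}\}$ as $\boldsymbol{w} = \sum_{i \in S} \boldsymbol{v}_i$ with coefficient vector $\chi_S \in \mathbb{F}_{2}^{2N}$, and let $\Omega_{ij} := \boldsymbol{v}_i^\top \Lambda \boldsymbol{v}_j$ be the symmetric zero-diagonal anticommutation matrix. Because $\widehat{\mathcal{A}}$ is a basis of the symplectic space, $\Omega$ is invertible. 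Using the explicit $ad$ formula in \Cref{app:symplectic_explanation}, every nonzero iterated adjoint becomes a sequence of single-bit flips $\chi \mapsto \chi + e_i$ starting from a singleton $e_k$, with the flip at index $i$ allowed precisely when $(\Omega \chi)_i = 1$, which is exactly the condition that $\boldsymbol{v}_i$ anticommutes with the current vector.

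The key step is the construction of a quadratic invariant. Setting $q(\chi) := \sum_{i<j} \Omega_{ij} \chi_i \chi_j$, a short expansion using symmetry and $\Omega_{ii}=0$ gives $q(\chi + e_i) - q(\chi) \equiv (\Omega \chi)_i \pmod 2$, so $q$ flips parity at every allowed step; at the same time $|\chi|$ flips parity under any bit flip. Consequently
\begin{align}
P(\chi) := q(\chi) + |\chi| \pmod 2
\end{align}
is preserved by every valid transition, and since $P(e_k) = 0 + 1 = 1$ for each singleton starting point, the entire closure is confined to $\{\chi : P(\chi) = 1\}$. It therefore suffices to exhibit a nonzero $\chi^*$ with $P(\chi^*) = 0$.

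This final step matches the case split advertised in the sketch. If some pair of basis elements commutes, $\Omega_{ab} = 0$, take $\chi^* = e_a + e_b$: $P(\chi^*) = \Omega_{ab} + 2 \equiv 0 \pmod 2$. Otherwise all $\binom{2N}{2}$ pairs mutually anticommute, and for any three distinct indices $a,b,c$ (available since $N \geq 3$) one has $P(e_a + e_b + e_c) = (\Omega_{ab}+\Omega_{ac}+\Omega_{bc}) + 3 = 3 + 3 \equiv 0$. In either case the associated nonzero vector $\boldsymbol{v}_a + \boldsymbol{v}_b$ (respectively $\boldsymbol{v}_a + \boldsymbol{v}_b + \boldsymbol{v}_c$) lies in $\mathbb{F}_{2}^{2N}\setminus\{\boldsymbol{0}\}$ but outside $\langle \widehat{\mathcal{A}}\rangle_{[\cdot,\cdot]}$, establishing the strict inclusion.
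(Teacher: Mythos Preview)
Your proof is correct and takes a genuinely different---and considerably slicker---route than the paper. The paper argues each case directly: when all basis elements anticommute it shows that any nested commutator of length at least three vanishes; when some pair commutes it shows, via a somewhat involved pairing argument on the sequence $\boldsymbol{u}_{1},\ldots,\boldsymbol{u}_{r}$, that $\sum_{i<j}\boldsymbol{u}_{i}^{\top}\Lambda\boldsymbol{u}_{j}$ is simultaneously forced to equal $r-1\pmod 2$ and $0$, yielding a contradiction for even $r$ (odd $r$ being handled by an easy weight parity). You instead package the entire obstruction into a single conserved quantity $P(\chi)=q(\chi)+|\chi|$ on the coefficient vector, reducing the whole argument to a one-line verification that $q$ and $|\chi|$ both flip under an allowed bit flip. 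Your case split then becomes trivial: it only serves to locate some nonzero $\chi^{*}$ with $P(\chi^{*})=0$. This buys you not just brevity but a stronger conclusion, since it shows the nonzero part of $\langle\widehat{\mathcal{A}}\rangle_{[\cdot,\cdot]}$ is confined to the level set $\{P=1\}$ rather than merely missing a single element. Two cosmetic remarks: the invertibility of $\Omega$ is true but never used in your argument, and three distinct indices are already available once $N\geq 2$, so the appeal to $N\geq 3$ is stronger than necessary (though of course sufficient under the stated hypothesis).
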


\begin{proof} The proof proceeds by demonstrating that for any choice of $\widehat{\mathcal{A}}$, there always exists an element of $\mathbb{F}_{2}^{2N}$ that is not contained in $\langle \widehat{\mathcal{A}}\rangle_{[\cdot, \cdot]}$. We split this into two cases: (i) where $\widehat{\mathcal{A}}$ every element $\boldsymbol{v} \in \widehat{\mathcal{A}}$ is such that $\boldsymbol{v}^{\top}\Lambda \boldsymbol{w} = 1$ for all $\boldsymbol{w} \in \widehat{\mathcal{A}}$ such that $\boldsymbol{w} \neq \boldsymbol{v}$, and (ii) where there exist distinct elements $\boldsymbol{v}$ and $\boldsymbol{w}$ in $\widehat{\mathcal{A}}$ such that $\boldsymbol{v}^{\top}\Lambda \boldsymbol{w} = 0$. Case (i) is proven as \Cref{prop:all_anti_commuting} below while case (ii) is proven by the conjunction of \Cref{prop:odd_sequence} and \Cref{prop:even_sequences}.
\end{proof}

\begin{Proposition} \label{prop:all_anti_commuting} Let $\widehat{\mathcal{A}}$ be a basis for $\mathbb{F}_{2}^{2N}$ with $N\geq 3$, such that for all distinct $\boldsymbol{v},\boldsymbol{w} \in \widehat{\mathcal{A}}$, $\boldsymbol{v}^{\top}\Lambda\boldsymbol{w} = 1$. Then for any three distinct $\boldsymbol{v}, \boldsymbol{w}, \boldsymbol{x} \in \widehat{\mathcal{A}}$, $\boldsymbol{v} + \boldsymbol{w} + \boldsymbol{x} \notin \langle \widehat{\mathcal{A}}\rangle_{[\cdot,\cdot]}$.
\end{Proposition}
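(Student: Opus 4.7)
The plan is to prove a stronger claim by induction: every element of $\langle \widehat{\mathcal{A}}\rangle_{[\cdot,\cdot]}$ is either a single basis vector from $\widehat{\mathcal{A}}$ or a sum of exactly two distinct elements of $\widehat{\mathcal{A}}$. Since $\widehat{\mathcal{A}}$ is a basis, the sum $\boldsymbol{v}+\boldsymbol{w}+\boldsymbol{x}$ of three distinct basis vectors cannot coincide with either of these forms, and the conclusion follows.

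First I would record the computational consequence of the all-anticommuting hypothesis. For any basis vector $\boldsymbol{v} \in \widehat{\mathcal{A}}$ and any subset $S \subseteq \widehat{\mathcal{A}}$, one has
\begin{equation*}
\boldsymbol{v}^{\top}\Lambda\Bigl(\sum_{\boldsymbol{u}\in S}\boldsymbol{u}\Bigr) \;=\; |S\setminus\{\boldsymbol{v}\}| \pmod 2,
\end{equation*}
because $\boldsymbol{v}^{\top}\Lambda\boldsymbol{v}=0$ while $\boldsymbol{v}^{\top}\Lambda\boldsymbol{u}=1$ for every other $\boldsymbol{u}\in\widehat{\mathcal{A}}$. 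Combined with $ad_{\boldsymbol{v}}(\boldsymbol{y}) = (\boldsymbol{v}^{\top}\Lambda\boldsymbol{y})(\boldsymbol{v}+\boldsymbol{y})$, this controls exactly when and how $ad_{\boldsymbol{v}}$ acts nontrivially on a sum of basis vectors.

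Next I would run the induction. Let $T_r$ denote the collection of elements obtained by $r$-fold nested adjoints starting from $\widehat{\mathcal{A}}$, so $T_0=\widehat{\mathcal{A}}$ and $T_{r+1}=\{ad_{\boldsymbol{v}}(\boldsymbol{y}) : \boldsymbol{v}\in\widehat{\mathcal{A}},\ \boldsymbol{y}\in T_r\}\setminus\{\boldsymbol{0}\}$. The base case is immediate. For the inductive step, assume $\boldsymbol{y}\in T_r$ is either a basis vector $\boldsymbol{u}$ or a two-term sum $\boldsymbol{u}_1+\boldsymbol{u}_2$ with $\boldsymbol{u}_1\neq\boldsymbol{u}_2$, and compute $ad_{\boldsymbol{v}}(\boldsymbol{y})$ for any $\boldsymbol{v}\in\widehat{\mathcal{A}}$ in cases. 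If $\boldsymbol{y}=\boldsymbol{u}$, the formula gives $\boldsymbol{0}$ when $\boldsymbol{v}=\boldsymbol{u}$ and $\boldsymbol{v}+\boldsymbol{u}$ (a sum of two distinct basis vectors) otherwise. If $\boldsymbol{y}=\boldsymbol{u}_1+\boldsymbol{u}_2$, the parity formula gives $\boldsymbol{v}^{\top}\Lambda\boldsymbol{y}=0$ whenever $\boldsymbol{v}\notin\{\boldsymbol{u}_1,\boldsymbol{u}_2\}$, so the commutator vanishes; and $\boldsymbol{v}^{\top}\Lambda\boldsymbol{y}=1$ when $\boldsymbol{v}\in\{\boldsymbol{u}_1,\boldsymbol{u}_2\}$, in which case $ad_{\boldsymbol{v}}(\boldsymbol{y})=\boldsymbol{v}+\boldsymbol{u}_1+\boldsymbol{u}_2$ collapses (over $\mathbb{F}_2$) to the remaining single basis vector. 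Thus $T_{r+1}$ also lies in the desired set, closing the induction.

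Finally, since $\widehat{\mathcal{A}}$ is linearly independent in $\mathbb{F}_2^{2N}$, the sum $\boldsymbol{v}+\boldsymbol{w}+\boldsymbol{x}$ of three distinct basis vectors is distinct from every basis vector and from every sum of two distinct basis vectors, giving $\boldsymbol{v}+\boldsymbol{w}+\boldsymbol{x}\notin\langle\widehat{\mathcal{A}}\rangle_{[\cdot,\cdot]}$. The hypothesis $N\geq 3$ enters only to ensure that three distinct basis vectors exist. There is no real obstacle here beyond carefully tracking the parity case-split; the key structural observation is that the all-anticommuting configuration is extremely rigid, forcing $ad$-iterations to oscillate only between ``length 1'' and ``length 2'' representations in the basis $\widehat{\mathcal{A}}$ and never reaching ``length 3''.
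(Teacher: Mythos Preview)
Your proof is correct and rests on the same parity observation as the paper's: for distinct $\boldsymbol{v},\boldsymbol{u}_1,\boldsymbol{u}_2\in\widehat{\mathcal{A}}$ one has $\boldsymbol{v}^{\top}\Lambda(\boldsymbol{u}_1+\boldsymbol{u}_2)=1+1=0$, so the third nested adjoint kills the sequence. The paper argues directly that any length-$\geq 3$ nested commutator vanishes and then notes that no length-$2$ commutator can produce $\boldsymbol{v}+\boldsymbol{w}+\boldsymbol{x}$; your induction instead tracks the invariant ``length $\leq 2$ in the basis'' through all $ad$-iterations. The two arguments are essentially the same, though your version is slightly more careful in that it explicitly handles the case $\boldsymbol{v}\in\{\boldsymbol{u}_1,\boldsymbol{u}_2\}$ (where the commutator is nonzero but collapses back to a single basis vector), a case the paper's phrasing glosses over.
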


\begin{proof} For any sequence $\boldsymbol{u}_{1}, ..., \boldsymbol{u}_{r} \in \widehat{\mathcal{A}}$ 
\begin{align}
ad_{\boldsymbol{u}_{1}}...ad_{\boldsymbol{u}_{r-1}}(\boldsymbol{u}_{r}) = \boldsymbol{0}
\end{align}
if $r \geq 3$ since 
\begin{align}
ad_{\boldsymbol{u}_{r-2}}ad_{\boldsymbol{u}_{r-1}}(\boldsymbol{u}_{r}) &= \boldsymbol{u}_{r-2}^{\top} \Lambda (\boldsymbol{u}_{r-1} + \boldsymbol{u}_{r})\left[\boldsymbol{u}_{r-2} + \boldsymbol{u}_{r-1} + \boldsymbol{u}_{r} \right] \\
&= \boldsymbol{u}_{r-2}^{\top} \Lambda \boldsymbol{u}_{r-1}\left[\boldsymbol{u}_{r-2} + \boldsymbol{u}_{r-1} + \boldsymbol{u}_{r} \right] \nonumber \\
& \quad \quad + \boldsymbol{u}_{r-2}^{\top} \Lambda \boldsymbol{u}_{r}\left[\boldsymbol{u}_{r-2} + \boldsymbol{u}_{r-1} + \boldsymbol{u}_{r} \right] \\
&= \boldsymbol{0}
\end{align}
and since $ad_{\boldsymbol{u}_{l}}(\boldsymbol{0}) = \boldsymbol{0}$. The result follows since there is no sequence $\boldsymbol{u}_{1}, \boldsymbol{u}_{2}$ such that $ad_{\boldsymbol{u}_{1}}(\boldsymbol{u}_{2}) = \boldsymbol{v} + \boldsymbol{w} + \boldsymbol{x}$.
\end{proof}

\begin{Proposition} \label{prop:odd_sequence} Let $\widehat{\mathcal{A}}$ be a basis for $\mathbb{F}_{2}^{2N}$ such that there exists distinct elements $\boldsymbol{v}, \boldsymbol{w} \in \widehat{\mathcal{A}}$ satisfying $\boldsymbol{v}^{\top} \Lambda \boldsymbol{w} = 0$. There does not exist a sequence $\boldsymbol{u}_{1}, ..., \boldsymbol{u}_{r} \in \widehat{\mathcal{A}}$ for odd $r$ such that
\begin{align}
ad_{\boldsymbol{u}_{1}}...ad_{\boldsymbol{u}_{r-1}}(\boldsymbol{u}_{r}) = \boldsymbol{v} + \boldsymbol{w}.
\end{align}
\end{Proposition}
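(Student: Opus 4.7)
The plan is to reduce nonzero nested $ad$-chains of basis vectors to plain $\mathbb{F}_2$-sums, and then rule out odd $r$ by a parity argument that exploits the linear independence of $\widehat{\mathcal{A}}$.

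\textbf{Step 1 (structure of nonzero nested ads).} Using $ad_{\boldsymbol{u}}(\boldsymbol{x}) = (\boldsymbol{u}^{\top}\Lambda\boldsymbol{x})(\boldsymbol{u}+\boldsymbol{x})$ from \Cref{app:symplectic_explanation}, I would prove by induction on $r$ the following claim: if $ad_{\boldsymbol{u}_1}\cdots ad_{\boldsymbol{u}_{r-1}}(\boldsymbol{u}_r) \neq \boldsymbol{0}$, then
\begin{equation*}
ad_{\boldsymbol{u}_1}\cdots ad_{\boldsymbol{u}_{r-1}}(\boldsymbol{u}_r) = \boldsymbol{u}_1 + \boldsymbol{u}_2 + \cdots + \boldsymbol{u}_r
\end{equation*}
in $\mathbb{F}_2^{2N}$. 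The base $r=1$ is immediate. For the inductive step, set $\boldsymbol{x} := ad_{\boldsymbol{u}_2}\cdots ad_{\boldsymbol{u}_{r-1}}(\boldsymbol{u}_r)$. Non-vanishing of the full chain forces $\boldsymbol{x} \neq \boldsymbol{0}$, so by the inductive hypothesis $\boldsymbol{x} = \boldsymbol{u}_2 + \cdots + \boldsymbol{u}_r$. Then $ad_{\boldsymbol{u}_1}(\boldsymbol{x}) = (\boldsymbol{u}_1^{\top}\Lambda\boldsymbol{x})(\boldsymbol{u}_1+\boldsymbol{x})$, and non-vanishing of this quantity forces the scalar factor to be $1$, yielding the claim.

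\textbf{Step 2 (parity).} Suppose for contradiction that an odd-length sequence $\boldsymbol{u}_1, \ldots, \boldsymbol{u}_r$ realises $\boldsymbol{v}+\boldsymbol{w}$. By Step 1,
\begin{equation*}
\boldsymbol{u}_1 + \boldsymbol{u}_2 + \cdots + \boldsymbol{u}_r = \boldsymbol{v} + \boldsymbol{w}
\end{equation*}
in $\mathbb{F}_2^{2N}$. Collecting repetitions of each basis vector on the left and using that $\widehat{\mathcal{A}}$ is a basis, linear independence forces each element $\boldsymbol{a} \in \widehat{\mathcal{A}}$ to occur in the multiset $\{\boldsymbol{u}_1, \ldots, \boldsymbol{u}_r\}$ with parity equal to its coefficient on the right. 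In particular, $\boldsymbol{v}$ and $\boldsymbol{w}$ each appear an odd number of times and every other element of $\widehat{\mathcal{A}}$ appears an even number of times. Summing these multiplicities gives $r \equiv 1 + 1 \equiv 0 \pmod{2}$, contradicting the assumption that $r$ is odd.

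\textbf{Expected obstacle.} The argument is largely bookkeeping; the only real care is in Step 1, where intermediate partial sums could in principle cancel via $\mathbb{F}_2$-collapses and spoil the induction, but this is precisely what the non-vanishing hypothesis on the full chain rules out at each step. Incidentally, the hypothesis $\boldsymbol{v}^{\top}\Lambda\boldsymbol{w} = 0$ is not used in this proposition — it is presumably needed in the companion \Cref{prop:even_sequences} to exclude the trivial even realisation $r=2$ with $(\boldsymbol{u}_1, \boldsymbol{u}_2) = (\boldsymbol{v}, \boldsymbol{w})$.
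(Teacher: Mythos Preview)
Your proof is correct and follows essentially the same approach as the paper: both reduce a nonzero nested $ad$-chain to the $\mathbb{F}_2$-sum of its operands and then run a parity argument from linear independence, and your observation that the hypothesis $\boldsymbol{v}^{\top}\Lambda\boldsymbol{w}=0$ is unused here is also borne out by the paper's own proof. Your Step~2 is in fact a slightly cleaner packaging of the paper's parity count.
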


\begin{proof} Assume that
\begin{align}
ad_{\boldsymbol{u}_{1}}...ad_{\boldsymbol{u}_{r-1}}(\boldsymbol{u}_{r}) = \boldsymbol{v} + \boldsymbol{w}.    
\end{align}
In order to not violate linear independence, we must have that
\begin{align}
ad_{\boldsymbol{u}_{1}}...ad_{\boldsymbol{u}_{r-1}}(\boldsymbol{u}_{r}) \neq \boldsymbol{0}
\end{align}
and that $\boldsymbol{v}$ and $\boldsymbol{w}$ each appear in the sequence an odd number of times. We also know that, if $ad_{\boldsymbol{u}_{1}}...ad_{\boldsymbol{u}_{r-1}}(\boldsymbol{u}_{r}) \neq \boldsymbol{0}$ then
\begin{align}
ad_{\boldsymbol{u}_{1}}...ad_{\boldsymbol{u}_{r-1}}(\boldsymbol{u}_{r}) = \boldsymbol{u}_{1} + ... + \boldsymbol{u}_{r}.
\end{align}
It follows that
\begin{align}
\boldsymbol{u}_{1} + ... + \boldsymbol{u}_{r} = \boldsymbol{v} + \boldsymbol{w} + \boldsymbol{u}_{l_{1}} + ... + \boldsymbol{u}_{l_{t}}.
\end{align}
If $r$ is odd then so is $t$, meaning that either $\boldsymbol{u}_{l_{1}} + ... + \boldsymbol{u}_{l_{t}} \neq \boldsymbol{0}$, contradicting the assumption that $ad_{\boldsymbol{u}_{1}}...\boldsymbol{u}_{r-1}(\boldsymbol{u}_{r}) = \boldsymbol{v} + \boldsymbol{w}$, or $\boldsymbol{u}_{l_{1}} + ... + \boldsymbol{u}_{l_{t}} = \boldsymbol{0}$, contradicting the assumption of linear independence. 
\end{proof}

\begin{Proposition} \label{prop:even_sequences} Let $\widehat{\mathcal{A}}$ be a basis for $\mathbb{F}_{2}^{2N}$ that contains two distinct elements $\boldsymbol{v}, \boldsymbol{w}$ such that $\boldsymbol{v}^{\top}\Lambda \boldsymbol{w} = 0$. Then there does not exist a sequence $\boldsymbol{u}_{1},...,\boldsymbol{u}_{r} \in \widehat{\mathcal{A}}$ with $r = 2k$, $k \in \mathbb{N}$ such that
\begin{align}
ad_{\boldsymbol{u}_{1}}...ad_{\boldsymbol{u}_{r-1}}(\boldsymbol{u}_{r}) = \boldsymbol{v} + \boldsymbol{w}.
\end{align}
\end{Proposition}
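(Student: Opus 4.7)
The plan is to derive a parity contradiction by summing the anticommutation constraints required for a length-$r$ nested commutator to be nonzero. As in the proof of \Cref{prop:odd_sequence} above, if $ad_{\boldsymbol{u}_1} \cdots ad_{\boldsymbol{u}_{r-1}}(\boldsymbol{u}_r) \neq \boldsymbol{0}$, then working from the right I see inductively that the partial commutator $\boldsymbol{x}_i := ad_{\boldsymbol{u}_i} \cdots ad_{\boldsymbol{u}_{r-1}}(\boldsymbol{u}_r)$ equals $\boldsymbol{u}_i + \boldsymbol{u}_{i+1} + \cdots + \boldsymbol{u}_r$, and non-vanishing requires $\boldsymbol{u}_i^{\top} \Lambda \boldsymbol{x}_{i+1} = 1$ for every $i \in \{1, \ldots, r-1\}$. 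Setting the full commutator equal to $\boldsymbol{v} + \boldsymbol{w}$ and invoking linear independence of $\widehat{\mathcal{A}}$, I deduce that $\boldsymbol{v}$ and $\boldsymbol{w}$ each appear an odd number of times in the sequence while every other basis element appears an even number of times.

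Next, I would add together all $r-1$ of the scalar constraints $\boldsymbol{u}_i^{\top} \Lambda \boldsymbol{x}_{i+1} = 1$, obtaining
\begin{align*}
\sum_{1 \leq i < j \leq r} \boldsymbol{u}_i^{\top} \Lambda \boldsymbol{u}_j \equiv r - 1 \pmod{2}.
\end{align*}
The key step is then to re-index this pairwise sum by the basis: writing $\boldsymbol{u}_i = \boldsymbol{e}_{\sigma(i)}$ (with $\boldsymbol{e}_1 = \boldsymbol{v}$ and $\boldsymbol{e}_2 = \boldsymbol{w}$) and letting $m_k$ denote the multiplicity of $\boldsymbol{e}_k$ in the sequence, one finds
\begin{align*}
\sum_{1 \leq i < j \leq r} \boldsymbol{u}_i^{\top} \Lambda \boldsymbol{u}_j \;=\; \sum_{k < l} m_k m_l \, \boldsymbol{e}_k^{\top} \Lambda \boldsymbol{e}_l \pmod{2},
\end{align*}
where the diagonal terms drop out because $\boldsymbol{e}_k^{\top} \Lambda \boldsymbol{e}_k = 0$ over $\mathbb{F}_2$. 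By the previous paragraph, the only pair of indices for which the parity of $m_k m_l$ is odd is $(k,l) = (1,2)$, so the entire right-hand side collapses to $\boldsymbol{v}^{\top} \Lambda \boldsymbol{w} = 0$.

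For $r$ even, however, $r - 1 \equiv 1 \pmod{2}$, producing the contradiction $0 = 1$ and completing the proof. The argument applies uniformly to all even $r \geq 2$, including the trivial case $r = 2$ where it just recovers the observation that $ad_{\boldsymbol{u}_1}(\boldsymbol{u}_2) = \boldsymbol{v} + \boldsymbol{w}$ would force both $\{\boldsymbol{u}_1, \boldsymbol{u}_2\} = \{\boldsymbol{v}, \boldsymbol{w}\}$ and $\boldsymbol{u}_1^{\top}\Lambda \boldsymbol{u}_2 = 1$, directly contradicting $\boldsymbol{v}^{\top}\Lambda\boldsymbol{w} = 0$. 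The one step requiring a little care is the bookkeeping that converts the raw pairwise sum into the multiplicity expression, but once that is in hand the hypothesis that $\boldsymbol{v}$ and $\boldsymbol{w}$ commute together with the parity of $r$ do the rest of the work.
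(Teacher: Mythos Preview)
Your proof is correct and follows the same overall strategy as the paper's: sum the $r-1$ anticommutation constraints to obtain $\sum_{i<j}\boldsymbol{u}_i^{\top}\Lambda\boldsymbol{u}_j \equiv r-1 \pmod 2$, then argue that this pairwise sum actually vanishes modulo $2$ to reach a contradiction for even $r$. Where you differ is in that second step. The paper establishes the vanishing via an explicit index-pairing argument: it pairs off occurrences of each repeated basis element (treating the first $\boldsymbol{v}$ and first $\boldsymbol{w}$ as a special pair), then walks through several case distinctions to cancel every term one by one. Your re-indexing by basis element and multiplicities,
\[
\sum_{i<j}\boldsymbol{u}_i^{\top}\Lambda\boldsymbol{u}_j \;=\; \sum_{k<l} m_k m_l\,\boldsymbol{e}_k^{\top}\Lambda\boldsymbol{e}_l \pmod 2,
\]
collapses that combinatorics in one line and is considerably more transparent; the only ingredients you need are that the form is symmetric (since $\Lambda=\Lambda^{\top}$) and alternating over $\mathbb{F}_2$ (since $\boldsymbol{x}^{\top}\Lambda\boldsymbol{x}=2\boldsymbol{a}^{\top}\boldsymbol{b}=0$ for $\boldsymbol{x}=(\boldsymbol{a},\boldsymbol{b})$). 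Both arguments encode the same cancellation, but yours makes the parity structure manifest without the case-by-case index bookkeeping.
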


\begin{proof} The case where $r = 2$ is trivial by the assumption that $\boldsymbol{v}^{\top}\Lambda \boldsymbol{w} = 0$. Suppose such a sequence $\boldsymbol{u}_{1},...,\boldsymbol{u}_{r}$ exists for $r = 2k > 2$. Since $\boldsymbol{v} + \boldsymbol{w} \neq \boldsymbol{0}$ by linear independence, this means that
\begin{align}
ad_{\boldsymbol{u}_{1}}...ad_{\boldsymbol{u}_{r-1}}(\boldsymbol{u}_{r}) \neq \boldsymbol{0}
\end{align}
which is equivalent to 
\begin{align}
\prod_{i=1}^{r-1}\left[\boldsymbol{u}_{i}^{\top}\Lambda (\boldsymbol{u}_{i+1} + ... + \boldsymbol{u}_{r}) \right]  = 1.
\end{align}
In particular, this implies that
\begin{align}
\sum_{i=1}^{r-1} \left[\boldsymbol{u}_{i}^{\top}\Lambda (\boldsymbol{u}_{i+1} + ... + \boldsymbol{u}_{r}) \right] = r-1 \bmod 2 = 1 \label{eq:sum_constraints}
\end{align}
since for any product of elements in $\mathbb{F}_{2}$ to be $1$, it must be that every term in the product is also $1$ and there are $r-1$ terms in the product. It is possible to rewrite the left-hand side of the above equation as
\begin{align}
\sum_{i=1}^{r-1}\sum_{j > i} \boldsymbol{u}_{i}^{\top}\Lambda \boldsymbol{u}_{j}. \label{eq:sum_sum_i_lambda_j}
\end{align}
Since $ad_{\boldsymbol{u}_{1}}...ad_{\boldsymbol{u}_{r-1}}(\boldsymbol{u}_{r}) \neq \boldsymbol{0}$, we also have that
\begin{align}
ad_{\boldsymbol{u}_{1}}...ad_{\boldsymbol{u}_{r-1}}(\boldsymbol{u}_{r}) = \boldsymbol{u}_{1} + ... + \boldsymbol{u}_{r}.
\end{align}
For $\boldsymbol{u}_{1} + ... + \boldsymbol{u}_{r}$ to equal $\boldsymbol{v} + \boldsymbol{w}$, it must be the case that $\boldsymbol{v}$ and $\boldsymbol{w}$ appear in the sequence $\boldsymbol{u}_{1},...,\boldsymbol{u}_{r}$ an odd number of times and any other distinct element appears in the sequence an even number of times.

Let us consider \Cref{eq:sum_sum_i_lambda_j} for a sequence of the above type. Suppose without loss of generality that the first $\boldsymbol{v}$ appears before the first $\boldsymbol{w}$ in the sequence and let us write $(l_{(0,1)}, l_{(0,2)})$ for the pair of corresponding indices, i.e. $\boldsymbol{u}_{l_{(0,1)}} = \boldsymbol{v}$ and $\boldsymbol{u}_{l_{(0,2)}} = \boldsymbol{w}$. The remaining $r-2$ elements of the sequence can be paired as follows: let $(l_{(i,1)}, l_{(i,2)})$ for $i = 1, ..., k-1$ be a pair of labels such that $l_{(i,1)}$ is the label of the $i$th unpaired element of the sequence, $l_{(i,2)} > l_{(i,1)}$ has not yet been assigned to another pair, and $\boldsymbol{u}_{l_{(i,1)}} = \boldsymbol{u}_{l_{(i,2)}}$. Let us assume without loss of generality that $l_{(1,1)} < l_{(2,1)} < ... < l_{(k-1,1)}$. By the assumptions on the sequence $\boldsymbol{u}_{1}, ..., \boldsymbol{u}_{r}$, it is possible to assign pairs in this way such that
\begin{align}
\{l_{(i,1)}, l_{(i,2)}: i = 0, ...,k-1 \} = \{1, ..., r\}.
\end{align}
Noting that it must be the case that the label corresponding to $r$ must be $l_{(i,2)}$ for some $i$, let us then define
\begin{align}
L_{1} &:= \{l_{(i,1)} : i = 0, ..., k-1\}, \\
L_{2} &:= \{l_{(i,2)}: i = 0, ..., k-1 \} \setminus \{r\}.
\end{align}
We can then rewrite \Cref{eq:sum_sum_i_lambda_j} as
\begin{align}
\sum_{l_{(i,1)} \in L_{1}} \sum_{l_{(i,1)}< j} \boldsymbol{u}_{l_{(i,1)}}^{\top}\Lambda \boldsymbol{u}_{j} + \sum_{l_{(i,2)} \in L_{2}} \sum_{l_{(i,2)} < j'} \boldsymbol{u}_{l_{(i,2)}}^{\top}\Lambda \boldsymbol{u}_{j'}.
\end{align}
Using that $\boldsymbol{u}_{l_{(i,1)}}^{\top}\Lambda \boldsymbol{u}_{l_{(i,2)}} = 0$, either since $\boldsymbol{u}_{l_{(i,1)}} = \boldsymbol{u}_{l_{(i,2)}}$ for $i = 1, ..., k-1$ or since $\boldsymbol{v}^{\top}\Lambda \boldsymbol{w} = 0$ by assumption, and using that addition is modulo $2$, the above expression can be written as:
\begin{align}
\sum_{\substack{l_{(0,1)} < j, \\ j \neq l_{(0,2)} }} \boldsymbol{v}^{\top}\Lambda \boldsymbol{u}_{j} + (1-\delta_{l_{(0,2),r}})\sum_{l_{(0,2)} < j'} \boldsymbol{w}^{\top}\Lambda \boldsymbol{u}_{j'} + \sum_{\substack{l_{(i,1)} \in L_{1}, \\ i \neq 0}} \sum_{l_{ (i,1)}<j''<l_{(i,2)}} \boldsymbol{u}_{l_{(i,1)}}^{\top} \Lambda\boldsymbol{u}_{j''}.
\end{align}
For every $j > l_{(0,1)}$ there are two possible cases: (i) $j = l_{(i,1)}$ for some $i$ meaning that $l_{(0,1)} < l_{(i,1)} < l_{(i,2)}$ or (ii) $j = l_{(i,2)}$ for some $i$ with $l_{(i,1)} < l_{(0,1)}$. In the first case both $\boldsymbol{v}^{\top}\Lambda\boldsymbol{u}_{l_{(i,1)}}$ and $\boldsymbol{v}^{\top}\Lambda \boldsymbol{u}_{l_{(i,2)}}$ appear in the first sum in the above expression, and since $\boldsymbol{u}_{l_{(i,1)}} = \boldsymbol{u}_{l_{(i,2)}}$, these terms cancel in the modulo two addition. For the second case, $\boldsymbol{v}^{\top}\Lambda \boldsymbol{u}_{l_{(i,2)}}$ appears in the first sum while $\boldsymbol{u}_{l_{(i,1)}}^{\top}\Lambda \boldsymbol{v}$ appears in the third term in the above expression, so again there is cancellation (recall that $\boldsymbol{x}^{\top}\Lambda\boldsymbol{y} = \boldsymbol{y}^{\top}\Lambda\boldsymbol{x}$). Accordingly, the first term above fully cancels out. The exact same reasoning can be used to cancel the remaining terms, as follows. For the second term, either $l_{(0,2)} = r$ meaning that this term is zero anyway, otherwise any $j' > l_{(0,2)}$ also satisfies the same two cases above, and the corresponding terms cancel out. For the remaining term, let us denote by $\widetilde{L}_{1}$ the set of labels $l_{(i,1)}$ that have not been canceled out via the above method. For any $i,i'$ such that $i < i'$ and where $l_{(i,1)}, l_{(i', 1)} \in \widetilde{L}_{1}$, we have the same two cases: either (i) $l_{(i,1)} < l_{(i',1)} < l_{(i',2)} < l_{(i,2)}$ in which case the terms $\boldsymbol{u}_{l_{(i,1)}}^{\top}\Lambda \boldsymbol{u}_{l_{(i',1)}}$ and $\boldsymbol{u}_{l_{(i,1)}}^{\top}\Lambda \boldsymbol{u}_{l_{(i',2)}}$ cancel, or (ii) $l_{(i,1)} < l_{(i',1)}  < l_{(i,2)} < l_{(i',2)}$, in which case $\boldsymbol{u}_{l_{(i,1)}}^{\top} \Lambda \boldsymbol{u}_{l_{(i',1)}}$ and $\boldsymbol{u}_{l_{(i',2)}}^{\top}\Lambda \boldsymbol{u}_{l_{(i,2)}}$ cancel. (Note: there is a third case where $l_{(i,1)} < l_{(i,2)} < l_{(i',1)} < l_{(i',2)}$ but this case trivially contains no elements of the form $\boldsymbol{u}_{l_{(i,s)}}^{\top}\Lambda \boldsymbol{u}_{l_{(i',s')}}$, $s,s' \in \{0,1\}$.)

We have thus shown that the left-hand side of \Cref{eq:sum_constraints} must in fact be $0$, forming the desired contradiction.
\end{proof}

\section{Proof of Proposition~2}\label{app:prop:Vmap}
For convenience, we restate Propostion~2 here:
\begin{Proposition}\label{prop:app:Vmap}
    Let $V,V'\in i\mathcal{P}^*_k$ for any $k\geq 1$ and $\mathcal{A}$ be an adjoint universal set for $i\mathcal{P}^*_k$. Then, there exist $A_1,...,A_r\in\mathcal{A}$ and $\alpha\in\mathbb{R}^*$ such that $V=\alpha\: ad_{A_1}\cdots ad_{A_{r}}(V')$.
\end{Proposition}

\begin{proof} The proof is established by first demonstrating that for any $V, V' \in i\mathcal{P}^*_k$ there exists a sequence of elements of $P_{1}, ..., P_{t} \in i\mathcal{P}^*_k$, not necessarily in $\mathcal{A}$, such that $V=\alpha'\: ad_{P_1}\cdots ad_{P_{t}}(V')$ with $\alpha' \in \mathbb{R}^{*}$. This is established in Lemma~\ref{lemma:map} below. Since $\mathcal{A}$ is adjoint universal, for each element $P_{i}$ in the sequence, there exist $A^{(i)}_{1},...,A^{(i)}_{s_i}\in\mathcal{A}$ such that, 
\begin{align}
    P_i=\beta_i \:ad_{A^{(i)}_{1}}\cdots ad_{A^{(i)}_{s_i-1}}(A^{(i)}_{s_i})
\end{align} 
with $\beta_{i} \in \mathbb{R}^{*}$ for each $i$. It follows that
\begin{align}
V' = \alpha'' ad_{ad_{A^{(1)}_{1}}\cdots ad_{A^{(1)}_{s_1-1}}(A^{(1)}_{s_1})}\dots ad_{ad_{A^{(t)}_{1}}\cdots ad_{A^{(t)}_{s_t-1}}(A^{(t)}_{s_t})}(V)
\end{align}
for some $\alpha'' \in \mathbb{R}^{*}$. As demonstrated in Lemma~\ref{lemma:reordering}, since all operators involved are elements of $i \mathcal{P}_{N}^{*}$, it is possible to reorder the nested adjoint maps to arrive at a sequence $A_{1}, ..., A_{r}$ such that 
\begin{align}
\alpha\: ad_{A_1}\cdots ad_{A_{r}}(V') = \alpha'' ad_{ad_{A^{(1)}_{1}}\cdots ad_{A^{(1)}_{s_1-1}}(A^{(1)}_{s_1})}\dots ad_{ad_{A^{(t)}_{1}}\cdots ad_{A^{(t)}_{s_t-1}}(A^{(t)}_{s_t})}(V)
\end{align}
with $\alpha \in \mathbb{R}^{*}$.
\end{proof}

In the first Lemma, we prove that we can always find a sequence of Pauli strings that map one Pauli to another under commutation.
\begin{Lemma}\label{lemma:map}
    Let $V,V'\in i\mathcal{P}_k^*$ for any $k\in\mathbb{N}^*$. Then there exist $P_1,...,P_r\in i\mathcal{P}_k^*$ and $\alpha\in\mathbb{R}^*$ such that $V=\alpha\: ad_{P_1}\cdots ad_{P_r}(V')$.
\end{Lemma}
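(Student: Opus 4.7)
The plan is to translate the problem into the symplectic picture developed in Appendix~\ref{app:symplectic_explanation}, where $V, V' \in i\mathcal{P}_k^*$ correspond to nonzero vectors $\boldsymbol{v}, \boldsymbol{v}' \in \mathbb{F}_2^{2k}$ and $ad_{\boldsymbol{p}}(\boldsymbol{w}) = (\boldsymbol{p}^{\top}\Lambda \boldsymbol{w})(\boldsymbol{p}+\boldsymbol{w})$. A single application of $ad_{\boldsymbol{p}}$ either annihilates $\boldsymbol{w}$ (when $\boldsymbol{p}$ commutes with it) or sends it to $\boldsymbol{p}+\boldsymbol{w}$ (when they anticommute). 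The task then becomes: exhibit a short sequence of nonzero $\boldsymbol{p}_i$'s that transports $\boldsymbol{v}'$ to $\boldsymbol{v}$. I will argue that two steps always suffice.

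First I would dispose of the easy cases. If $\boldsymbol{v}\neq \boldsymbol{v}'$ and they anticommute, set $\boldsymbol{p}_1 := \boldsymbol{v}+\boldsymbol{v}'$; this is nonzero, and the identity $\boldsymbol{p}_1^{\top}\Lambda \boldsymbol{v}' = \boldsymbol{v}^{\top}\Lambda \boldsymbol{v}'=1$ gives $ad_{\boldsymbol{p}_1}(\boldsymbol{v}') = \boldsymbol{v}$ in one step. If $\boldsymbol{v}=\boldsymbol{v}'$, pick any $\boldsymbol{q}$ anticommuting with $\boldsymbol{v}'$ (such $\boldsymbol{q}$ exists in any $\mathbb{F}_2^{2k}$, $k\geq 1$, since the symplectic form is nondegenerate); then a direct check shows $ad_{\boldsymbol{q}}\,ad_{\boldsymbol{q}}(\boldsymbol{v}') = \boldsymbol{v}'$ in the symplectic picture, corresponding to a nonzero scalar multiple of $V'$ in $i\mathcal{P}_k^*$.

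The one genuinely interesting case is $\boldsymbol{v}\neq \boldsymbol{v}'$ but $\boldsymbol{v}^{\top}\Lambda \boldsymbol{v}' = 0$, which forces $k\geq 2$. Here I would first choose $\boldsymbol{q}$ that anticommutes with both $\boldsymbol{v}$ and $\boldsymbol{v}'$; its existence follows from dimension counting once $k\geq 2$, since the space of vectors orthogonal (in $\Lambda$) to both $\boldsymbol{v}$ and $\boldsymbol{v}'$ is a proper subspace. One application yields $ad_{\boldsymbol{q}}(\boldsymbol{v}') = \boldsymbol{q}+\boldsymbol{v}'$. Then I would take $\boldsymbol{p}:= \boldsymbol{q}+\boldsymbol{v}'+\boldsymbol{v}$; expanding $\boldsymbol{p}^{\top}\Lambda(\boldsymbol{q}+\boldsymbol{v}')$ and using the anticommutation relations gives the value $1$, so $ad_{\boldsymbol{p}}(\boldsymbol{q}+\boldsymbol{v}') = \boldsymbol{v}$ and we are done in two steps.

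The main obstacle is verifying that the auxiliary vector $\boldsymbol{p}$ in the last case is itself nonzero, i.e.\ that it corresponds to a genuine element of $i\mathcal{P}_k^*$. If $\boldsymbol{p}=\boldsymbol{0}$ then $\boldsymbol{q}=\boldsymbol{v}+\boldsymbol{v}'$, in which case $\boldsymbol{q}^{\top}\Lambda \boldsymbol{v} = (\boldsymbol{v}')^{\top}\Lambda \boldsymbol{v} = 0$, contradicting the choice of $\boldsymbol{q}$ as anticommuting with $\boldsymbol{v}$. Translating the sequence $\boldsymbol{p}_1,\ldots,\boldsymbol{p}_r$ (with $r\leq 2$) back via the correspondence of Appendix~\ref{app:symplectic_explanation}, each nonzero $\boldsymbol{p}_i$ lifts to a Pauli $P_i\in i\mathcal{P}_k^*$ with $V = \alpha\, ad_{P_1}\cdots ad_{P_r}(V')$ for some $\alpha\in\mathbb{R}^*$, completing the proof.
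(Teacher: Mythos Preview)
Your argument is correct and takes a genuinely different route from the paper. The paper proceeds by induction on $k$, peeling off one tensor factor at a time and handling separately the subcases where the outermost factor of $V$ or $V'$ is the identity; this is concrete but somewhat case-heavy and does not make the length of the resulting sequence explicit. Your approach instead works globally in the symplectic picture $\mathbb{F}_2^{2k}$ and produces the mapping in at most two adjoint steps in every case, which is both shorter and gives a uniform bound $r\leq 2$ that the paper's proof does not state. The trade-off is that your proof leans on the machinery of Appendix~\ref{app:symplectic_explanation}, whereas the paper's induction is self-contained at the level of Pauli tensor factors.

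One small point of phrasing: in Case~3 you justify the existence of $\boldsymbol{q}$ anticommuting with both $\boldsymbol{v}$ and $\boldsymbol{v}'$ by saying ``the space of vectors orthogonal (in $\Lambda$) to both $\boldsymbol{v}$ and $\boldsymbol{v}'$ is a proper subspace.'' That statement is true but is not quite what is needed; what you actually need is that the union of the two $\Lambda$-orthogonal hyperplanes of $\boldsymbol{v}$ and $\boldsymbol{v}'$ is proper (equivalently, that the two linear functionals $\boldsymbol{q}\mapsto \boldsymbol{q}^{\top}\Lambda\boldsymbol{v}$ and $\boldsymbol{q}\mapsto \boldsymbol{q}^{\top}\Lambda\boldsymbol{v}'$ are independent, hence jointly surjective onto $\mathbb{F}_2^2$). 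This holds because $\boldsymbol{v},\boldsymbol{v}'$ are linearly independent and $\Lambda$ is invertible, so the conclusion stands; only the justification should be tightened.
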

\begin{proof}
    We prove Lemma~\ref{lemma:map} by induction. $N=1$ is trivial. Now consider $V=iV_1\otimes V_2$ and $V'=iV_1'\otimes V_2'$ for $V_1,V_1'\in\mathcal{P}_k$ and $V_2,V_2'\in\mathcal{P}_1$. We consider two cases.

    Case 1: $V_1,V_2,V_1',V_2'\neq I$. By assumption, there exist $P_1\otimes I,...,P_r\otimes I$ such that $V_1\otimes V_2'\propto ad_{P_1}\cdots ad_{P_r}(V_1'\otimes V_2')$. Analogous to the case $N=1$, $V'_2$ can be mapped to an operator proportional to $V_2$.
    
    Case 2: $V_1=I$ or $V_2=I$ or $V_1'=I$ or $V_2'=I$. If $V_1'=I$, then $V_2'\neq I$ and vice verse. If $V_1'=I$, then choose $P_r=A\otimes B$ with $A\neq I$ and $B\neq V_2'$ such that $[P_r,I\otimes V_2']=\alpha\: A\otimes V_2'B$ for $\alpha\in\mathbb{R}^*$. If $V_2'=I$, then choose $P_r=A\otimes B$ with $A\neq V_1'$ and $B\neq I$ such that $[P_r,V_1'\otimes I]=\alpha\: AV_1'\otimes B$ for $\alpha\in\mathbb{R}^*$. If $V_1\neq I$ and $V_2\neq I$, then this reduces to case 1. So from hereon, w.l.o.g. assume that $V'_1,V_2'\neq I$.

    If $V_1=I$, then $V_2\neq I$ and vice verse. If $V_1=I$, then consider case 1 for a target $V=A\otimes V_2B$ such that $A\neq I$ and $[V_2,B]\neq 0$. Then, choose $P_0=A\otimes B$. If $V_2=I$, then consider case 1 for a target $V=V_1A\otimes B$ such that $[V_1,A]\neq 0$ and $B\neq I$. Then, in both cases, $[P_0, V] = \alpha\: V_1\otimes V_2$ for some $\alpha\in\mathbb{R}^*$.
\end{proof}

In the second Lemma we show that the commutators of commutators can be written as nested commutators by a simple re-ordering:
\begin{Lemma}\label{lemma:reordering}
    Let $A,B,C,D\in i\mathcal{P}_{N}^{*}$ and $[[A,B],[C,D]]\neq 0$. Then,
    \begin{align}
        [[A,B],[C,D]]\propto
        \begin{cases}
            ad_D ad_C ad_A(B) \:\:\mathrm{ if }\:\: [AB,D]= 0\\
            ad_C ad_D ad_A(B) \:\:\mathrm{ if }\:\: [AB,C]= 0
        \end{cases}
    \end{align}
\end{Lemma}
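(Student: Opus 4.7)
The plan is to reduce the claimed identity to a statement about ordinary products of four Pauli operators, leveraging that, for $P, Q \in i\mathcal{P}_N^*$, the Lie bracket takes the particularly simple form $[P,Q] = 2PQ$ when $P$ and $Q$ anticommute, and $[P,Q] = 0$ otherwise. The nonvanishing hypothesis $[[A,B],[C,D]] \neq 0$ immediately forces three pairs of anticommutation relations: $A$ with $B$, $C$ with $D$, and $AB$ with $CD$. Applying the above observation twice collapses the outer bracket to $[[A,B],[C,D]] = 4\,[AB,CD]$, reducing everything to manipulations of monomials in $\mathcal{G}_{\mathcal{P}_N}$.

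Next I would split into the two cases of the statement. In Case 1, the assumption $[AB,D]=0$ together with the fact that $AB$ anticommutes with $CD$ forces $AB$ to anticommute with $C$. Expanding $[AB,CD]$ and pushing $D$ past $AB$ using $[AB,D]=0$ yields $[AB,CD] = [AB,C]\,D = 2\,ABCD$, so $[[A,B],[C,D]] = 8\,ABCD$. Then I compute $ad_D\,ad_C\,ad_A(B)$ step by step: $ad_A(B) = 2AB$; next, $ad_C(AB) = -2\,ABC$ since $C$ anticommutes with $AB$; finally, $D$ commutes with $AB$ and anticommutes with $C$, so $D$ anticommutes with $ABC$ and $ad_D(ABC)$ does not vanish, producing $ad_D\,ad_C\,ad_A(B) = -8\,ABCD$. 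Comparing to the expansion above gives the proportionality (in fact equality up to a sign).

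Case 2 is entirely symmetric under $C \leftrightarrow D$. The hypothesis $[AB,C]=0$ combined with $AB$ anticommuting with $CD$ forces $AB$ to anticommute with $D$, and the analogous chain of substitutions shows $ad_C\,ad_D\,ad_A(B) = \pm 8\,ABCD$, again proportional to $[[A,B],[C,D]]$.

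The argument is largely mechanical; the only point requiring attention is to verify that each of the three nested adjoints is nonzero, which is why I would first extract the minimal list of (anti)commutation relations implied by the hypothesis before carrying out any algebra. There is no conceptual obstacle beyond careful bookkeeping of signs when permuting Pauli operators in the product $ABCD$.
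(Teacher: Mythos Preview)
Your proof is correct. The logical skeleton is identical to the paper's: both arguments extract from the hypothesis the three anticommutation relations (between $A$ and $B$, between $C$ and $D$, and between $AB$ and $CD$), deduce in Case~1 that $AB$ must anticommute with $C$, and then verify step by step that each of the nested adjoints in $ad_D\,ad_C\,ad_A(B)$ is nonzero. The difference is in the representation used. The paper carries out the computation in the $\mathbb{F}_2^{2N}$ symplectic formalism (Appendix~\ref{app:symplectic_explanation}), where commutation relations become values of the symplectic form and the product $ABCD$ becomes the sum $\boldsymbol{a}+\boldsymbol{b}+\boldsymbol{c}+\boldsymbol{d}$; signs and scalar factors disappear entirely, so nothing beyond checking that certain symplectic products equal~$1$ is required. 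Your direct matrix computation is more elementary and self-contained---no auxiliary formalism needs to be invoked---at the price of tracking signs by hand. Indeed, your final value $ad_D\,ad_C\,ad_A(B)=-8ABCD$ should be $+8ABCD$ (in the last step $[D,ABC]=-2ABCD$, giving $-4\cdot(-2ABCD)=+8ABCD$), but since the lemma only asserts proportionality this slip is immaterial.
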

\begin{proof}
    We use the binary notation as introduce in Appendix~\ref{app:symplectic_explanation}. Let $\boldsymbol{a}$ be the binary vector corresponding to $A$, and similarly for $\boldsymbol{b},\boldsymbol{c}$ and $\boldsymbol{d}$ for $B,C,D$. By assumption we have that $ad_{ad_{\boldsymbol{a}}(\boldsymbol{b})}(ad_{\boldsymbol{c}}(\boldsymbol{d}))$, which in particular requires that $ad_{\boldsymbol{a}}(\boldsymbol{b}) \neq \boldsymbol{0}$ and $ad_{\boldsymbol{c}}(\boldsymbol{d}) \neq \boldsymbol{0}$. This means that $\boldsymbol{a}^{\top}\Lambda \boldsymbol{b} = \boldsymbol{c}^{\top}\Lambda\boldsymbol{d} = 1$ and also that $ad_{\boldsymbol{a}}(\boldsymbol{b}) = \boldsymbol{a} + \boldsymbol{b}$ and $ad_{\boldsymbol{c}}(\boldsymbol{d}) = \boldsymbol{c} + \boldsymbol{d}$. We can then write
    \begin{align*}
        ad_{ad_{\boldsymbol{a}}(\boldsymbol{b})}(ad_{\boldsymbol{c}} (\boldsymbol{d}))=\left[(\boldsymbol{a}+\boldsymbol{b})^{\top}\Lambda(\boldsymbol{c}+\boldsymbol{d})\right]\left(\boldsymbol{a}+\boldsymbol{b} + \boldsymbol{c}+\boldsymbol{d}\right).
    \end{align*}
    The assumption that the above expression is non-zero also requires both that $\boldsymbol{a} + \boldsymbol{b} + \boldsymbol{c} + \boldsymbol{d} \neq \boldsymbol{0}$ as well as $(\boldsymbol{a}+\boldsymbol{b})^{\top}\Lambda (\boldsymbol{c}+\boldsymbol{d}) = 1$ or equivalently $(\boldsymbol{a}+\boldsymbol{b})^{\top}\Lambda \boldsymbol{c} = 1 - (\boldsymbol{a}+\boldsymbol{b})^{\top}\Lambda \boldsymbol{d}$. There are thus two possibilities:
    \begin{itemize}
        \item If $(\boldsymbol{a}+\boldsymbol{b})^{\top}\Lambda \boldsymbol{d}=0$, then $(\boldsymbol{a}+\boldsymbol{b})^{\top}\Lambda \boldsymbol{c} = 1$ which we can equivalently express as $\boldsymbol{c}^{\top}\Lambda  (\boldsymbol{a}+\boldsymbol{b}) = 1$. Since $\boldsymbol{c}^{\top}\Lambda \boldsymbol{d} = \boldsymbol{d}^{\top}\Lambda \boldsymbol{c} =1$, we have that $\boldsymbol{d}^{\top}\Lambda  (\boldsymbol{a}+\boldsymbol{b}+\boldsymbol{c}) = 1$. It follows that $ad_{\boldsymbol{d}}ad_{\boldsymbol{c}}ad_{\boldsymbol{a}}(\boldsymbol{b})$ is non-zero and equal to $\boldsymbol{a} + \boldsymbol{b} + \boldsymbol{c} + \boldsymbol{d} = ad_{ad_{\boldsymbol{a}}(\boldsymbol{b})}(ad_{\boldsymbol{c}} (\boldsymbol{d}))$, which corresponds to the first case in the statement of the proposition.
        \item If $(\boldsymbol{a}+\boldsymbol{b})^{\top}\Lambda \boldsymbol{c}=0$, then $(\boldsymbol{a}+\boldsymbol{b})^{\top}\Lambda \boldsymbol{d} = 1$. By directly analogous reasoning to the above case with $\boldsymbol{c}$ and $\boldsymbol{d}$ interchanged, we get that $ad_{\boldsymbol{c}}ad_{\boldsymbol{d}}ad_{\boldsymbol{a}}(\boldsymbol{b}) = \boldsymbol{a} + \boldsymbol{b} + \boldsymbol{c} + \boldsymbol{d} = ad_{ad_{\boldsymbol{a}}(\boldsymbol{b})}(ad_{\boldsymbol{c}} (\boldsymbol{d}))$, which is the second case.
    \end{itemize} 
\end{proof}

\section{Proof of Theorem~3}\label{appendix:thm:optimal}
For convenience, we restate Theorem~3 here:
\begin{Theorem}\label{thm:app:optimal}
     Consider a set of operators $\mathcal{A}'\cup \mathcal{B}'$ as in Theorem~\ref{thm:app:construction} such that $|\mathcal{A}'\cup \mathcal{B}'|=2N+1$. Let $P\in i\mathcal{P}_N$. Then, \textsc{PauliCompiler}($P$) returns a sequence of operators $G_1,...,G_L\in \mathcal{A}'\cup \mathcal{B}'$ such that $P\propto ad_{G_1}\cdots ad_{G_{L-1}}(G_L)$ and $L=\mathcal{O}(N)$.
\end{Theorem}
\begin{proof}
    Algorithms \textsc{PauliCompiler} and \textsc{SubsystemCompiler} are depicted in Figs.~\ref{fig:app-paulicompiler} and~\ref{fig:app-alg}, respectively. We refer to the line numbering there in the following.
    
    We have proven through Lemmas~\ref{lemma:alg:paulicompiler} and~\ref{lemma:alg}, that \textsc{PauliCompiler} deterministically returns such a sequence $G_1,...,G_L\in\mathcal{A}'\cup \mathcal{B}'$ in finite time.
    
    Now, let us prove $L=\mathcal{O}(N)$.
    With the exception of the case $W=I$ in line 2 of Fig.~\ref{fig:app-paulicompiler}, all returned sequence rely on the output of \textsc{SubsystemCompiler}. The case $W=I$ returns a sequence that is constant in length since $s$ is constant. Therefore, let us consider the length of a sequence $\mathcal{G}$ returned by \textsc{SubsystemCompiler} in Fig.~\ref{fig:app-alg}. 
    It follows immediately from the proof of Lemma~\ref{lemma:alg} in Appendix~\ref{appendix:lemma:alg} that the worst case length of the returned sequence $\mathcal{G}$ is $4(N-k)$. To see this consider $|\mathcal{B}'|=2(N-k)$ and $|\mathcal{A}'|=2k+1$. Then, the worst case length for an optimal choice of a product $\prod_{i=1,...,r}B_i$ to realize another Pauli operator proportional to $iW \in i\mathcal{P}_{N-k}$ is $r=N-k$. This is the case for $W\propto\prod_{B\in\mathcal{B}}B$. Since the set $\mathcal{B}$ only contains independent elements, any other operator in $i\mathcal{P}_{N-k}$ can be written as a product with fewer factors in $\mathcal{B}$ (this can be seen, for example, by taking the view that $\mathcal{B}$ corresponds to a basis of $\mathbb{F}_2^{2N}$ as discussed in Appendix~\ref{app:symplectic_explanation}). 

    The longest worst case sequence of \textsc{PauliCompiler} in Fig.~\ref{fig:app-paulicompiler} is returned in line 11 if both sequences $\mathcal{G}'$ in line 7 and $\mathcal{G}''$ in line 8 have length $4(N-k)$, respectively. The worst case length of the returned sequence is therefore, $L=8(N-k)+s$. Since $s$ is chosen to be a constant in line 9, $L=\mathcal{O}(N)$ in the worst case.
\end{proof}

    Note that we consider the case $|\mathcal{A}'\cup \mathcal{B}'|=2N+1$ because otherwise, we could choose, for example, $\mathcal{B}=i\mathcal{P}_{N-k}$ such that any sequence returned by \textsc{SubsystemCompiler} can be chosen to be of size 1. Further note that this compiler will only return a sequence length of optimal \emph{complexity}. A breadth-first search may still return a sequence that is shorter by a constant factor. As shown in Appendix~\ref{app:time} however, unlike a breadth-first search, \textsc{PauliCompiler} returns a sequence in $\mathcal{O}(N^{2})$ time.

\section{Algorithm~\ref{fig:app-paulicompiler} has $\mathcal{O}(N^2)$ time complexity}\label{app:time}
In this appendix, we show that Algorithm~\ref{fig:app-paulicompiler} is generally preferable over a breadth-first search, even if it returns longer sequences (by a constant), because of its runtime efficiency:
\begin{Proposition}\label{prop:time}
    \textsc{PauliCompiler} (Algorithm~\ref{fig:app-paulicompiler}) has time complexity $\mathcal{O}(N^2)$.
\end{Proposition}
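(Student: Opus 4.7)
The plan is to bound the work done by \textsc{SubsystemCompiler} and \textsc{PauliCompiler} separately, showing in each case that a total of $O(N)$ Pauli-string manipulations are performed on strings of length $O(N)$, so the overall running time is $O(N^{2})$. Throughout, I will use the symplectic representation of Appendix F so that a Pauli on $N$ qubits is encoded by a vector in $\mathbb{F}_{2}^{2N}$; multiplication, commutation checks, and $ad$-maps are then all $O(N)$ bit operations.

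First I would analyze \textsc{SubsystemCompiler}. Line 1 needs to express $iW$ as a product of $O(N)$ elements of $\mathcal{B}$. Since $\mathcal{B}$ corresponds to a basis of (a subspace of) $\mathbb{F}_{2}^{2(N-k)}$, this decomposition can be obtained by a single change of basis and costs $O(N^{2})$ (or $O(N)$ if the change-of-basis matrix is precomputed). The while loop decrements $i$ at most $r=O(N)$ times in Line 16, and by Statements 2.1--2.3 in the proof of Lemma~\ref{lemma:alg} at most a constant number of helper insertions occur between consecutive decrements, so the loop body executes $O(N)$ times. Inside each iteration, the tests in Lines 6 and 10 and the helper selections in Lines 7 and 11 concern only the $k$-qubit factor, and $k$ is constant, so these are $O(1)$ each if a running product of the $U$-and-$\mathcal{H}$ part is maintained. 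The appends to $\mathcal{G}$ are $O(1)$. The dominant cost per iteration is therefore updating the $N$-qubit running operator, which is $O(N)$, for a total of $O(N^{2})$.

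Next I would analyze \textsc{PauliCompiler}. Each of the three branches makes at most two calls to \textsc{SubsystemCompiler} (each $O(N^{2})$) together with a constant number of lookups in $\mathcal{A}$; since $\mathcal{A}$ acts on a constant number of qubits, the sequences in Lines 3, 9 and 15 can be stored in a precomputed table of $O(1)$ size and retrieved in $O(1)$ time each. The only nontrivial additional work lies in Line 11, where a permutation $\sigma$ of the $O(N)$ operators in $\mathcal{G}$ must be produced so that the resulting nested commutator is nonzero. In the symplectic picture, this condition becomes the running-sum criterion $\boldsymbol{v}_{\sigma(i)}^{\top}\Lambda\sum_{j>i}\boldsymbol{v}_{\sigma(j)}=1$ for all $i$, and Lemma~\ref{lemma:reordering} gives an explicit local rule turning a commutator of commutators into a nested commutator. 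Starting from the known nonzero commutator-of-commutators at the top of Line 11, I would iteratively apply Lemma~\ref{lemma:reordering} to merge $\mathcal{G}'$ with $\mathcal{G}''\oplus (A_{1}\otimes I,\ldots,A_{s}\otimes I)$. Each application rearranges a constant number of positions and requires one check of the running-sum criterion, which is $O(N)$. Since the combined sequence has length $O(N)$, the reordering costs $O(N^{2})$. Writing the final output sequence of $O(N)$ strings of length $O(N)$ is also $O(N^{2})$, so the total time is $O(N^{2})$.

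The main obstacle will be the reordering argument in Line 11, because a naive greedy search that, at each position, tries every remaining operator and recomputes the running sum from scratch would cost $O(N^{3})$. The key point that keeps the bound at $O(N^{2})$ is that we do not need to search blindly: $\mathcal{G}'$ and $\mathcal{G}''$ are already nested commutators with a known nonzero outer commutator, so Lemma~\ref{lemma:reordering} supplies a local merging rule and the running sum can be updated incrementally in $O(N)$ per merge across $O(N)$ merges. A careful statement of this incremental merging, together with the observation that maintaining the $k$-qubit running product inside \textsc{SubsystemCompiler} is what keeps each of its inner checks at $O(1)$ rather than $O(N)$, is what the full proof needs to make precise.
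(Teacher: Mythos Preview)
Your proposal is essentially correct and follows the same approach as the paper: both work in the symplectic representation, bound \textsc{SubsystemCompiler} by $O(N)$ loop passes at $O(N)$ cost each, and handle Line~11 of \textsc{PauliCompiler} by iteratively applying Lemma~\ref{lemma:reordering} $O(N)$ times at $O(N)$ cost per application.

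One small inaccuracy worth flagging: your claim that the test in Line~10 of \textsc{SubsystemCompiler} ``concerns only the $k$-qubit factor'' is not right. That line checks whether $U_i\otimes B_i$ commutes with the full running product $\bigl(\prod_{j>i}U_j\prod_{A\in\mathcal H}A\bigr)\otimes\bigl(\prod_{j>i}B_j\bigr)$, and the commutation relation of the tensor product depends on \emph{both} factors, so you also need the $(N-k)$-qubit symplectic inner product $B_i^{\top}\Lambda\bigl(\sum_{j>i}B_j\bigr)$, which is $O(N)$ even with a maintained running sum. The paper accordingly books Line~10 as linear, not constant. This does not affect your bound, since you already allow $O(N)$ per iteration for updating the $N$-qubit running operator; only the attribution of where that $O(N)$ is spent is off.
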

\begin{proof}
    To demonstrate that the algorithm requires $\mathcal{O}(N^2)$ time instead of exponential time, it is convenient to make use of the $2N$-dimensional binary vector representation $\mathbb{F}_2^{2N}$ of Pauli strings and the symplectic form (as described in Appendix~\ref{app:symplectic_explanation}).

    In particular, we show in Lemma~\ref{lemma:subsystem-time} below that the subroutine \textsc{SubsystemCompiler} has quadratic time complexity. 
    
    With this lemma, we consider the three subroutines of \textsc{PauliCompiler} (lines 2-3, 5-12, 14-16) and show that they have quadratic time complexity at worst, considering the binary vector representation of Pauli strings.
\begin{itemize}
    \item \textbf{Lines 2-3:}  In this subroutine, we search for a constant-size subset of a constant-size set that fulfills a certain condition, which can be checked in constant-time. 
    \item \textbf{Lines 5-12:} The choice in line $6$ is equivalent to finding two vectors $\boldsymbol{w}_1,\boldsymbol{w}_2\in\mathbb{F}^{2(N-k)}$ with the condition that $\boldsymbol{w}= \boldsymbol{w}_1^T\Lambda \boldsymbol{w}_2 (\boldsymbol{w}_1+\boldsymbol{w}_2)$ for some other nonzero vector $\boldsymbol{w}\in\mathbb{F}^{2(N-k)}$ in line 6, requires no more than linear time. Lines 7 and 8 have quadratic time complexity due to Lemma~\ref{lemma:subsystem-time}. In line 9, we search for a set of constant size within a set of constant size that fulfills a certain condition which can be checked in constant time. In line 11, we use Lemma~\ref{lemma:reordering} to find a permutation of a list of size $\mathcal{O}(N)$. This step maps a commutator of commutators to a nested commutator. As can be seen from the statement of the lemma, each iteration of the lemma increases the nesting by 1. Since the sequence has linear length, Lemma~\ref{lemma:reordering} needs to be applied at most $\mathcal{O}(N)$ times.
    \item \textbf{Lines 14-16:} Line 14 has quadratic time complexity due to Lemma~\ref{lemma:subsystem-time}. In line 15, we search for a constant-size subset of a constant-size set that fulfills a specified condition, which can be checked in constant-time. 
\end{itemize}
\end{proof}

It just remains to show that the time complexity of the \textsc{SubsystemCompiler} algorithm in Fig.~\ref{fig:app-alg} is quadratic:
\begin{Lemma}\label{lemma:subsystem-time}
    \textsc{SubsystemCompiler} (Algorithm~\ref{fig:app-alg}) has time complexity $\mathcal{O}(N^2)$.
\end{Lemma}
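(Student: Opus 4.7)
The plan is to establish the quadratic time bound by working in the $\mathbb{F}_2^{2(N-k)}$ representation of Pauli strings from \Cref{app:symplectic_explanation}, in which the elementary operations that the algorithm performs---products, commutator checks via the symplectic form, and tests for whether a Pauli equals the identity---are all carried out in $O(N)$ time on length-$2(N-k)$ binary vectors.

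First I would handle line 1, the decomposition of $W$ as a product of elements of $\mathcal{B}$. Since $\mathcal{B}$ consists of $2(N-k)$ independent Paulis, the vectors $f(\mathcal{B})$ form a basis of $\mathbb{F}_2^{2(N-k)}$, and expressing $f(W)$ in this basis yields the required product of length $r\le 2(N-k)=O(N)$. Treating the change-of-basis matrix associated with the fixed set $\mathcal{B}$ as precomputed data, the decomposition step reduces to a matrix-vector multiplication over $\mathbb{F}_2$, which costs $O(N^2)$. Next I would bound the total number of iterations of the while loop. The key input is Statements 2.1--2.3 from the proof of Lemma~\ref{lemma:alg}: any execution of the branches at lines 7 or 11 forces the corresponding condition to evaluate as false on the next pass. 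Consequently, between two consecutive decrements of $i$ at line 16 at most a constant number of helper Paulis are inserted, and since $i$ ranges from $r-1=O(N)$ down to $1$ the while loop runs $O(N)$ iterations in total.

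Finally I would analyze the per-iteration cost. Each iteration requires: (i) maintaining the running product $\prod_{j>i} U_j \prod_{A\in\mathcal{H}} A$ on the first $k$ qubits, which has constant size and admits $O(1)$ updates; (ii) maintaining the running product $\prod_{j>i} B_j$ on the remaining $N-k$ qubits, a vector in $\mathbb{F}_2^{2(N-k)}$ updated in $O(N)$; (iii) the commutation check at line 10, a single symplectic inner product of cost $O(N)$; and (iv) the searches over $\mathcal{A}$ at lines 7 and 11, which range over a constant-size set of constant-size operators and are therefore $O(1)$ each. The dominant per-iteration cost is thus $O(N)$; multiplying by the $O(N)$ iteration count gives $O(N^2)$ for the while loop, and combining with the $O(N^2)$ cost of line 1 yields the claimed bound.

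The main obstacle is the accounting for line 1. Without precomputation, decoding an arbitrary vector $f(W)$ into a fixed basis $f(\mathcal{B})$ naively requires Gaussian elimination at cost $O(N^3)$, which would exceed the claimed bound. The argument therefore hinges on treating the basis-decoding matrix for the fixed generating set $\mathcal{B}$ as precomputed data rather than as part of the per-query cost; this is natural since $\mathcal{B}$ is part of the algorithm's fixed specification. A secondary subtlety is verifying that Statements 2.1--2.3 bound the number of consecutive helper insertions between decrements by an absolute constant uniformly in $i$---this follows immediately from those statements but deserves to be made explicit.
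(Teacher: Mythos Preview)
Your argument is correct and mirrors the paper's own proof: both work in the $\mathbb{F}_2$ representation, bound the while loop by $O(N)$ iterations via Statements~2.1--2.3, and charge $O(N)$ per iteration (dominated by the symplectic inner product at line~10). Your treatment of line~1 is actually more careful than the paper's---the paper asserts the basis decomposition costs ``$\mathcal{O}(N)$ time steps'' without addressing precomputation, whereas you explicitly identify the need for a precomputed change-of-basis matrix to avoid an $O(N^3)$ Gaussian elimination.
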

\begin{proof}
    As above, we make use of the binary vector representation of Pauli strings. To find the $r=\mathcal{O}(N)$ operators in line 1, we can make use of the fact that the $\mathcal{B}$ contains a basis of $\mathbb{F}_2^{2(N-k)}$ which we use to find the decomposition of $W$ in $\mathcal{O}(N)$ time steps. To calculate the product of Pauli strings, we have to perform addition modulo one in the symplectic picture which has time complexity $\mathcal{O}(1)$ if we store $\mathcal{O}(N)$ values in memory. The symplectic product $\boldsymbol{v}^T\Lambda\boldsymbol{w}$ between two binary vectors $\boldsymbol{v},\boldsymbol{w}\in \mathbb{F}_2^{2N}$ has, at worst, time complexity of the dot product between two vectors which is $\mathcal{O}(N)$.\\
    
    Next, let us consider the \textsc{while} loop. As we have already shown in the proof of Lemma~\ref{lemma:alg}, for each step $i\leq r=\mathcal{O}(N)$, the  \textsc{while} loop will at most execute 3 different subroutines (lines 6-9, lines 10-13, lines 15-16), each of which runs at worst in linear time as we will see below. This leads to the quadratic time complexity.
    \begin{itemize}
    \item \textbf{Lines 6-9:} To check the \textsc{if} statement in line 6 in the binary vector representation of Pauli strings, we check the symplectic product between two vectors of constant size, which has constant time complexity. The choice in line 7 corresponds to searching for two constant-size operator within a constant-size set (namely $\mathcal{A}\subset i\mathcal{P}_k$ for a constant $k$) and calculating their commutation relation with another operator of constant size. All other lines are of constant time.
    \item \textbf{Lines 10-13:} To check the  \textsc{if} statement in line 10, we have to check the symplectic product between two binary vectors of size $2N$, which has linear time complexity at worst. The choice in line 11 corresponds to searching for a constant-size operator within a constant-size set (namely $\mathcal{A}\subset i\mathcal{P}_k$ for a constant $k$) and calculating its commutation relation with two other operators of constant size. All other lines are of constant time.
    \item \textbf{Lines 15-16:} These lines append an item to a list and update the loop variable and are thus of constant time.
    \end{itemize}
\end{proof}

\section{Detailed construction of Example 3} \label{app:qca_ex}

Here we give the detailed construction of the universal set of Pauli strings described in Example 3.
Consider the following quantum circuit acting on $k$ qubits,
\begin{equation}
    T_k = \prod_{i=1}^k H_iS_i \prod_{i=1}^{k-1} CZ_{i,i+1}
\end{equation}
where $H=(X + Z)/{\sqrt{2}}$, $S=\sqrt{Z}$, and $CZ = (I\otimes I + I\otimes Z + Z\otimes I - Z\otimes Z)/2$ are the usual Hadarmard, phase, and controlled-$Z$ gates, respectively. 
Suppose we have a model of quantum computation that only allows the gates $T_k$ and $e^{i\theta Z_1}$, as in Ref.~\cite{Stephen2024}.
From these gates, we can construct the unitaries $e^{i\theta O_k(\ell)}$, where
\begin{equation}
    O_k(\ell) = T_k^{\ell\dagger} Z_1 T_k^{\ell}
\end{equation}
Since $T_k$ is Clifford, each $O_k(\ell)$ is a Pauli string and there is a minimal $p_k$ such that $T_k^{p_k}\propto I$. This puts us in the scenario described in the main text where gates are generated by Pauli strings from the set $\{ iO_k(\ell) :  l =0,\dots, p_k-1 \}$. For the following, we define the shorthand notation $O_k(-\ell) = O_k(p_k-\ell)$.

From this set, we can construct minimal generating sets of Pauli strings for all $N\in 3\mathbb{N}$. First, by writing $O_k(\ell) = \bigotimes_i O_k(\ell,i)$ where each $O_k(\ell,i)\in \{I,X,Y,Z\}$, we can define the modified operators $\tilde{O}_k(\ell)=\bigotimes_{i\notin 4\mathbb{N}} O_k(\ell,i)$. That is, $\tilde{O}_k(\ell)$ is obtained from $O_k(\ell)$ by simply removing every fourth Pauli from the string. Then, we define the following sets of Pauli strings for each $k\in \mathbb{N}$,
\begin{equation}
    \mathcal{A}_k = \{ \tilde{O}_{4k-1}(\ell): \ell \in \mathcal{I}_{4k-1}\} \,
\end{equation}
where,
\begin{equation}
    \mathcal{I}_{4k-1} = \{-1\}\cup \bigcup_{j=0}^{k-1} \{4j, 4j+1, 4j+2, -4j-2,-4j-3,-4j-4\}\ ,
\end{equation}
such that $|\mathcal{A}_k|=6k+1$. 
Observe that each Pauli string $\tilde{O}_{4k-1}(\ell)$ acts on $N\equiv 3k$ qubits. In order to define the set $\mathcal{A}_k$, it must be the case that $p_{4k-1} \geq 6k+1$ for all $k$, and this is indeed shown to be true in Ref.~\cite{Stephen2024}.

Ref.~\cite{Stephen2024} also showed that $\mathcal{A}_k$ is a universal set of Pauli strings on $3k$ qubits, and that it furthermore has the form described in Theorem \ref{thm:app:construction} where the universal set $\mathcal{A}$ described in the Theorem is defined on the first three qubits (rather than two qubits as in the examples of the main text). Finally, we have $|\mathcal{A}_k|=2N+1$, so these are universal generating sets of minimal size.

\section{Argument for optimal generation rate}

In the main text, we demonstrate that certain universal sets of Pauli strings generate $\mathfrak{su}(2^N)$ faster than others. Here, we argue that the generation rate can be related to the amount of anticommutation between elements of the generating set.
To see how this affects the rate of generation, suppose first that each generator only anticommutes with a small number of other generators. Then, the sets $\mathcal{A}_{ad^{(r)}}$ will be relatively small for small $r$ since most adjoint operations $ad_{P_i}P_j$ are equal to zero. Thus, the growth rate will be very slow to start. This can be observed in Fig. 1 of the main text for the standard generators, since in that case each generator anticommutes with only a constant number $\leq 3$ of other generators. 

On the other hand, suppose that each generator anticommutes with most of the other generators. Then, set $\mathcal{A}_{ad^{(r)}}$ for $r=1$ will be very large since most adjoint operators $ad_{P_i}P_j$ are non-zero. However, the set $\mathcal{A}_{ad^{(r)}}$ for $r=2$ will be very small since most operators $ad_{P_i}ad_{P_j}P_k$ are equal to zero. This even-odd effect will continue for larger values of $r$. Interestingly, for a set of generators that are mutually anticommuting, this same argument shows that $\mathcal{A}_{ad^{(r)}}=\emptyset$ for all $r>1$, meaning that such a set cannot generate all of $\mathfrak{su}(2^N)$.

The above reasoning suggests that an ideal generating set has each generator anticommuting with some intermediate fraction of the other generators. This is indeed the case for the generating set described in Example 3 and Appendix \ref{app:qca_ex} which has each generator anticommuting with approximately half of the others and is accordingly the fastest generating set that we found. Example 1 is an intermediate case where most of the generators anticommute with half of the others, while a few only anticommute with a constant number of others. This could explain why its growth rate is in between that of Example 3 and the standard gate set.

\begin{figure}
    \centering
    \includegraphics[width=0.8\linewidth]{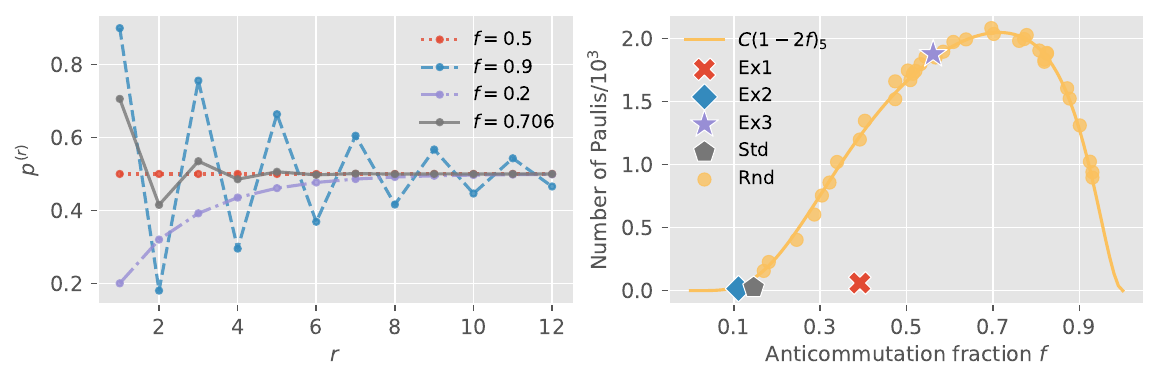}
    \caption{Left: The probability that a given generator anticommutes with an $r$-fold product of generators $p^{(r)}$ as a function of $r$ for different initial anticommutation fractions $f$. Right: Number of Pauli strings generated after five rounds of nested commutators on $N=9$ qubits versus the fraction of pairs of generators that anticommute. Each round point describes a random adjoint universal set of Pauli strings with minimal length $2N+1$. We also show the corresponding points for the various examples described in the main text. The solid line is the $q$-Pochhammer symbol multiplied by a constant $C=16600$ to fit the data by eye.}
    \label{fig:growth_rates}
\end{figure}



We can estimate the optimal amount of anticommutation by making the above argument more quantitative. Let us suppose that each generator in a set $\mathcal{A}$ anticommutes with a fraction $f$ of the other generators. In general, this fraction could depend strongly on the generator, but for simplicity we assume $f$ is the same for all generators. When computing $\mathcal{A}_{ad^{(r)}}$, we go through each generator and compute its adjoint with each element in $\mathcal{A}_{ad^{(r-1)}}$. Whenever this adjoint is non-zero, we get a potential new Pauli. The elements of $\mathcal{A}_{ad^{(r-1)}}$ are all products of $r$ generators, so the probability that a given generator anticommutes with an element of $\mathcal{A}_{ad^{(r-1)}}$ is,
\begin{equation}
    p^{(r)} = \sum_{k \text{ odd, } k\leq r} \binom{r}{k} f^k(1-f)^{r-k} = \frac{1-(1-2f)^r}{2}.
\end{equation}
Assuming the elements of $\mathcal{A}_{ad^{(r-1)}}$ are randomly sampled from all $r$-fold products of generators, the expected size of $\mathcal{A}_{ad^{(r)}}$ is,
\begin{equation}
    |\mathcal{A}_{ad^{(r)}}| =C^{(r)}\cdot \prod_{k=1}^r p^{(k)}
\end{equation}
where the proportionality constant $C^{(r)}$ does not depend on $f$. This constant encodes the fact that not every non-zero commutator will lead to a new Pauli, since it may lead to a Pauli that was already contained in the algebra. This effect, however, is minimal for small $r$, and moreover does not depend on $f$.

To maximize the growth rate for small $r$, we therefore want to maximize the products $\prod_{k=1}^r p^{(r)}$ at each step. The probability $p^{(r)}$ as a function of $r$ has three different regimes of behavior, as shown in Fig.~\ref{fig:growth_rates}. First, when $f=1/2$, we have $p^{(r)}=1/2$ for all $r$. When $f<1/2$, $p^{(r)}$ smoothly increases from $f$ to $1/2$. When $f>1/2$, $p^{(r)}$ oscillates around $1/2$ with decaying amplitude. In all cases, $p^{(r)}$ approaches $1/2$ for large $r$. To determine the optimal value, we notice that the product $\prod_{k=1}^r p^{(r)}$ can be expressed in terms of the so-called $q$-Pochhammer symbol,
\begin{equation} \label{eq:qp}
    (q)_r = \prod_{k=1}^{r} (1-q^k)
\end{equation}
such that,
\begin{equation} \label{eq:qpf}
    \prod_{k=1}^r p^{(r)} = \frac{1}{2^r} (1-2f)_r.
\end{equation}
In the range $(0,1)$, the $q$-Pochhammer symbols $(1-2f)_r$ for increasing values of $r$ quickly converge to their $r\rightarrow\infty$ limit which is a function $\phi(x)=\lim_{r\rightarrow\infty}(x)_r$ called the Euler function. Since the sizes $|\mathcal{A}_{ad^{(r)}}|$ grow exponentially with $r$ (during the early stages of growth), we expect that the dependence of growth rate on $f$ will be governed by the Euler function. 
This function is maximized by the value $x\approx -0.411$ \cite{oeis}, corresponding to $f=f^*\approx 0.706$ in Eq.~\ref{eq:qpf}. This value of $f$ takes advantage of the enlarged value of $p^{(r)}$ for odd $r$ (particularly $r=1$) that occurs when $f>1/2$ without suffering too much from its decreased value for even $r$, as seen in Fig.~\ref{fig:growth_rates}. 

Thus, we conjecture that a generating set of Pauli strings has each generator anticommuting with approximately a fraction $f^*$ of the other generators will generate $\mathfrak{su}(2^N)$ optimally fast. To test this conjecture, we numerically generated random adjoint universal sets of $2N+1$ Pauli operators on $N=9$ qubits and calculated how quickly they generate $\mathfrak{su}(2^N)$ as a function of $f$. Since a randomly chosen set of Pauli strings will typically have $f\approx 0.5$, we modified $f$ by randomly substituting elements of the generating set with new random Pauli strings to drive $f$ towards a target value. In this way, the value of $f$ will be roughly uniform for each Pauli. To determine how quickly each set generates $\mathfrak{su}(2^N)$, we looked at how many Pauli strings had been generated after the fifth round of taking commutators $r=5$ (resulting in Pauli strings that are a product of up to six generators). This number was chosen to be large enough to properly estimate the growth rate, but small enough to stay within the early-time regime to which the above calculation applies (before a large fraction of commutators lead to Pauli strings that are already contained in the algebra). As shown in Fig.~\ref{fig:growth_rates}, the growth rate is in strong agreement with the behavior of the $q$-Pochhammer symbol $(1-2f)_5$.



\end{document}